\documentclass[12pt, reqno]{amsart}
\pdfoutput=1

%%%%%%%%%%%%%%%%%%%%%  MY STUFF %%%%%%%%%%%%%%%%%%%%%%%%%%%%%%%%%%

\makeatletter
\g@addto@macro{\endabstract}{\@setabstract}
\makeatother

\makeatletter
\def\@seccntformat#1{%
	\ifstrequal{#1}{subsection}% determine if it is subsection
	{\bfseries\csname the#1\endcsname.}% subsection → boldface
	{\csname the#1\endcsname.} % else → preserve the default format
}
\makeatother

\usepackage{manyfoot}
\DeclareNewFootnote{default}  
\DeclareNewFootnote{symbol}[fnsymbol]

\usepackage{graphics, stackrel}
\usepackage{amsmath, amssymb, amsthm}
\usepackage{graphicx}
\usepackage{verbatim}
\usepackage{amsfonts}
\usepackage[round,authoryear]{natbib}
%font
%\usepackage{mathpazo}
\usepackage{fancyvrb}
\usepackage{color}

\usepackage[citecolor=blue, colorlinks=true, linkcolor=blue]{hyperref}

% caligraphic
\usepackage{mathrsfs}
\usepackage{bbm}

%% page layout
\usepackage[left=1.25in, right=1.25in, top=1.25in, bottom=1.25in, includehead, includefoot]{geometry}

%\usepackage[linesnumbered, lined]{algorithm2e}
%\usepackage[lined]{algorithm2e}

%put notes at the end
%\usepackage{endnotes}
%\let\footnote=\endnote

\usepackage{multirow}
\usepackage{algorithm}
\usepackage{enumitem}
%\setlist[enumerate]{itemsep=2pt,topsep=3pt}
%\setlist[itemize]{itemsep=2pt,topsep=3pt}
%\setlist[enumerate,1]{label={\upshape (\roman*)}}

%% Matrix magic
%\usepackage{kbordermatrix}

% Tables
%\usepackage{slashbox}
\usepackage{booktabs}
\usepackage{diagbox}
\usepackage{makecell}
\usepackage{rotating}
\usepackage{adjustbox}
\usepackage[graphicx]{realboxes}
\usepackage{lscape}
\usepackage{hhline}
\newcolumntype{C}[1]{>{\centering\arraybackslash}p{#1}}
\usepackage{float}

%extra spacing

%horizonal line

% skip a line between paragraphs, no indentation
\setlength{\parskip}{1.5ex plus0.5ex minus0.5ex}
\setlength{\parindent}{0pt}

\DeclareMathOperator{\Var}{Var}

% mics short cuts and symbols

\newcommand{\iidsim}{\stackrel {\textrm{ {\sc iid }}} {\sim} }

% d for integrals
\newcommand*\diff{\mathop{}\!\mathrm{d}}
\renewcommand{\epsilon}{\varepsilon}
\renewcommand{\phi}{\varphi}

\newcommand{\cC}{\mathscr C}

\newcommand{\hH}{\mathscr H}

\newcommand{\fF}{\mathscr F}

\newcommand{\ZZ}{\mathsf Z}
\renewcommand{\SS}{\mathsf S}

\newcommand{\RR}{\mathbbm R}

\newcommand{\NN}{\mathbbm N}
\newcommand{\PP}{\mathbbm P}

\newcommand{\EE}{\mathbbm E}

\theoremstyle{plain}

\newtheorem{theorem}{Theorem}[section]

\newtheorem{lemma}{Lemma}[section]
\newtheorem{proposition}{Proposition}[section]

\theoremstyle{definition}

\newtheorem{example}{Example}[section]
\newtheorem{remark}{Remark}[section]

\newtheorem{assumption}{Assumption}[section]

%\DeclareTextFontCommand{\emph}{\bfseries}

%%%%%%%%%%%%%%%%%% end my preamble %%%%%%%%%%%%%%%%%%%%%%%%%%%%%%%%%%%

%%%%%%%%%%%%%%%%%%%%%%%%%%%%%%%%%%%%%%%%%%%%%%%
%%%%% Supplementary preamble %%%%%%
%%%%%%%%%%%%%%%%%%%%%%%%%%%%%%%%%%%%%%%%%%%%%%%

\renewcommand{\underline}[1]{\text{\b{$#1$}}} % underbar similar to \bar{}
  
\usepackage{caption}

% table note package
%\usepackage{booktabs,caption,fixltx2e}
\usepackage[flushleft]{threeparttable}

% e for exponential

% create \mathds{Z}_+ later on
\usepackage{ dsfont }

% type special characters
%\usepackage{fontspec}

% create ||| \cdot |||
\newcommand{\vertiii}[1]{{\left\vert\kern-0.25ex\left\vert\kern-0.25ex\left\vert #1 
    \right\vert\kern-0.25ex\right\vert\kern-0.25ex\right\vert}}
    
% stack figures vertically
\usepackage{subcaption}
\usepackage{graphicx}  

% Roman numbers
\makeatletter
\newcommand*{\rom}[1]{\expandafter\@slowromancap\romannumeral #1@}
\makeatother

% bar below
\usepackage{stackengine}

%%%%%%%%%%%%%%%%%%%%%%%%%%%%%%%%%%%%%%%%%%%%%%%

\begin{document}

%    \title{A Theory of Saving under Risk Preference Dynamics\thanks{We thank \'Emilien Gouin-Bonenfant, Nisha Peng, John Stachurski, Shenghao Zhu, as well as seminar audiences at CUEB and CUFE for helpful comments and suggestions. Qingyin Ma gratefully acknowledges financial support from the National Natural Science Foundation of China (Grant No. 72573111).} }
%   \author{Qingyin Ma\thanks{International School of Economics and Management, Capital
%University of Economics and Business
%Email: {\textcolor{blue}qingyin.ma@cueb.edu.cn}}}
%\and Xinxi Song\thanks{International School of Economics and Management, Capital
%University of Economics and Business  Email:
%{\textcolor{blue}{songxinxi@cueb.edu.cn}}}
%\and Alexis Akira Toda\thanks{Department of Economics, Emory University 
%Email: {\textcolor{blue}{alexis.akira.toda@emory.edu}}}}
%    \date{\today}
%    \maketitle

\title{}

\date{\today}

\begin{center}
  \LARGE 
    A Theory of Saving under Risk Preference Dynamics
%    \footnotesymbol{
%    	%We thank  \'Emilien Gouin-Bonenfant, Mi Luo, Nisha Peng, John Stachurski, Shenghao Zhu, as well as seminar audiences at CUEB and CUFE for helpful comments and suggestions. 
%    	Qingyin Ma gratefully acknowledges the financial support from NSFC (No.72573111).
%    } 
 
  \vspace{.6em}
  \normalsize
  Qingyin Ma\footnotesymbol{ISEM, Capital University of Economics and Business. \emph{Email:} \texttt{qingyin.ma@cueb.edu.cn}.}, \quad Xinxi Song\footnotesymbol{ISEM, Capital University of Economics and Business. \emph{Email}: \texttt{songxinxi@cueb.edu.cn}.},
  \quad Alexis Akira Toda\footnotesymbol{Department of Economics, Emory University. \emph{Email}: \texttt{alexis.akira.toda@emory.edu}.}
  \par 
  \vspace{.6em}
  \normalsize{\today}
\end{center}

\begin{abstract} 
	Empirical evidence shows that wealthy households have substantially higher saving rates and markedly lower marginal propensity to consume (MPC) than other groups. Existing theory cannot account for this pattern without jointly imposing restrictive assumptions on returns, discounting, and preferences. In this paper, we develop a general theory of optimal savings with preference shocks and identify a novel mechanism through which stochastic risk preferences reshape the asymptotic consumption and saving behavior. Specifically, the mere possibility of becoming less risk averse next period raises the value of carrying wealth forward, since future selves may be more willing to convert wealth into consumption. Unlike the classical precautionary saving motive, which typically arises from resource risks and weakens as wealth increases, this force remains operative even at arbitrarily high wealth levels, generating a persistent incentive to defer consumption and driving the asymptotic MPC to zero (i.e., a 100\% asymptotic saving rate). As a result, vanishing MPCs emerge as a generic implication of risk preference dynamics, rather than an artifact of restrictive assumptions, offering a theoretically robust and empirically consistent account of the persistently high saving rates and low MPCs observed among wealthy households.
%	Empirical evidence shows that wealthy households have substantially higher saving rates and markedly lower marginal propensity to consume (MPC) than other groups. Existing theory cannot account for this pattern without jointly imposing restrictive assumptions on returns, discounting, and preferences. In this paper, we develop a general theory of optimal savings with preference shocks and identify a novel mechanism through which stochastic risk preferences reshape the asymptotic consumption and saving behavior. Specifically, the mere possibility of becoming less risk averse next period raises the value of carrying wealth forward. Because this effect is driven by the evolution of risk preferences rather than future resource risks, it remains relevant even at arbitrarily high wealth levels, generating a persistent incentive to defer consumption and driving the asymptotic MPC to zero (i.e., a 100\% asymptotic saving rate). As a result, vanishing MPCs emerge as a generic implication of optimal savings behavior under risk preference dynamics rather than an artifact of restrictive assumptions, offering a theoretically robust and empirically consistent account of the persistently high saving rates and low MPCs observed among wealthy households.

    \vspace{.6em}
    
    \noindent
    %\textit{JEL Classifications:} C61; D15; D52; E21; G51 \\
    % C61: Optimization Techniques; Programming Models; Dynamic Analysis
    % D14: Household Saving; Personal Finance
    % D15: Intertemporal Household Choice; Life Cycle Models and Saving
    % D52: Incomplete Markets
    % E21: Consumption; Saving; Wealth
    % G51: Household Saving, Borrowing, Debt, and Wealth
    \textit{Keywords:} MPC, saving rate, time-varying risk preference, wealth inequality.
\end{abstract}

%\maketitle

\section{Introduction}

In recent decades, a large body of empirical evidence has shown that wealthy 
households save a substantially larger fraction of their income and exhibit 
markedly lower marginal propensity to consume (MPC) than other groups. For 
instance, \cite{dynan2004rich} document a strong positive relationship between 
lifetime income and saving rates. Using Norwegian administrative data, 
\cite{fagereng2025saving} find that households in the top 1\% of the wealth 
distribution report saving rates far exceeding those of the median. Moreover, \citet{mian2025saving} document the emergence of a \emph{saving glut of the rich}, finding that since the 1980s, the top 1\% of wealthy households in US has sharply increased its saving rate from about 43\% to over 54\% of disposable income, while saving rates for the rest of the population have declined.

Because household saving behavior directly drives wealth accumulation, and 
wealthy households hold a disproportionate share of total 
assets,\footnote{\citet{saez2016wealth} 
	document that the top 0.1\% wealth share in the United States increased from 7\% in 1979 to 22\% in 2012. Moreover, \citet{piketty2019capital} estimate that the top 10\% wealth share in China rose from 40\% in 1995 to 67\% in 2015.} 
their saving patterns have first-order implications for 
the long-run distribution of wealth and the mechanisms that sustain inequality 
\citep{benhabib2018skewed}. From a macroeconomic perspective, even small 
variations in the MPC or risk-taking behavior of the wealthy can have outsized 
effects on aggregate saving rates and capital formation 
\citep{carroll2017distribution, lee2023wealth}, the equilibrium interest rates 
and asset prices \citep{mian2013household, mian2021indebted, 
patterson2023matching}, the transmission of fiscal and financial shocks 
\citep{cho2023umemployment, ampudia2024mpc, auclert2025fiscal}, 
and the design of macroprudential policy \citep{jappelli2014fiscal, 
crawley2023consumption}.

Despite these empirical regularities, existing studies remain limited 
in explaining the persistently high saving rates and low MPCs of wealthy 
households. The canonical incomplete market model predicts that, in a stationary equilibrium with constant risk aversion, discount factor, and return on wealth, the asymptotic saving rate is negative, contradicting the empirical evidence that wealthy households continue to accumulate assets indefinitely.\footnote{\label{fn:saving_rate}See, for example, 
	\cite{ma2021theory} and \cite{achdou2022income}. As is standard in the literature, we define the
	\textit{saving rate} as the change in net worth relative to total income, 
	excluding capital losses:
	\begin{equation*}
		s_{t+1} 
		= \frac{w_{t+1} - w_t}{\max \{(R_{t+1}-1)(w_t-c_t), 0\} + Y_{t+1}},
	\end{equation*}
	where $s_t$, $w_t$, $c_t$, $Y_t$, and $R_t$ denote the saving rate, wealth, 
	consumption, nonfinancial income, and the gross rate of return on wealth, 
	respectively. The \textit{asymptotic saving rate} refers to the  
	saving rate in the limit as $w_t \to \infty$.} 
Subsequent studies have introduced return risk and stochastic discounting to better reconcile theory with data \citep{benhabib2015wealth, ma2020income}, and recent work has further characterized the asymptotic saving behavior in such stochastic frameworks \citep{ma2021theory}. Yet, even in these richer dynamic settings, reproducing the observed low MPCs and persistently high saving rates requires restrictive assumptions that are jointly imposed on returns, discounting, and preferences. 
Consequently, the resulting explanation remains fragile and highly sensitive to the underlying parameter configuration, limiting its robustness.

A key limitation of the existing literature is that it abstracts from an important source of heterogeneity and uncertainty: the evolution of risk preferences themselves. Recent research highlights that households differ not only in income, wealth, and investment opportunities, but also in their \textit{attitudes toward risk}, and that these attitudes evolve over time. While the canonical savings model assumes a fixed risk aversion coefficient, a growing body of empirical and experimental evidence shows that risk attitude is heterogeneous across individuals and varies systematically over the life cycle and across business cycles \citep{anderson2008lost, cohen2007estimating, falk2018global, finkelstein2013what}. Moreover, risk tolerance comoves with individual financial circumstances, macroeconomic conditions, and asset market volatility \citep{guiso2018time, malmendier2011babies, sahm2012risk, schildberg2018risk}. Together, these findings suggest that agents continuously reassess their risk attitudes in response to both idiosyncratic and aggregate shocks.  

This evidence calls for a theory of intertemporal saving under dynamically evolving risk preferences. By abstracting from fluctuations in risk attitudes, existing theories overlook a key feedback between risk preferences and wealth accumulation, potentially distorting predictions of the consumption and saving behavior. Under constant risk preference, households save to insure against future resource scarcity, such as low income or poor investment returns. As wealth grows, however, this precautionary saving motive weakens, so  consumption can eventually exceed income, potentially driving saving rates negative at high wealth levels.

In this paper, we develop a general theory of optimal savings with preference shocks and identify a novel mechanism that reshapes asymptotic consumption and saving behavior. Whenever there is a positive probability that agents become less risk averse next period, retaining wealth today preserves resources for future states where wealth carries greater consumption value. Unlike the classical precautionary saving motive, which arises from future resource risks, this effect is driven by the evolution of risk preferences and remains operative even at arbitrarily high wealth levels. As a result, households have a persistent incentive to defer consumption, driving the asymptotic MPC to zero (i.e., a 100\% asymptotic saving rate). These dynamics stand in sharp contrast to existing theories, offering a parsimonious and robust explanation for the persistently high saving rates and low MPCs observed among wealthy households.

We begin by studying a generic optimal savings problem in which agents face 
preference shocks, and the rate of return, discount factor, and nonfinancial 
income are all allowed to be state-dependent, time-varying, and mutually 
dependent. We establish the existence and uniqueness of the optimal policy in a
suitable candidate space, under assumptions that have a transparent interpretation in terms of discounted average gross payoffs on assets and are straightforward to verify in practice.

We then characterize the properties of the optimal consumption policy. Under general specifications, we establish continuity and monotonicity, and analytically identify the wealth threshold above which saving occurs. To systematically study the behavior of wealthy households, we adopt a generalized CRRA framework, in which risk aversion is allowed to vary with the state. This specification aligns with the empirical evidence that individuals adjust their risk tolerance in response to changes in preferences, financial circumstances, and prevailing economic conditions, providing a more realistic foundation for studying intertemporal consumption and saving behavior. 

Incorporating such preference heterogeneity fundamentally changes the asymptotic dynamics of saving. In particular, we provide an analytical characterization of the asymptotic MPC and show that vanishing asymptotic MPCs arise under markedly weaker conditions than in the benchmark model without preference shocks. Whenever there is a positive probability of transitioning to a less risk averse state next period, agents retain a persistent incentive to defer consumption, transferring wealth toward these future states even at arbitrarily high levels of wealth. This effect drives the asymptotic MPC to zero and the asymptotic saving rate to 100\%. Importantly, this outcome is \emph{entirely independent of stochastic discounting, return risk, or income risk}, and thus represents a robust and general property of optimal savings behavior.

The asymptotic MPC is strictly positive at a given state \emph{only if} the probability of transitioning to a lower risk aversion state next period is zero. In such cases, risk attitudes evolve one-sidedly---either remaining constant or increasing---and positive asymptotic MPCs may arise, depending on the interaction between the stochastic discount factor and the return on wealth. If agents always transition to a higher risk aversion state from some state, then the asymptotic MPC at that state converges to one, and the optimal consumption function becomes \emph{nonconcave} in wealth, reflecting a strong incentive for consumption today in which the marginal value of transferring wealth to future selves vanishes.

We then complement our theoretical analysis with a quantitative investigation. First, we provide numerical illustrations that render the asymptotic results transparent and economically intuitive. Second, following \citet{koop1996impulse}, we compute state-dependent impulse response functions to uncover the key mechanisms through which preference shocks reshape consumption and saving dynamics.

Our quantitative analysis confirms that shocks to risk aversion generate economically significant and persistent responses in consumption and saving. Importantly, both the magnitude and persistence of these responses depend on households' current state and the persistence of preference shocks. In particular, the impacts vary systematically with wealth and prevailing risk attitudes, highlighting pronounced nonlinearities in the transmission of preference shocks across the wealth distribution.

Taken together, our framework provides a theoretically robust and empirically consistent account of the saving behavior of wealthy households, demonstrating how risk preference dynamics offer a natural mechanism for persistent wealth accumulation. While we do not explicitly model wealth inequality, our results point to a novel channel relevant for understanding wealth distribution in richer environments.

\subsection*{Related Literature}

The study of asymptotic saving behavior has its origins in the literature on the concavity of consumption functions. The classical work of \citet{carroll1996concavity} establishes that, under hyperbolic absolute risk aversion (HARA) preferences, the consumption function is concave in wealth, implying asymptotic linearity of consumption with respect to wealth. \citet{carroll2004theoretical} further emphasizes this asymptotic linearity and notes that zero asymptotic MPCs can arise in a standard optimal savings model when interest rates are negative. In a general incomplete market model with aggregate uncertainty, \citet{cao2020recursive} establishes concavity of the consumption function and shows that the MPC out of assets converges to a constant. In continuous time settings, \citet{achdou2022income} develop a comprehensive analysis of the Aiyagari-Bewley-Huggett model, establishing asymptotic linearity of consumption and showing that the saving rate of the wealthy is negative, while \citet{Roulleau-Pasdeloup2026} derives closed-form solutions for the consumption function in a classical savings problem, establishing concavity with respect to initial assets and permanent income and offering a sharp characterization of optimal consumption.

Recent studies have focused directly on the characterization of asymptotic MPCs. \citet{benhabib2015wealth} characterize the asymptotic MPCs analytically in an optimal savings model with capital income risk, where the return to wealth follows an {\sc iid} process. \citet{ma2021theory} build on and extend this framework by allowing returns and discount factors to be state-dependent, and jointly driven by a generic Markov process. Their contribution lies in delivering a systematic analysis of the asymptotic MPCs in this setting and identifying conditions under which optimal consumption is \textit{asymptotically linear} in wealth. In their framework, vanishing MPCs emerge only under a restrictive combination of conditions on returns, discounting, and preferences, limiting the empirical plausibility and robustness of the resulting explanation. 

Our paper contributes to this literature by developing a general theory of consumption and saving under stochastic risk preferences. As noted earlier, this induces vanishing asymptotic MPCs under conditions that are independent of stochastic discounting, return and income risks, and substantially weaker than those in the existing literature. Beyond these results, our framework also generates novel asymptotic outcomes when downward transitions in risk aversion are absent. In particular, the asymptotic MPC may converge to one and the consumption function may become nonconcave in wealth, departing sharply from the classical concavity-based intuition of \citet{carroll1996concavity}, \cite{benhabib2015wealth}, \cite{cao2020recursive}, \citet{ma2020income}, \cite{ma2021theory}, and \cite{Roulleau-Pasdeloup2026}.

Our paper is also related to the theoretical literature on optimal savings problems, a cornerstone of modern macroeconomics \citep{chamberlain2000optimal, kuhn2013recursive, li2014solving, benhabib2015wealth, cao2020recursive, ma2020income, Roulleau-Pasdeloup2026}. This literature has primarily focused on the existence, uniqueness, and computation of optimal policies under increasingly general environments. We contribute to this line of research by establishing a general optimal savings framework with preference shocks and deriving new results on the asymptotic behavior of consumption and saving. The asymptotic analysis developed here has no direct counterpart in the existing literature.

The rest of the paper is organized as follows. Section~\ref{s:or} formulates a 
generic optimal savings problem with preference shocks, and establishes 
existence, uniqueness, and key properties of the optimal policy. 
Section~\ref{s:asym} systematically studies its asymptotic properties and 
discusses the resulting implications for saving and wealth accumulation. Section~\ref{s:quant} presents a quantitative analysis that provides further intuition for our theory. Proofs are relegated to the Appendix.

\section{Optimality Results}
\label{s:or}

In this section, we establish fundamental optimality properties of a general  savings problem with preference shocks, encompassing a broad class of preference dynamics, and provide a rigorous foundation for the core asymptotic analysis of consumption and saving behavior in subsequent sections.

\subsection{Problem Statement}
\label{ss:ifp_problem}

Agents choose a consumption-wealth path $\{(c_t, w_t)\}_{t\geq 0}$ to solve the following optimal savings problem:
\begin{align}
	& \text{maximize}  
	&& \EE_0 \left\{ 
		\sum_{t = 0}^\infty 
		\left(\prod_{i=0}^t \beta_i \right) 
		u(c_t, Z_t)
	\right\}    \label{eq:value} \\
	&\text{subject to}   
	&& w_{t+1} = R_{t+1} (w_t - c_t) + Y_{t+1} 
	\quad \text{and} \quad 0 \leq c_t \leq w_t. \label{eq:trans_at}
\end{align}
Here, the initial condition $(w_0,Z_0) = (w,z)$ is given, $u$ denotes the 
utility function, $\{Z_t\}_{t \geq 0}$ is a time-homogeneous Markov 
chain taking values in a finite set $\ZZ$, $\{\beta_t\}_{t \geq 0}$ is the 
discount factor process with $\beta_0=1$, $\{R_t\}_{t \geq 1}$ is the gross 
rate of return on wealth, and $\{Y_t \}_{t \geq 1}$ is non-financial income. 
These stochastic processes satisfy
\begin{equation}
	\label{eq:law_exog}
	\beta_t = \beta \left(Z_{t-1}, Z_t, \epsilon_t \right), 
	\quad R_{t} = R \left(Z_{t-1}, Z_{t}, \epsilon_t \right)
	\quad \text{and} \quad
	Y_{t} = Y \left(Z_{t-1}, Z_{t}, \epsilon_t \right),
\end{equation}
where $\beta$, $R$, and $Y$ are nonnegative measurable functions. The 
innovations $\{\epsilon_t\}_{t\geq1}$ are independent and identically 
distributed. To properly capture important features of the real economy, we 
allow the supports of $\{Z_t\}$ and $\{\epsilon_t\}$ to be vector-valued, and 
we allow the support of $\{\epsilon_t\}$ to be continuous. 

In this setting, $\{Z_t\}$ captures both \emph{preference shocks} and \emph{broader idiosyncratic and aggregator factors} that jointly shape agents’ intertemporal tradeoffs and consumption and saving behavior. To simplify notation, we denote the partial derivative of utility with respect to consumption by
\begin{equation*}
	u'(c,z) := \partial u(c,z) \big/ \partial c.
	%\frac{\partial u(c,z)}{\partial c}.
\end{equation*}
In addition, we denote $\hat X$ as the next period value of a random variable $X$, and 
\begin{equation}
	\label{eq:nota1} 
	\EE_{w, z} 
	:= \EE \left[ \,\cdot \, \big| \, (w_0, Z_0) = (w, z) \right]
	\quad \text{and} \quad
	\EE_z 
	:= \EE \left[ \, \cdot \, \big| \, Z_0 = z \right].
\end{equation}

\subsection{Optimality: Definitions and Fundamental Properties}
\label{ss:opt}

For optimality, we assume $w_0>0$ and set the wealth space to 
$(0,\infty)$.\footnote{Assumptions~\ref{a:utility} and 
	\ref{a:spec_and_finite_exp}~\eqref{ac:fin_exp} introduced below
	imply $\PP_z \{ Y_t >0 \}=1$ for all $z\in \ZZ$ and $t \geq 1$. 
	Hence, $\PP_z \{ w_t > 0 \} = 1$ for all $z\in \ZZ$ and $t \geq 1$. 
	Therefore, excluding zero from the asset space has no effect on optimality.}
The state space for $\{(w_t, Z_t) \}_{t \geq 0}$ is then $\SS:= (0, \infty) 
\times \ZZ$. A \emph{feasible policy} is a Borel measurable function $c \colon 
\SS \to \RR$ with $0 < c(w,z) \leq w$ for all $(w,z) \in \SS$. A 
feasible policy $c$ and initial condition $(w,z) \in \SS$ generate a wealth
path $\{ w_t\}_{t \geq 0}$ via \eqref{eq:trans_at} when $c_t = c (w_t, Z_t)$ 
and $(w_0, Z_0) = (w,z)$. To proceed, we impose the following condition on the 
utility function.

\begin{assumption}
	\label{a:utility}
	For all $z \in \ZZ$, $u(\cdot,z)$ is twice differentiable on $(0,\infty)$ 
	with $u'(\cdot,z)>0$, $u''(\cdot,z)<0$, and $u'(c,z) \to \infty$ as $c\to 0$. 
	%and $u'(c,z)<1$ as $c\to \infty$.
\end{assumption}

Since we did not assume that the utility function is bounded or the discount 
factor $\beta_t$ is less than one, the lifetime utility \eqref{eq:value} may 
not be well defined in $\RR$. To overcome this issue, we define optimality by the overtaking criterion of \cite{brock1970axiomatic}. Given a feasible policy $c$, we consider the expected sum of utilities over a finite horizon up to time $H$, denoted by
\begin{equation*}
	V_{c,H} (w,z) := \EE_{w,z}\sum_{t=0}^{H} 
	\left(\prod_{i=0}^t \beta_{i} \right)
	u \left(c(w_{t},Z_{t}), Z_t \right).
\end{equation*}
For two feasible policies $c_1,c_2$, we say that $c_1$ \textit{overtakes} $c_2$ if 
\begin{equation*}
	\limsup_{H\to\infty} \left[
	    V_{c_2,H} (w,z) - V_{c_1,H} (w,z)
	\right] \leq 0
\end{equation*}
for all $(w,z)\in \SS$. We say that a feasible policy $c^*$ is 
\textit{optimal} if it overtakes any other feasible policy $c$. A feasible 
policy $c$ is said to satisfy the \textit{first order optimality condition} if
\begin{equation*}
	u'\left(c(w,z), z \right) 
	\geq \EE_z \hat \beta \hat R u'\big(c(\hat w,\hat Z), \hat Z\big) 
\end{equation*}
for all $(w,z) \in \SS$, and equality holds when $c(w,z) < w$. Here
\begin{equation*}
	\hat w = \hat R(w-c(w,z)) + \hat Y.
\end{equation*}
Noting that $u'(\cdot, z)$ is decreasing, the first order condition can be compactly stated as
\begin{equation}\label{eq:foc}
	u'\left(c(w,z), z\right) = \max \left\{
	    \EE_z \hat \beta \hat R u'\big(c(\hat w,\hat Z), \hat Z\big), 
	    u'(w, z)
	\right\}
\end{equation}
for all $(w,z) \in \SS$. A feasible policy $c$ is said to satisfy the 
\textit{transversality condition} if, for all $(w,z) \in \SS$, 
\begin{equation}\label{eq:tvc}
	\lim_{t \to \infty} \EE_{w,z} 
	    \left(\prod_{i=0}^t \beta_{i}\right)
	u'\left(c(w_t, Z_t), Z_t\right) w_t=0.
\end{equation}
The following result demonstrates that the first order and transiversality 
conditions are sufficient for optimality. The proof is similar to 
Proposition~15.2 and Theorem~15.3 of \cite{toda2025essential} and thus omitted.

\begin{proposition}[Sufficiency of first order and transversality conditions]
	\label{pr:suff_euler_tvc}
	If Assumption~\ref{a:utility} holds, then every feasible policy satisfying 
	the first order and transversality conditions is an optimal policy.
\end{proposition}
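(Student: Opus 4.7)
The plan is to adapt the standard sufficiency-by-duality argument: use concavity of $u$ to dominate $V_{c^*,T}-V_{c,T}$ by a term involving marginal utilities, apply the Euler equation \eqref{eq:foc} in its Kuhn--Tucker form to collapse the intertemporal sum, and invoke \eqref{eq:tvc} to drive the residual boundary term to zero. Concretely, fix any feasible policy $c$ and let $c^*$ be the feasible policy satisfying \eqref{eq:foc}--\eqref{eq:tvc}. From the common initial condition $(w,z)$, the two policies generate wealth paths $\{w^*_t\}$, $\{w_t\}$ and savings $a^*_t := w^*_t - c^*_t$, $a_t := w_t - c_t$. Setting $\lambda_t := \big(\prod_{i=0}^t \beta_i\big)\, u'(c^*_t, Z_t)$, strict concavity of $u(\cdot,z)$ (Assumption~\ref{a:utility}) gives
\begin{equation*}
    V_{c^*,T}(w,z) - V_{c,T}(w,z) \;\geq\; \EE_{w,z} \sum_{t=0}^{T} \lambda_t (c^*_t - c_t).
\end{equation*}

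The next step is a telescoping rewrite. Using \eqref{eq:trans_at}, one has $c^*_t - c_t = (w^*_t - w_t) - (a^*_t - a_t)$ and $w^*_{t+1}-w_{t+1} = R_{t+1}(a^*_t - a_t)$, so $a^*_t - a_t$ is measurable with respect to the information available at date $t$. Conditioning on that information and using $w^*_0 = w_0$, I obtain
\begin{equation*}
	\EE_{w,z} \sum_{t=0}^{T} \lambda_t (c^*_t - c_t) \;=\; \sum_{t=1}^{T} \EE_{w,z}\bigl[\EE_{t-1}[\lambda_t R_t]\,(a^*_{t-1}-a_{t-1})\bigr] \;-\; \sum_{t=0}^{T} \EE_{w,z}\bigl[\lambda_t (a^*_t - a_t)\bigr],
\end{equation*}
where $\EE_{t-1}[\,\cdot\,]$ denotes conditional expectation given information at date $t-1$.

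The Euler equation \eqref{eq:foc} reads, after multiplying by $\prod_{i=0}^{t-1}\beta_i$, $\lambda_{t-1} \geq \EE_{t-1}[\lambda_t R_t]$, with equality whenever $c^*_{t-1} < w^*_{t-1}$ (equivalently $a^*_{t-1} > 0$). When $a^*_{t-1} = 0$, the difference $a^*_{t-1}-a_{t-1} \leq 0$, so multiplying by the nonpositive quantity $\EE_{t-1}[\lambda_t R_t] - \lambda_{t-1}$ flips the sign; combined with the interior case (where that factor vanishes), this yields $\EE_{t-1}[\lambda_t R_t](a^*_{t-1} - a_{t-1}) \geq \lambda_{t-1}(a^*_{t-1} - a_{t-1})$ in every period. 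Substituting back and reindexing $s = t-1$ in the first sum, all interior terms cancel, leaving
\begin{equation*}
	\EE_{w,z} \sum_{t=0}^{T} \lambda_t (c^*_t - c_t) \;\geq\; -\EE_{w,z}\bigl[\lambda_T (a^*_T - a_T)\bigr] \;\geq\; -\EE_{w,z}\bigl[\lambda_T w^*_T\bigr],
\end{equation*}
using $a_T \geq 0$ and $a^*_T \leq w^*_T$. The transversality condition \eqref{eq:tvc} sends the last expression to zero, so $\liminf_{T\to\infty}[V_{c^*,T}(w,z)-V_{c,T}(w,z)] \geq 0$, which is the overtaking condition.

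The main obstacle is the Kuhn--Tucker step: the signs of $\EE_{t-1}[\lambda_t R_t]-\lambda_{t-1}$ and of $a^*_{t-1}-a_{t-1}$ must align period by period, which is exactly the content of the complementary slackness embedded in \eqref{eq:foc}. A minor technical point is integrability of the partial sums, which is ensured by the finite-$T$ nature of the overtaking criterion together with \eqref{eq:tvc}; no additional finiteness assumption is needed beyond what is implicit in the statement of the proposition.
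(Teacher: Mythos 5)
Your argument is correct and is exactly the standard concavity--Euler--transversality sufficiency proof that the paper itself invokes (it omits the proof, citing Proposition~15.2 and Theorem~15.3 of \citet{toda2024essential}): concavity bounds the utility gap by $\sum_t \lambda_t(c_t^*-c_t)$, the budget identity telescopes this into boundary terms, complementary slackness in \eqref{eq:foc} signs each period correctly, and \eqref{eq:tvc} kills the terminal term. The only loose end is the integrability remark at the end, which deserves the usual caveat that when $V_{c,T}=-\infty$ the overtaking inequality holds trivially, but this does not affect the validity of the argument.
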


\subsection{Existence and Computability of Optimal Consumption}
\label{ss:exis_uniq}

We now study the existence, uniqueness, and computability of a feasible policy 
satisfying the first order condition~\eqref{eq:foc}. We set aside the 
discussion of the transversality condition \eqref{eq:tvc}, as it requires 
only mild assumptions not essential for the core analysis, and its verification follows essentially the same steps as Proposition~2.2 of \cite{ma2020income}.

%One of the main difficulties in solving the optimal savings problem is that the 
%utility function $u$ is typically unbounded in standard settings, making it 
%challenging to apply standard dynamic programming techniques in the usual space 
%of candidate value functions. A further complication is that characterizing the 
%value function as a fixed point of the Bellman operator provides limited 
%insight into the properties of the optimal consumption policy. To address these 
%challenges, we follow the approach of \cite{li2014solving} and 
%\cite{ma2020income} that introduce the supremum distance between marginal 
%utilities---rather than between value functions---to establish the existence 
%and uniqueness of a policy satisfying \eqref{eq:foc} in a suitable candidate 
%space.

Let $\cC$ be the set of continuous functions $c: \SS \to \RR_+$ such that 
$w \mapsto c(w,z)$ is increasing for all $z\in \ZZ$, $0< c(w,z) \leq w$ for all 
$(w,z)\in \SS$, and 
\begin{equation}\label{eq:upc_bd}
	\sup_{(w,z)\in \SS} \left|
	u^{\prime}\big(c(w,z), z\big) - u^{\prime}\big( w, z \big)
	\right| < \infty.
\end{equation}
We pair $\cC$ with a distance as follows: For each $c_1, c_2 \in \cC$, define
\begin{equation*}
	\rho(c_1,c_2) := \sup_{(w,z)\in \SS} \left|
	    u'\big(c_1(w,z), z\big) - u'\big(c_2(w,z), z\big)
	\right|,
\end{equation*}
which evaluates the maximal difference in terms of marginal utility. While
elements of $\cC$ are not generally bounded, $\rho$ is a
valid metric on $\cC$. In particular, $\rho$ is finite on $\cC$ since the 
triangular inequality implies that
$\rho(c_1,c_2) \leq \left\| u' \circ c_1 - u' \right\| 
+ \left\| u' \circ c_2 - u' \right\|$, where $\|\cdot\|$ is the standard 
supremum norm, and the last two terms are finite by~\eqref{eq:upc_bd}.
In Appendix~\ref{s:proof_or}, we show that $(\cC, \rho)$ is a complete metric 
space. 

We aim to characterize the optimal policy as the fixed point of the 
\emph{time iteration operator} $T$ defined as follows: for fixed $c \in \cC$ and 
$(w,z) \in \SS$, the value of the image $Tc$ at $(w,z)$ is defined as the 
$\xi \in (0,w]$ that solves
\begin{equation}
	\label{eq:tio}
	u'\left(\xi, z\right) = \max \left\{
		\EE_z \hat \beta \hat R u'\left( 
		    c\big(\hat R (w-\xi) + \hat Y, \hat Z\big), \hat Z
		\right),
		u^{\prime}\big(w, z\big)
	\right\}.
\end{equation}
To show that $T$ is a well defined self-map on $\cC$ and characterize the optimal policy via this operator, we impose several key assumptions. Let $P(z, \hat z)$ denote the one-step transition probability from $z$ to $\hat z$. For 
$\theta \in \RR$, we define the matrix $K(\theta)$ as follows. For each 
$z, \hat z \in \ZZ$, let
\begin{equation}\label{eq:Kmat}
	K_{z\hat z}(\theta) 
	:= P(z,\hat z) \int \beta(z, \hat z, \hat \epsilon) 
	R(z, \hat z, \hat \epsilon)^\theta \pi (\diff \hat \epsilon),
\end{equation}
where $\pi$ is the probability distribution of $\{\epsilon_t\}$.\footnote{The 
	matrix $K(\theta)$ is expressed as a function on $\ZZ \times \ZZ$ in 
	\eqref{eq:Kmat} but can be represented in traditional matrix notation by 
	enumerating $\ZZ$. }
For a square matrix $A$, we use $r(A)$ to denote its spectral radius, defined as
\begin{equation*}
	r(A) := \max \{|\lambda|: \lambda \text{ is an eigenvalue of } A\}.
\end{equation*}
In other words, $r(A)$ is the largest absolute value of all its eigenvalues. 

\begin{assumption}
	\label{a:spec_and_finite_exp}
	The following conditions hold:
	\begin{enumerate}
		\item\label{ac:fin_exp} 
		For all $z\in \ZZ$, we have $\EE_z u'(\hat Y,\hat Z) < \infty$ and 
		$\EE_z \hat \beta \hat R u'(\hat Y,\hat Z) < \infty$.
		
		\item\label{ac:spec} 
		$r(K(1)) < 1$, where the matrix $K(\theta)$ 
		is defined by~\eqref{eq:Kmat}.
	\end{enumerate}
\end{assumption}

Following \cite{ma2020income}, $\ln r(K(1))$ can be interpreted as the the 
asymptotic growth rate of average discounted gross payoff on assets, expressed 
in present-value terms. Accordingly, 
Assumption~\ref{a:spec_and_finite_exp}~\eqref{ac:spec} requires this  
rate to be negative to ensure that wealth does not diverge asymptotically. This 
does not preclude the possibility that $\beta_t R_t \geq 1$ at any given $t$. 
Rather, it substantially generalizes the standard assumption $\beta R < 1$ in 
the classical optimal savings problem, where both $R_t \equiv R$ and 
$\beta_t\equiv \beta$ are constant.

The following theorem establishes the existence and uniqueness of a candidate 
policy that satisfies the first order optimality condition.

\begin{theorem}[Existence, uniqueness, and computability of optimal policies]
	\label{t:opt}
	If Assumptions~\ref{a:utility}--\ref{a:spec_and_finite_exp} hold, then 
	the following statements are true:
	\begin{enumerate}
		\item $T: \cC \to \cC$ is well defined and has a unique fixed point $c^{*} \in \cC$.
		%\item The fixed point $c^{*}$ is the unique optimality policy in $\cC$.
		\item\label{tr:conv} For all $c \in \cC$, we have 
		$\rho(T^{k}c,c^{*}) \to 0$ as $k \to \infty$.
	\end{enumerate}
\end{theorem}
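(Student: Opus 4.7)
The plan is to follow the now-standard time-iteration strategy of \citet{li2014solving} and \citet{ma2020income}, adapted to the preference-shock setting: first verify that $T$ is pointwise well defined, then check that $T$ maps $\cC$ into itself, and finally establish contractivity under an equivalent metric derived from the spectral structure of $K(1)$. Uniqueness of the fixed point and global convergence $\rho(T^k c, c^*) \to 0$ will then follow from the Banach fixed point theorem after invoking equivalence of norms on the finite type space $\ZZ$.

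First, for fixed $(w,z) \in \SS_0$ and $c \in \cC$, I would show that the defining equation \eqref{eq:tio} has a unique solution $\xi \in (0,w]$. The left-hand side $\xi \mapsto u'(\xi,z)$ is strictly decreasing with limit $+\infty$ at $0^+$ (Assumption~\ref{a:utility}), while the right-hand side is weakly \emph{increasing} in $\xi$ (because $w-\xi$ decreases, $\hat w$ decreases, and $c(\cdot,\hat Z)$ is increasing so $u'(c(\hat w,\hat Z),\hat Z)$ increases), and equals $u'(w,z)$ at $\xi=w$. The crucial finiteness of the right-hand side uses $c \in \cC$: by \eqref{eq:upc_bd}, $u'(c(\hat w,\hat Z),\hat Z) \leq u'(\hat w,\hat Z) + M \leq u'(\hat Y,\hat Z) + M$ for some $M < \infty$, and Assumption~\ref{a:spec_and_finite_exp}\eqref{ac:fin_exp} together with the finiteness of the entries of $K(1)$ (implicit in \eqref{ac:spec}) bounds the expectation. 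Intermediate value yields existence and uniqueness. Continuity and monotonicity of $Tc$ in $w$ follow from standard implicit-function-type arguments applied to \eqref{eq:tio}; and the key bound \eqref{eq:upc_bd} for $Tc$ reduces, after using $\xi \leq w$ so that $u'(Tc(w,z),z) \geq u'(w,z)$, to the one-step estimate derived above, yielding a bound uniform in $w$ and finite in $z$ (recall $\ZZ$ is finite).

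The contractivity step is the crux. Given $c_1, c_2 \in \cC$, write $\xi_i := Tc_i(w,z)$ and $\hat w_i := \hat R(w-\xi_i) + \hat Y$. Splitting into cases according to whether each Euler equation binds at the upper bound, and assuming WLOG $\xi_1 \leq \xi_2$ (so $\hat w_1 \geq \hat w_2$), I would add and subtract $u'(c_2(\hat w_1,\hat Z),\hat Z)$ inside the expectation; the ``cross'' term $\EE_z \hat\beta \hat R[u'(c_2(\hat w_1,\hat Z),\hat Z) - u'(c_2(\hat w_2,\hat Z),\hat Z)]$ is nonpositive because $c_2(\cdot,\hat Z)$ is increasing and $\hat w_1 \geq \hat w_2$, which absorbs the endogeneity of $\hat w_i$. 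Defining $h(\hat z) := \sup_{\hat w}|u'(c_1(\hat w,\hat z),\hat z) - u'(c_2(\hat w,\hat z),\hat z)|$, this yields the pointwise inequality
\begin{equation*}
    \bigl|u'(Tc_1(w,z),z) - u'(Tc_2(w,z),z)\bigr| \leq (K(1)\,h)(z).
\end{equation*}

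To convert this vector-valued contraction into a genuine contraction, I would invoke Perron--Frobenius: since $r(K(1)) < 1$ and $K(1)$ is nonnegative on the finite set $\ZZ$, there exists a right eigenvector $v>0$ with $K(1)v = r(K(1))\,v$. Equipping $\cC$ with the equivalent metric $\rho_v(c_1,c_2) := \max_{z} h(z)/v(z)$, the pointwise inequality implies $\rho_v(Tc_1, Tc_2) \leq r(K(1))\,\rho_v(c_1,c_2)$. Combined with completeness of $(\cC,\rho)$ (proved in the appendix) and Lipschitz equivalence of $\rho$ and $\rho_v$ (since $|\ZZ|<\infty$ and $v>0$), the Banach fixed point theorem delivers a unique $c^* \in \cC$ with $Tc^* = c^*$ and $\rho(T^k c, c^*) \to 0$ for every $c \in \cC$. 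The main obstacle I anticipate is the bookkeeping in verifying \eqref{eq:upc_bd} for $Tc$ under the max-structure of \eqref{eq:tio}, and cleanly handling the boundary cases (where one or both Euler equations bind) in the contraction step without losing the monotonicity cancellation.
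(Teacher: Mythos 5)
Your proposal follows essentially the same route as the paper: pointwise well-definedness of $T$ via the intermediate value theorem, verification that $T$ is a self-map on $\cC$, and a matrix-weighted (Perov-type) contraction estimate with coefficient matrix $K(1)$, closed out by the spectral condition $r(K(1))<1$. The paper obtains the inequality $d(\tilde Th_1,\tilde Th_2)\le K(1)\,d(h_1,h_2)$ by verifying a Blackwell-type sufficient condition (monotonicity plus $\tilde T(h+a)\le \tilde Th+K(1)a$) after conjugating $T$ to the marginal-utility space, and then cites the Perov contraction theorem; your direct add-and-subtract derivation of the same pointwise bound, using $\xi_1\le\xi_2\Rightarrow\hat w_1\ge\hat w_2$ and monotonicity of $c_2$ to kill the cross term, is correct and equivalent.

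The one genuine flaw is the final step. For a general nonnegative matrix $K(1)$ with $r(K(1))<1$, Perron--Frobenius does \emph{not} guarantee a \emph{strictly positive} right eigenvector associated with the spectral radius; strict positivity requires irreducibility, which is not assumed (e.g.\ $K(1)=\left(\begin{smallmatrix}1/2 & 1\\ 0 & 1/4\end{smallmatrix}\right)$ has Perron eigenvector $(1,0)$). So the weighted metric $\rho_v$ as you define it may fail to be a metric. This is easily repaired: either pick $\theta\in(r(K(1)),1)$ and set $v=\sum_{n\ge 0}\theta^{-n}K(1)^n\mathbf 1$, which is finite, satisfies $v\ge\mathbf 1>0$ and $K(1)v\le\theta v$, and makes $T$ a $\theta$-contraction in $\rho_v$; or iterate the matrix inequality to get $d(T^nc_1,T^nc_2)\le K(1)^n d(c_1,c_2)$ and use $\|K(1)^n\|^{1/n}\to r(K(1))<1$ (the paper's Lemma~\ref{lm:Lphi}) so that some power $T^N$ is an ordinary contraction. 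With that patch your argument is complete and delivers both claims of the theorem.
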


Part~\eqref{tr:conv} establishes that, under our assumptions, the familiar time iteration algorithm is globally convergent for any initial policy within the candidate class $\cC$. As we have deferred the discussion of the transversality condition, which is orthogonal to the central focus of this paper and can be readily incorporated when necessary, we will henceforth refer to $c^*$ in Theorem~\ref{t:opt} as the \textit{optimal consumption function}.

\subsection{Fundamental Properties of Consumption and Saving}
\label{ss:ifp_properties}

We next examine the properties of the optimal consumption function 
characterized in Theorem~\ref{t:opt}. Throughout, 
Assumptions~\ref{a:utility}--\ref{a:spec_and_finite_exp} are maintained. The 
following propositions establish the monotonicity of the consumption and saving
functions.

\begin{proposition}[Monotonicity with respect to wealth]
	\label{pr:monotonea}
	The optimal consumption and savings functions $c^*(w,z)$ and 
	$i^*(w,z) := w - c^*(w,z)$ are increasing in $w$.
\end{proposition}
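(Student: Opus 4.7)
The monotonicity of $c^*$ in $w$ is immediate from Theorem~\ref{t:opt}: the fixed point lies in $\cC$, and every element of $\cC$ is by definition weakly increasing in its first argument. The substantive content is therefore the claim that savings $i^*(w,z) = w - c^*(w,z)$ is also increasing in $w$. I plan to establish this by contradiction, using the first order condition~\eqref{eq:foc} that $c^*$ satisfies as the unique fixed point of the time iteration operator $T$.

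Suppose, for contradiction, that there exist $z \in \ZZ$ and $w_1 < w_2$ with $i^*(w_1, z) > i^*(w_2, z)$. The pivotal observation is that the auxiliary function
\begin{equation*}
H(s) := \EE_z \hat\beta \hat R \, u'\!\bigl(c^*(\hat R s + \hat Y, \hat Z), \hat Z\bigr)
\end{equation*}
is weakly decreasing in $s \geq 0$: raising $s$ raises next-period wealth $\hat R s + \hat Y$ pointwise in the exogenous shocks, monotonicity of $c^*(\cdot, \hat Z)$ then weakly raises next-period consumption, and strict concavity of $u(\cdot, \hat Z)$ lowers the marginal utility inside the expectation. Consequently $H(i^*(w_1, z)) \leq H(i^*(w_2, z))$.

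The argument then splits by whether the borrowing constraint $c \leq w$ binds at $w_1$ and $w_2$. In the fully interior case the Euler equation holds with equality at both points, yielding $u'(c^*(w_1, z), z) = H(i^*(w_1, z)) \leq H(i^*(w_2, z)) = u'(c^*(w_2, z), z)$ and hence $c^*(w_1, z) \geq c^*(w_2, z)$ by strict monotonicity of $u'$; combined with $c^*$ weakly increasing in $w$ this forces $c^*(w_1, z) = c^*(w_2, z)$, so $i^*(w_1, z) = w_1 - c^*(w_1, z) < w_2 - c^*(w_2, z) = i^*(w_2, z)$, a contradiction. If the constraint binds at $w_1$, then $i^*(w_1, z) = 0$ already contradicts $i^*(w_1, z) > i^*(w_2, z) \geq 0$. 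The delicate remaining case is when the constraint binds at $w_2$ but is slack at $w_1$: here I would combine the Euler equality at $w_1$, the inequality $u'(w_2, z) \geq H(0) = H(i^*(w_2, z))$ at $w_2$, and monotonicity of $H$ to deduce $u'(c^*(w_1, z), z) \leq u'(w_2, z)$, which gives $c^*(w_1, z) \geq w_2 > w_1$, contradicting feasibility $c^*(w_1, z) \leq w_1$. The main obstacle is this bookkeeping around the kink induced by the borrowing constraint; once the cases are properly isolated, the monotonicity of $H$ together with the Euler equation makes each one routine.
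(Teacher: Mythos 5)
Your proposal is correct, but it takes a structurally different route from the paper. The paper does not argue directly at the fixed point: it introduces the subset $\cC_0 \subset \cC$ of policies for which $w \mapsto w - c(w,z)$ is increasing, shows that $\cC_0$ is closed under $\rho$-convergence and that $T$ maps $\cC_0$ into itself (the invariance step being a contradiction argument on $Tc$ very close in spirit to your interior case), and then invokes the global convergence of time iteration from Theorem~\ref{t:opt} to conclude $c^* \in \cC_0$. You instead work directly with $c^*$, using only that it lies in $\cC$ (hence is increasing in $w$) and satisfies the first order condition \eqref{eq:foc}, together with the monotonicity of $s \mapsto \EE_z \hat\beta\hat R\, u'\bigl(c^*(\hat Rs+\hat Y,\hat Z),\hat Z\bigr)$. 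Your three-way case split around the borrowing constraint is exhaustive and each case closes correctly; in the interior case you could even shortcut the final step, since $c^*(w_1,z)\geq c^*(w_2,z)$ together with $i^*(w_1,z)>i^*(w_2,z)$ already gives $w_1>w_2$ directly. The trade-off: your argument is more self-contained, needing neither the closedness of the invariant set nor the convergence $T^nc \to c^*$; the paper's version additionally establishes that every iterate $T^n c$ started from $c\in\cC_0$ has increasing savings, a property of the algorithm itself rather than only of its limit.
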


\begin{proposition}[Monotonicity with respect to income]
	\label{pr:monotoneY} 
	If $\{ Y_{1t} \}$ and $\{ Y_{2t} \}$ are two income processes satisfying 
	$Y_{1t}\leq Y_{2t}$ for all $t$ and $c_1^*$ and $c_2^*$ are the
	corresponding optimal consumption functions, then $c_1^* \leq c_2^*$
	pointwise on $\SS$.
\end{proposition}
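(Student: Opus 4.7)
The plan is to exploit the time iteration operator $T$ from Theorem~\ref{t:opt} together with its global convergence, by establishing two complementary monotonicity properties of $T$ and then passing to the limit. Let $T_1$ and $T_2$ denote the time iteration operators corresponding to the income processes $\{Y_{1t}\}$ and $\{Y_{2t}\}$, respectively, each defined through \eqref{eq:tio}.

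First, I would show that each $T_i$ is monotone in its argument: if $c \leq \tilde c$ in $\cC$, then $T_i c \leq T_i \tilde c$. Fix $(w,z) \in \SS_0$ and let
\begin{equation*}
g_i(\xi, c) := \max\left\{\EE_z \hat\beta \hat R u'\big(c(\hat R(w-\xi) + \hat Y_i, \hat Z), \hat Z\big),\ u'(w,z)\right\}
\end{equation*}
denote the right-hand side of \eqref{eq:tio}. Because $c$ is increasing in wealth, $g_i(\cdot, c)$ is non-decreasing in $\xi$, while $u'(\cdot, z)$ is strictly decreasing; hence their intersection uniquely determines $T_i c$. Replacing $c$ with $\tilde c \geq c$ weakly lowers $g_i(\xi, \cdot)$ at every $\xi$ (since $u'$ is decreasing), so the intersection must shift rightward, giving $T_i c \leq T_i \tilde c$.

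Second, I would establish monotonicity in the income process: for any $c \in \cC$, $T_1 c \leq T_2 c$. The argument parallels the first step. Since $Y_1 \leq Y_2$ pointwise and $c$ is increasing in wealth, replacing $\hat Y_1$ with $\hat Y_2$ at a fixed $\xi$ weakly increases $c(\hat R(w-\xi) + \hat Y_i, \hat Z)$, thereby weakly decreasing both $u'$ of this quantity and the expectation term. The intersection with $u'(\cdot, z)$ therefore shifts rightward.

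Combining these two properties, I would pick any $c_0 \in \cC$ (for instance $c_0(w,z) = w$, which trivially lies in $\cC$) and prove by induction on $k$ that $T_1^k c_0 \leq T_2^k c_0$, using
\begin{equation*}
T_1^{k+1} c_0 = T_1(T_1^k c_0) \leq T_2(T_1^k c_0) \leq T_2(T_2^k c_0) = T_2^{k+1} c_0,
\end{equation*}
where the first inequality uses monotonicity in the income process and the second uses monotonicity in $c$. By Theorem~\ref{t:opt}~\eqref{tr:conv}, $\rho(T_i^k c_0, c_i^*) \to 0$, which yields pointwise convergence of marginal utilities and hence of the policies themselves, by the continuity and strict monotonicity of $u'(\cdot, z)$. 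Passing the inequality to the limit delivers $c_1^* \leq c_2^*$ on $\SS_0$. The main delicacy is handling the $\max$ structure in \eqref{eq:tio}; however, since the $u'(w,z)$ branch is constant in $\xi$ and unaffected by either $c$ or $Y$, the monotonicity arguments carry over unchanged through the corner regime where the borrowing constraint binds.
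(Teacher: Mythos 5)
Your proposal is correct and follows essentially the same route as the paper: establish $T_1 c \leq T_2 c$ for all $c \in \cC$ using the monotonicity of $c$ in wealth and the decreasingness of $u'$, combine it with the monotonicity of each $T_j$ in its argument, and pass to the limit via the global convergence of time iteration from Theorem~\ref{t:opt}. The paper phrases the single-crossing step as a proof by contradiction rather than your ``intersection shifts rightward'' argument, but these are the same reasoning.
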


The next proposition demonstrates that the borrowing constraint is binding if 
and only if wealth is below a certain threshold.

\begin{proposition}[Threshold for saving decision]
	\label{pr:binding}
	For all $c \in \cC$, there exists a threshold $\bar{w}_c(z)$ such that 
	$Tc(w,z) = w$ if and only if $w \leq \bar{w}_c (z)$. In particular, letting 
	\begin{equation*}
		\bar w(z) := u'(\cdot, z)^{-1} \left[
		    \EE_z \hat \beta \hat R u'\big(c^*(\hat Y,\hat Z), \hat Z\big)
		\right],
	\end{equation*}
	we have $c^*(w,z) = w$ if and only if $w \leq \bar{w}(z)$.
\end{proposition}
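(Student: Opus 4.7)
The plan is to read off the threshold directly from the defining equation of the time iteration operator in~\eqref{eq:tio}. Fix $c \in \cC$, $z \in \ZZ$, and $w > 0$, and view~\eqref{eq:tio} as an equation in $\xi \in (0, w]$, namely $u'(\xi, z) = \max\{A(\xi; w),\, u'(w, z)\}$, where
\begin{equation*}
    A(\xi; w) := \EE_z \hat\beta \hat R \, u'\big(c(\hat R(w-\xi) + \hat Y, \hat Z), \hat Z\big).
\end{equation*}
Because $c \in \cC$ is increasing in wealth, $u'(\cdot, \hat z)$ is strictly decreasing (Assumption~\ref{a:utility}), and $\hat R \geq 0$, the map $\xi \mapsto A(\xi; w)$ is nondecreasing on $(0, w]$, whereas $\xi \mapsto u'(\xi, z)$ is strictly decreasing. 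Hence $\xi = w$ solves the equation if and only if the max is attained by its second argument at $\xi = w$, i.e., $u'(w, z) \geq A(w; w)$. The key observation is that setting $\xi = w$ collapses $\hat R(w - \xi) + \hat Y$ to $\hat Y$, so $A(w; w) = \EE_z \hat\beta\hat R u'(c(\hat Y, \hat Z), \hat Z) =: \Phi_c(z)$ is independent of $w$. Applying the strictly decreasing $u'(\cdot, z)^{-1}$ then gives $Tc(w, z) = w$ iff $w \leq \bar w_c(z) := u'(\cdot, z)^{-1}(\Phi_c(z))$, with the convention $\bar w_c(z) = +\infty$ when $\Phi_c(z)$ lies below the range of $u'(\cdot, z)$.

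Next I verify $\Phi_c(z) < \infty$, so that the inverse is legitimate. The bound~\eqref{eq:upc_bd} encoded in the definition of $\cC$ yields $u'(c(\hat Y, \hat Z), \hat Z) \leq u'(\hat Y, \hat Z) + M_c$ for some finite $M_c$ depending on $c$, and therefore
\begin{equation*}
    \Phi_c(z) \;\leq\; \EE_z \hat\beta \hat R \, u'(\hat Y, \hat Z) \;+\; M_c \cdot \EE_z \hat\beta \hat R,
\end{equation*}
both terms being finite by Assumption~\ref{a:spec_and_finite_exp}~\eqref{ac:fin_exp} together with the finiteness of the entries of $K(1)$. The second claim then follows by specialization to $c = c^*$: since Theorem~\ref{t:opt} provides $c^* = Tc^*$, the threshold characterization just derived becomes $c^*(w, z) = w$ iff $w \leq \bar w(z) := \bar w_{c^*}(z)$, and $\Phi_{c^*}(z)$ is exactly the quantity inside $u'(\cdot, z)^{-1}$ in the stated formula.

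The argument is essentially bookkeeping. The only substantive steps are (i) the monotonicity of $A(\cdot; w)$, which relies on both the wealth-monotonicity property built into $\cC$ and the strict monotonicity of $u'$, and (ii) the finiteness of $\Phi_c(z)$, which rests on~\eqref{eq:upc_bd} and Assumption~\ref{a:spec_and_finite_exp}~\eqref{ac:fin_exp}. Once these are in hand, the strict monotonicity and continuity of $u'(\cdot, z)$ (Assumption~\ref{a:utility}) deliver the threshold form essentially for free, so I do not anticipate a deep obstacle here.
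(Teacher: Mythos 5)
Your proof is correct and follows essentially the same route as the paper's: both hinge on the observation that setting $\xi=w$ collapses $\hat R(w-\xi)+\hat Y$ to $\hat Y$, making the right-hand side of the Euler equation independent of $w$, and both define $\bar w_c(z)$ by inverting $u'(\cdot,z)$ at that quantity. The paper packages the two directions as separate contradiction arguments while you invoke uniqueness of the solution to~\eqref{eq:tio} once and read off the equivalence $Tc(w,z)=w \iff u'(w,z)\geq \Phi_c(z)$ directly, which is a cosmetic rather than substantive difference; your explicit handling of the case where $\Phi_c(z)$ falls below the range of $u'(\cdot,z)$ is a small point the paper leaves implicit.
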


\section{Asymptotic Properties of Consumption and Saving}
\label{s:asym}

In this section, we develop the core analysis of this paper. We systematically study the asymptotic properties of consumption and saving in the presence of risk aversion shocks. Throughout, we maintain Assumptions~\ref{a:utility}--\ref{a:spec_and_finite_exp}. To build intuition before presenting the general results, we start with a simple illustrative example.

\subsection{A Two-Period Heuristic Example}\label{ss:heuristic}
Consider a two-period model with two risk aversion states, $\gamma_{t} \in \{\gamma_L, \gamma_H\}$, where $\gamma_L < \gamma_H$ and $\gamma_L,\gamma_H \neq 1$. Suppose that agents begin in state $\gamma_H$ in period 0 and will transition to state $\gamma_L$ in period 1. They choose $(c_0, c_1)$ to solve:
\begin{align*}
	& \text{maximize}
	&& \frac{c_{0}^{1-\gamma_{H}}}{1-\gamma_{H}} +\beta  \frac{c_{1}^{1-\gamma_{L}}}{1-\gamma_{L}} \\
	&\text{subject to}
	&& 0 \leq c_{0} \leq w_{0} \quad \text{and} \quad  
	c_{1}=R(w_{0}-c_{0}),
\end{align*}
where $\beta>0$ is the discount factor and $R>0$ is the gross return on savings, both constant. Letting $(w_0, c_0) = (w,c)$, the first order optimality condition in period 0 is
\begin{equation*}
	c^{-\gamma_{H}}=\beta  R^{1-\gamma_{L}} \left(w-c\right)^{-\gamma_{L}}.
\end{equation*}
Since $w - c \le w$, we obtain $c^{-\gamma_{H}} \geq  \beta R^{1-\gamma_{L}} w^{-\gamma_{L}}$. Dividing both sides by $w^{-\gamma_{H}}$ yields a bound on the period-0 optimal consumption function, denoted by $c^*(w,\gamma_H)$:
\begin{equation*}
	0 \leq \frac{c^*(w,\gamma_H)}{w}\leq \left(\beta R^{1-\gamma_{L}}\right)^{-1/\gamma_{H}}w^{\gamma_{L}/\gamma_{H}-1}.
\end{equation*}
Letting $w \to \infty$ and using $\gamma_L < \gamma_H$, it follows that
\begin{equation*}
	\lim_{w \to \infty} \frac{c^*(w,\gamma_H)}{w} = 0.
\end{equation*}
Hence, the marginal propensity to consume (MPC) vanishes asymptotically, implying that infinitely wealthy agents save 100\% of their total income in period 0. Intuitively, the expectation of transitioning to a less risk averse regime induces agents to accumulate wealth at an increasing rate to rebalance their consumption across different risk preference states. This suggests that the saving behavior is driven not only by return and income risks, but also by the evolution of risk preferences themselves.

\subsection{Setups}\label{ss:formu_asym}
The heuristic example above illustrated how risk aversion shocks can fundamentally influence the saving behavior of wealthy agents. We now formalize this intuition in a general framework. To that end, we depart from the standard CRRA specification by allowing risk aversion to vary with the state and specify the utility function as 
\begin{equation}
	\label{eq:CRRA}
	u(c, z) =
	\begin{cases}
		\frac{c^{1-\gamma(z)}}{1 - \gamma(z)}, &
		\text{if } \gamma(z) > 0 \text{ and } \gamma(z) \neq 1, \\
		\log c, & \text{if } \gamma(z) = 1,
	\end{cases}
\end{equation}
where $\gamma(z)$ is a state-dependent measure of relative risk aversion. This formulation captures the idea that agents may become more or less tolerant of risk depending on idiosyncratic or aggregate shocks.

Assume that the Markov process $\{Z_t\}$ can be decomposed as $Z_t = (\bar Z_t, \tilde Z_t)$, where $\{\bar Z_t\}$ and $\{\tilde Z_t\}$ are independent Markov processes taking values in $\bar{\ZZ} := \{\bar z_1,\dots,\bar z_N\}$ and $\tilde{\ZZ} := \{\tilde z_1,\dots,\tilde z_M\}$, with transition matrices $\bar P = (\bar p_{ii'})_{1\le i,i' \le N}$ and $\tilde P = (\tilde p_{jj'})_{1\le j,j' \le M}$, respectively. In this setting, the transition probability of $\{Z_t\}$ satisfies $P = \bar P \otimes \tilde P$, where $\otimes$ denotes the Kronecker product. In addition, we assume that risk aversion is driven solely by $\{\bar Z_t\}$ and satisfies%\footnote{For notational simplicity, we use $\gamma(z)$ and $\gamma(\bar z)$ interchangeably in proofs to denote the state-dependent relative risk aversion. }
\begin{equation*}
	%\gamma_t = \gamma(\bar Z_t)
	%\quad \text{and} \quad  
	0<\gamma(\bar{z}_{1})< \dots<\gamma(\bar{z}_{N}).
\end{equation*}
%
%\begin{remark}
%	If $\bar{P}$ is upper triangular---implying that the probability of transitioning to a lower level of risk aversion is zero---then the transition matrix $P$ is block upper triangular \citep*[page~31]{hornetal2013}. Furthermore, $P$ is upper triangular if and only if both $\bar{P}$ and $\tilde{P}$ are upper triangular.
%\end{remark}
%
%To state our general results, we denote $\gamma_i := \gamma(\bar z_i)$ for $1\leq i\leq N$ and let 
%%
%\begin{equation*}
%	G_i := \bar p_{ii} \tilde P 
%	\diag\{\EE_{z_{i1}} \beta R^{1-\gamma_i}, \dots,  
%	\EE_{z_{iM}} \beta R^{1-\gamma_i}\},
%	\qquad (i = 1,\dots,N)
%\end{equation*}
%%
%where $z_{ij}=(\bar{z}_{i},\tilde{z}_{j})$, $\diag\{ \cdot \}$ represents the diagonal matrix generated by the inside elements, and by definition, $\EE_{z_{ij}} \beta R^{1-\gamma_i} = \EE_{z_{ij}} \beta (z_{ij}, \epsilon) R(z_{ij}, \epsilon)^{1-\gamma}$.
%
With slight abuse of notation, we denote $\gamma_i :=\gamma(\bar z_i)$ for 
$i=1, \dots, N$. For fixed $i$, we define the $M\times M$ matrix $Q_i$ with 
entries
\begin{equation*}
	%q_{jk} 
	Q_i(j,k)
	= \EE_{z_{ij}, z_{ik}} \beta(z_{ij}, z_{ik}, \hat \epsilon) 
	R(z_{ij}, z_{ik}, \hat \epsilon)^{1-\gamma_i},
\end{equation*}
where $z_{ij} = (\bar z_i, \tilde z_j)$, and the expectation is taken with 
respect to $\hat \epsilon$ conditional on $(Z,\hat Z) = (z_{ij}, z_{ik})$. 
That is, the elements of $Q_i$ represent the corresponding expected future values, 
conditional on current and future realizations of the exogenous state. 
We then set
\begin{equation}\label{eq:Gi}
	G_i := \bar p_{ii} (\tilde P \circ Q_i),
\end{equation}
%
%%
%\begin{equation*}
%	G_i := \bar p_{ii} \begin{bmatrix}
%		\tilde{p}_{11}\EE_{z_{i1},z_{i1}} \hat \beta \hat R^{1-\gamma_{i}}  &\tilde{p}_{12}\EE_{z_{i1},z_{i2}} \hat \beta \hat R^{1-\gamma_{i}} 
%		&\ldots& 
%		\tilde{p}_{1M}\EE_{z_{i1},z_{iM}} \hat \beta \hat R^{1-\gamma_{i}}\\
%		\tilde{p}_{21} \EE_{z_{i2},z_{i1}} \hat \beta \hat R^{1-\gamma_{i}}& \tilde{p}_{22} \EE_{z_{i2},z_{i2}} \hat \beta \hat R^{1-\gamma_{i}}& \dots& \tilde{p}_{2M}\EE_{z_{i2},z_{iM}} \hat \beta \hat R^{1-\gamma_{i}} \\
%		\vdots& \vdots &\ddots& \vdots \\
%		\tilde{p}_{M1} \EE_{z_{iM},z_{i1}} \hat \beta \hat R^{1-\gamma_{i}}& \tilde{p}_{M2} \EE_{z_{iM},z_{i2}} \hat \beta \hat R^{1-\gamma_{i}}& \ldots& \tilde{p}_{MM}\EE_{z_{iM},z_{iM}} \hat \beta \hat R^{1-\gamma_{i}}
%	\end{bmatrix}
%\end{equation*}
%%
where $\tilde P\circ Q_i$ represents the Hadamard product (i.e., entry-wise 
multiplication) of $\tilde P$ and $Q_i$. Hence $G_i$ is 
obtained by multiplying each entry of $\tilde P \circ Q_i$ by $\bar p_{ii}$.

Recall that a square matrix $A$ is called \textit{reducible} if there is a permutation matrix $B$ such that $B'AB$ is block upper triangular with at least two blocks:
\begin{equation*}
	B'AB = 
	\begin{bmatrix}
		A_{11} & A_{12} \\
		0 & A_{22}
	\end{bmatrix}
\end{equation*}
where $A_{11}$ and $A_{22}$ are square matrices of size at least one. We call a square matrix \textit{irreducible} if it is not reducible. 
%By definition, any $1 \times 1$ matrix (including zero) is irreducible.
Recall that $r(A)$ denotes the spectral radius of $A$.

\subsection{Downward Risk Aversion Transition and Vanishing Asymptotic MPCs}\label{ss:ampc=0}
Throughout, we adopt the convention that $\sum_{\ell=1}^{i-1} \bar p_{i\ell}=0$ for $i=1$. We now present the central results of this paper, establishing that in the presence of stochastic risk preferences, the asymptotic MPC vanishes as a generic consequence of optimal savings behavior.

\begin{theorem}[Vanishing asymptotic MPCs]
	\label{t:0ampc_thm}
	Fix $i \in \{1,\dots, N\}$. Suppose either\footnote{If 
		$\{\beta_t\}$ and $\{R_t\}$ are 
		strictly positive stochastic processes, as is typically assumed in 
		applications, then $\PP_{(z,\hat z)} (\hat\beta \hat R > 0) = 1$ for all 
		$z,\hat z\in\ZZ$. In particular, by definition, 
		\begin{equation*}
			\PP_{(z,\hat z)} (\hat\beta \hat R > 0) = 
			\PP \left(
			\beta(Z,\hat Z,\hat\epsilon) R(Z,\hat Z,\hat\epsilon) > 0 
			\Big| (Z,\hat Z)=(z,\hat z)
			\right).
	    \end{equation*}}
	\begin{enumerate}
	    \item\label{c1} $\sum_{\ell=1}^{i-1}\bar{p}_{i\ell}>0$ and 
	    $\PP_{(z,\hat z)}(\hat \beta \hat R > 0) > 0$ for all 
	    $z,\hat z\in \ZZ$, or
	    
	    \item\label{c2} $\bar{p}_{ii}>0$, $G_i$ is irreducible, and $r(G_i) \geq 1$.
	\end{enumerate}
	Then we have
	\begin{equation}
		\label{eq:0ampc_thm}
		\lim_{w\to\infty} \frac{c^*(w,z_{ij})}{w} =0
		\quad \text{for all }\, j=1,\dots, M.
	\end{equation}
\end{theorem}

Here, $\sum_{\ell=1}^{i-1} \bar p_{i\ell}$ is the probability of transitioning from $\gamma_i$ to a lower risk aversion state in the next period. Theorem~\ref{t:0ampc_thm} shows that whenever there is a positive probability of transitioning to a lower risk aversion state, however small, and $\beta R > 0$ occurs with positive probability, the asymptotic MPC is necessarily zero. 

Theorem~\ref{t:0ampc_thm} formalizes the intuition developed in the heuristic example of Section~\ref{ss:heuristic}. The mere prospect of becoming less risk averse next period raises the value of carrying wealth forward. As a result, agents retain wealth not only to hedge against stochastic discounting, return and income risks, but also to preserve resources that might be particularly valuable under future risk preference states. Because this effect remains relevant even at very high wealth levels, consumption grows more slowly than wealth, breaking the homothetic scaling property and making vanishing asymptotic MPCs a necessary outcome of optimal intertemporal choice.

\begin{remark}
	Theorem~\ref{t:0ampc_thm} uncovers a fundamentally new mechanism for vanishing MPCs. Existing literature typically relies on restrictive parametric or spectral radius conditions that jointly constrain returns, discount factors, and preferences to force vanishing MPCs \citep[see e.g.,][]{carroll2004theoretical, ma2021theory}. By contrast, we show that with stochastic risk preferences, the asymptotic MPC is zero whenever agents internalize the possibility of a one-period risk aversion decay. Consequently, vanishing MPCs emerge as a fundamental implication of optimal savings behavior under risk preference dynamics, rather than as an artifact of restrictive conditions.   
\end{remark}

\begin{remark}
	It is also worth emphasizing that Theorem~\ref{t:0ampc_thm} is stated as an \emph{either-or} result. Condition~\eqref{c2} can be viewed as a \textit{generalized knife-edge condition} involving returns, discounting, and risk aversion dynamics. Relative to condition~\eqref{c1}, it imposes substantially stronger restrictions and is therefore considerably less likely to hold in applications. Nevertheless, condition~\eqref{c2} is of independent theoretical interest. When $\bar p_{ii}=1$, it reduces to a preference shock analogue of Theorem~3 in \cite{ma2021theory}. Hence, even when downward transitions in risk aversion are ruled out, Theorem~\ref{t:0ampc_thm} yields new results and insights for economies with stochastic risk preferences, a class of environments that lies beyond the scope of existing theory.
\end{remark}

Given that transition matrices in empirical applications often feature strictly positive entries, Condition~\eqref{c1} is frequently satisfied, making our theory broadly relevant and motivating the following result.

\begin{theorem}[Vanishing asymptotic MPCs under positive transition probability]
	\label{t:0ampc_thm_pos}
	If every entry of $\bar{P}$ is positive and 
	$\PP_{(z,\hat z)}(\hat \beta \hat R>0) >0$ for all 
	$z,\hat z\in \ZZ$, then
	\begin{equation}
		\label{eq:0ampc_prop}
		\lim_{w\to\infty} \frac{c^*(w,z_{ij})}{w} = 0
		\quad \text{for all } z_{ij} \in \ZZ \text{ with } i \neq 1.
		%i\geq 2 \text{ and } j \in \{1,...,M\}.  
	\end{equation}
	If in addition $G_1$ is irreducible and $r(G_1) \geq 1$, then 
	\eqref{eq:0ampc_prop} holds for all $z_{ij} \in \ZZ$.
\end{theorem}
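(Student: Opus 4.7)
The plan is to derive Theorem~\ref{t:0ampc_thm_pos} as a direct corollary of Theorem~\ref{t:0ampc_thm}, by checking that one of the two sufficient conditions there is satisfied at every index $i$.

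First, I would handle the case $i \neq 1$ using condition~(1) of Theorem~\ref{t:0ampc_thm}. Since every entry of $\bar P$ is assumed positive, in particular $\bar p_{i\ell} > 0$ for every $\ell \in \{1,\dots,N\}$. For any $i \geq 2$ the index set $\{1,\dots,i-1\}$ is non-empty, so
\begin{equation*}
    \sum_{\ell=1}^{i-1} \bar p_{i\ell} \geq \bar p_{i1} > 0.
\end{equation*}
The second hypothesis of condition~(1), namely $\PP_{(z,\hat z)}(\hat\beta\hat R > 0) > 0$ for all $z,\hat z \in \ZZ$, is assumed in the statement. Theorem~\ref{t:0ampc_thm} then yields $\lim_{w\to\infty} c^*(w,z_{ij})/w = 0$ for every $j = 1,\dots,M$, which is exactly \eqref{eq:0ampc_prop} for $i \neq 1$.

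Next, for $i = 1$, the convention $\sum_{\ell=1}^{0} \bar p_{1\ell} = 0$ rules out condition~(1), so I would invoke condition~(2) instead. Positivity of every entry of $\bar P$ gives $\bar p_{11} > 0$, and the additional hypotheses of the second part of the theorem supply exactly the remaining requirements: $G_1$ is irreducible and $r(G_1) \geq 1$. Applying Theorem~\ref{t:0ampc_thm} at $i = 1$ yields the zero asymptotic MPC at every state $z_{1j}$, and combined with the first step this extends \eqref{eq:0ampc_prop} to all $z_{ij} \in \ZZ$.

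There is essentially no obstacle here: the theorem is a clean packaging of Theorem~\ref{t:0ampc_thm} under the convenient assumption that $\bar P$ is strictly positive, with the only subtlety being the boundary case $i = 1$, where the "lower risk-aversion" set is empty and one must fall back on the spectral condition on $G_1$. The only care needed is to clearly indicate why condition~(1) cannot cover $i = 1$ and thus motivate why the extra hypothesis on $G_1$ is introduced in the second half of the statement.
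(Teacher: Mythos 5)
Your proposal is correct and follows exactly the paper's own (very terse) argument: apply condition~(1) of Theorem~\ref{t:0ampc_thm} for each $i\neq 1$ using the strict positivity of $\bar P$, and condition~(2) at $i=1$ using $\bar p_{11}>0$ together with the irreducibility and spectral-radius hypotheses on $G_1$. Your write-up is in fact more explicit than the paper's about why the case $i=1$ requires the extra assumptions.
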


Taken together, Theorems~\ref{t:0ampc_thm}--\ref{t:0ampc_thm_pos} establish a sharp theoretical boundary: unless in highly exceptional cases, the asymptotic MPC can be  positive \emph{only if} the probability of becoming less risk averse next period is zero. For the asymptotic MPCs to remain positive across all states, the probability transition matrix of risk aversion must be upper triangular, an exceptional 
case that rarely arises in practice, as it would require agents to only become 
weakly more risk averse over time. These observations are formalized in the 
next result. The proof follows immediately from Theorem~\ref{t:0ampc_thm}. 

\begin{proposition}[Impossibility of positive asymptotic MPCs without downward transitions]
	\label{cor:special_trans}
	If $\PP_{(z,\hat z)}(\hat \beta \hat R>0) >0$ for all 
	$z,\hat z\in \ZZ$, and there exists $z_{ij} \in \ZZ$ such that 
	\begin{equation}\label{eq:pos_ampc}
		\lim_{w \to \infty} \frac{c^*(w,z_{ij})}{w} > 0,
	\end{equation}
	then $\bar p_{i1} = \cdots = \bar p_{i,i-1} = 0$. In particular, if for all 
	$i>1$, there exists $j$ and $z_{ij}\in \ZZ$ such that \eqref{eq:pos_ampc} 
	holds, then $\bar P$ is upper triangular.
\end{proposition}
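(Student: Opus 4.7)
The plan is to derive Proposition~\ref{cor:special_trans} as a direct contrapositive consequence of Theorem~\ref{t:0ampc_thm}, part~(1). I would not need any new machinery: the assumption $\PP_{(z,\hat z)}(\hat\beta\hat R>0)>0$ for all $z,\hat z$ is precisely the positivity hypothesis appearing in that theorem, and the finite-set structure of $\bar{\ZZ}$ makes the transition probabilities work out algebraically.

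First, I would fix $z_{ij}\in\ZZ$ satisfying \eqref{eq:pos_ampc}, that is, with a strictly positive asymptotic MPC. Theorem~\ref{t:0ampc_thm}(1) states that whenever $\sum_{\ell=1}^{i-1}\bar p_{i\ell}>0$ and the positivity condition on $\hat\beta\hat R$ holds, the asymptotic MPC at every $z_{ij}$ (for the same $i$ and any $j=1,\dots,M$) equals zero. Since the positivity condition is assumed and the MPC at $z_{ij}$ is strictly positive, taking contrapositives forces
\begin{equation*}
\sum_{\ell=1}^{i-1}\bar p_{i\ell}=0.
\end{equation*}
Because each $\bar p_{i\ell}\geq 0$, the only way this sum can vanish is if each term vanishes individually, giving $\bar p_{i1}=\cdots=\bar p_{i,i-1}=0$, which is the first conclusion. (For $i=1$ the convention $\sum_{\ell=1}^{0}\bar p_{1\ell}=0$ makes the claim vacuous, which is consistent with the statement.)

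For the second conclusion, I would simply apply the first part for every $i\in\{2,\dots,N\}$. The hypothesis furnishes some $j$ (depending on $i$) with $\lim_{w\to\infty}c^*(w,z_{ij})/w>0$, so the first conclusion yields $\bar p_{i1}=\cdots=\bar p_{i,i-1}=0$ for each such $i$. Since the states in $\bar{\ZZ}$ have been ordered so that $\gamma(\bar z_1)<\cdots<\gamma(\bar z_N)$, vanishing of all subdiagonal entries of $\bar P$ is exactly the statement that $\bar P$ is upper triangular in this ordering, which is what we wanted to show.

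I do not anticipate any real obstacle; the only mild subtlety is bookkeeping the convention for $i=1$ and observing that the upper-triangularity statement is relative to the fixed ordering of $\bar{\ZZ}$ by increasing risk aversion, both of which are already built into the setup in Section~\ref{ss:formu_asym}. This is why the authors could reasonably remark that the proof is straightforward and omit it.
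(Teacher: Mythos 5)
Your proposal is correct and is exactly the argument the authors had in mind when they described the proof as straightforward and omitted it: the first claim is the contrapositive of Theorem~\ref{t:0ampc_thm}(1) combined with the nonnegativity of the $\bar p_{i\ell}$, and the second claim follows by applying the first to each $i>1$. Nothing is missing.
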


\subsection{Consumption Behavior without Downward Risk Aversion Transitions}
\label{ss:ampc>0}
We now turn to the complementary class of economies in which downward transitions in risk aversion are ruled out (i.e., $\sum_{\ell=1}^{i-1}\bar p_{i\ell} = 0$). Although such cases are less common in applications, they are important for completing the characterization of asymptotic consumption behavior. As we show below, the resulting outcomes depend critically on the persistence of the current risk aversion state $\bar p_{ii}$, generating qualitatively distinct regimes corresponding to $\bar p_{ii}=0$, $\bar p_{ii}=1$, and $\bar p_{ii}\in(0,1)$, which we analyze in turn.

\begin{theorem}[Unit asymptotic MPCs under upward risk aversion transitions]
	\label{t:non0_ampc2}
	Given $i\in \{1, \dots, N\}$, if $\sum_{\ell=1}^{i}\bar{p}_{i\ell}=0$ 
	and there exists $m>0$ such that $R \geq m$ with probability one, then 
	\begin{equation*}
		\lim_{w \to \infty} \frac{c^{*}(w,z_{ij})}{w} = 1
		\quad \text{for all }\, j=1,\dots, M.
	\end{equation*}
\end{theorem}

Theorem~\ref{t:non0_ampc2} reveals a striking behavioral shift.  When agents expect their risk aversion to rise over time (i.e., $\sum_{\ell=1}^{i}\bar{p}_{i\ell}=0$), the marginal value of transferring additional wealth into the future becomes negligible at high wealth levels. Intuitively, future selves are expected to be increasingly cautious and therefore have stronger incentives to preserve resources on their own. As a result, the current self has little incentive to accumulate additional wealth, causing the asymptotic MPC to converge to one. This outcome stands in sharp contrast both to the vanishing MPCs generated by downward risk aversion transitions and to the asymptotically linear consumption behavior emphasized in the existing literature. Moreover, it has profound implications for the structure of the optimal policy, as established below.

\begin{proposition}[Nonconcavity of the consumption policy]
	\label{pr:non-concave}
	Given $i\in \{1, \dots, N\}$, if the assumptions of 
	Theorem~\ref{t:non0_ampc2} hold, then $c^*(w,z_{ij})$ is not concave in $w$ 
	for all $j \in \{1, \dots, M\}$.
\end{proposition} 

\begin{remark}
	A central insight of the classical savings literature is that consumption policies are typically concave in wealth under HARA preferences \citep{carroll1996concavity}. This property continues to hold even in environments with capital income risk and stochastic discounting \citep{ma2020income}. Proposition~\ref{pr:non-concave} shows that introducing stochastic risk preferences can fundamentally overturn this conclusion. When risk aversion is sure to increase next period, the consumption policy becomes nonconcave, indicating that the classical connection between precautionary saving and concave consumption policies breaks down once risk attitudes evolve over time. 
\end{remark}

Now consider the case $\bar p_{ii} = 1$. For each $i\in\{1,\dots, N\}$ and $x \in [1,\infty)^M$, define
\begin{equation}
	\label{eq:F_oper0}
	(F_ix)(\tilde z_{j}):= \left(
	    1+(G_ix)(\tilde z_{j})^{1/\gamma_{i}}
	\right)^{\gamma_{i}}, 
	\qquad j \in \{1,\dots, M\}.
\end{equation}
The next result characterizes the asymptotic consumption behavior in this polar case.

\begin{theorem}[Positive asymptotic MPCs under strictly persistent risk aversion]
	\label{t:non0_ampc1}
	Given $i\in \{1, \dots, N\}$,
	if $\bar{p}_{ii}=1$ and $r(G_{i}) < 1$, then $F_i$ in \eqref{eq:F_oper0} 
	has a unique fixed point $x_i^{*}=(x_i^{*}(\tilde z_{j}))_{j=1}^{M}$ in 
	$[1,\infty)^{M}$ and
		\begin{equation}\label{eq:non0_ampc1}
			\lim_{w \to \infty} \frac{c^{*}(w,z_{ij})}{w} 
			= x_i^*(\tilde z_{j})^{-1/\gamma_{i}} > 0
			\quad \text{for all }\, j=1,\dots, M.
		\end{equation}
\end{theorem}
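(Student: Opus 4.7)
The strategy exploits the degeneracy $\bar p_{ii}=1$: once the state $\bar z_i$ is entered, the risk-aversion level is frozen at $\gamma_i$, so that conditional on $\bar Z_0=\bar z_i$ the problem reduces to a pure CRRA savings model with exogenous state $\tilde Z_t$ on $\tilde\ZZ$. Under this reduction, $G_i=\bar p_{ii}(\tilde P\circ Q_i)=\tilde P\circ Q_i$ plays the role of the discounted conditional-payoff matrix, and the operator $F_i$ encodes the asymptotic Euler equation, paralleling the logic of \citet{ma2021theory}.

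\emph{Step 1: Fixed point of $F_i$.} Since $G_i$ has nonnegative entries and $r(G_i)<1$, Perron--Frobenius supplies a strictly positive vector $v$ and a scalar $\rho\in(0,1)$ with $G_iv\leq\rho v$. The map $F_i$ is monotone on $[1,\infty)^M$ and satisfies $F_i(\mathbf{1})\geq\mathbf{1}$, so the iterates starting at $\mathbf{1}$ form an increasing sequence; the inequality $F_i(tv)\leq tv$ holds for $t$ large (componentwise this reduces to $(tv_j)^{1/\gamma_i}(1-\rho^{1/\gamma_i})\geq1$, valid because $\rho<1$), providing a uniform upper bound, so monotone convergence yields a fixed point $x_i^*\in[1,\infty)^M$. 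Uniqueness then follows from a standard order-theoretic or Thompson-metric argument exploiting the concavity-like structure of $F_i$ on the cone $[1,\infty)^M$ together with $G_iv\leq\rho v$.

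\emph{Step 2: Convergence of the MPC.} Set $m_j(w):=c^*(w,z_{ij})/w\in(0,1]$ and $M_j(w):=m_j(w)^{-\gamma_i}\geq1$. Multiplying the Euler equation by $w^{\gamma_i}$ and invoking $\bar p_{ii}=1$ together with CRRA marginal utility gives, for $w>\bar w(z_{ij})$,
\begin{equation*}
M_j(w)=\sum_{k=1}^{M}\tilde p_{jk}\,\EE\!\left[\hat\beta\,\hat R\,(w/\hat w)^{\gamma_i}\,M_k(\hat w)\,\Big|\,\tilde Z_0=\tilde z_j,\,\tilde Z_1=\tilde z_k\right],
\end{equation*}
with $\hat w=\hat R(w-c^*(w,z_{ij}))+\hat Y$, so that $(w/\hat w)^{\gamma_i}\to \hat R^{-\gamma_i}(1-m_j(w))^{-\gamma_i}$ pathwise as $w\to\infty$. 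Let $\bar M,\underline M$ denote the componentwise $\limsup$ and $\liminf$ of $M(w)$. Passing to $\limsup$/$\liminf$ (justified by dominated convergence using Assumption~\ref{a:spec_and_finite_exp} and finiteness of the entries of $Q_i$), and using the equivalence $M_j(1-M_j^{-1/\gamma_i})^{\gamma_i}\leq(G_iM)_j\Longleftrightarrow M_j\leq(F_iM)_j$ for $M_j\geq1$ (since $f(M):=M(1-M^{-1/\gamma_i})^{\gamma_i}$ is strictly increasing with inverse $f^{-1}(C)=(1+C^{1/\gamma_i})^{\gamma_i}$), one obtains componentwise
\begin{equation*}
\bar M\leq F_i\bar M,\qquad\underline M\geq F_i\underline M.
\end{equation*}
Iterating $F_i$ and invoking Step~1 sandwiches $\bar M=\underline M=x_i^*$, whence $m_j(w)\to x_i^*(\tilde z_j)^{-1/\gamma_i}>0$, which is \eqref{eq:non0_ampc1}.

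\emph{Main obstacle.} The chief technical difficulty is establishing componentwise finiteness of $\bar M$ a priori: without an upper bound, the monotone iteration of $F_i$ cannot be used to close the sandwich and $m_j(w)$ could in principle drift to zero. I expect to handle this through a bootstrap combining $r(G_i)<1$, uniform integrability of $\hat\beta\,\hat R^{1-\gamma_i}$ from Assumption~\ref{a:spec_and_finite_exp}, and a comparison of $c^*$ with the optimal policy of a finite-horizon truncation whose terminal period exhausts all wealth, thereby delivering an explicit strictly positive lower bound on the MPC.
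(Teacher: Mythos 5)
Your Step 1 is essentially sound (though for a reducible nonnegative $G_i$ with $r(G_i)<1$ the strictly positive comparison vector should be built as $v=\sum_{k\ge 0}\sigma^{-k}G_i^k\1$ for $\sigma\in(r(G_i),1)$ rather than invoked from Perron--Frobenius, and the uniqueness argument is only sketched; the paper simply imports this from Lemmas~15--16 of \cite{ma2021theory} via Proposition~\ref{pr:F_fpt}). Your derivation of $\underline M\ge F_i\underline M$ via Fatou is also fine and corresponds to the paper's Lemma~\ref{lm:limsup}, which it applies to the policy iterates $c_n=T^nc_0$ rather than directly to $c^*$; either route delivers $\limsup_w c^*(w,z_{ij})/w\le x_i^*(\tilde z_j)^{-1/\gamma_i}$.

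The genuine gap is the other half of the sandwich. The inequality $\bar M\le F_i\bar M$ requires interchanging $\limsup$ with the expectation in the Euler equation, i.e.\ reverse Fatou, and the candidate dominating function is $\hat\beta\hat R\,(c^*(\hat w_n,\hat Z)/w_n)^{-\gamma_i}$, which is integrable uniformly in $n$ only if the MPC is already bounded away from zero --- precisely the finiteness of $\bar M$ you are trying to establish. As written, Step 2's lower bound is circular, and you correctly flag this; but the ``bootstrap'' you defer to is the entire content of the hard direction, not a technicality. The way the paper closes it (Lemma~\ref{lm:liminfN}) is to prove a \emph{pointwise}, not asymptotic, invariance statement: if $c\in\cC$ satisfies $c(w,z_{ij})\ge x_i^*(\tilde z_j)^{-1/\gamma_i}w$ for \emph{all} $w>0$, then so does $Tc$. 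The proof is a short contradiction argument that uses $\bar p_{ii}=1$ (so $\hat\gamma=\gamma_i$ almost surely, eliminating the $w_n^{\gamma_i-\gamma(\hat Z)}$ terms), the monotonicity of $u'$, and the fixed-point identity $x_i^*=F_ix_i^*$; no passage to the limit in $w$ and no domination is needed. Starting the policy iteration from $c_0(w,z)=w$ (your ``terminal period exhausts all wealth''), the bound propagates to every $c_n$ and hence to $c^*=\lim_n c_n$, giving $\liminf_w c^*(w,z_{ij})/w\ge x_i^*(\tilde z_j)^{-1/\gamma_i}$ directly and making the $\bar M\le F_i\bar M$ limit computation unnecessary. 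Until you supply this invariance lemma (or an equivalent a priori positive lower bound on $c^*(w,z_{ij})/w$), the proof is incomplete.
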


When $\bar p_{ii}=1$, the possibility of future changes in risk attitudes disappears along the relevant histories. In this case, the mechanism generated by risk preference dynamics is inactive, and the asymptotic MPC is determined by the interaction between returns, discounting, and the prevailing risk aversion level. The result yields a strictly positive asymptotic MPC characterized by the fixed point problem above.

Having characterized the two polar regimes $\bar p_{ii}=0$ and $\bar p_{ii}=1$, we now turn to the intermediate case $\bar p_{ii}\in(0,1)$, where the current risk aversion state is persistent but not absorbing. 

\begin{theorem}[Positive asymptotic MPCs under imperfect persistence]
	\label{t:non0_ampc3}
	Given $i\in \{1, \dots, N\}$, if (1) 
	$\sum_{\ell=1}^{i-1}\bar{p}_{i\ell}=0$, (2) $\bar{p}_{ii}\in (0,1)$ and 
	$r(G_i) < 1$, and (3) there exists $m, b >0$ such that $R \geq m$ and $Y\geq b$ 
	with probability one, then \eqref{eq:non0_ampc1} holds.
\end{theorem}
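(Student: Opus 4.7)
The plan is to pass to the limit in the CRRA Euler equation at state $z_{ij}$, in the spirit of Theorem~\ref{t:non0_ampc1}, and to show that the transitions to higher risk-aversion states contribute only lower-order terms, so that the asymptotic equation collapses to the fixed-point equation for $F_i$ with $G_i=\bar p_{ii}(\tilde P\circ Q_i)$.

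For $w$ above the threshold $\bar w(z_{ij})$ from Proposition~\ref{pr:binding}, the CRRA Euler equation~\eqref{eq:foc} reads
\[
c^*(w,z_{ij})^{-\gamma_i} = \bar p_{ii}\sum_{k}\tilde p_{jk}\,\EE[\hat\beta\hat R\,c^*(\hat w,z_{ik})^{-\gamma_i} \mid \bar{\hat Z}=\bar z_i] + \sum_{\ell>i}\bar p_{i\ell}\sum_{k}\tilde p_{jk}\,\EE[\hat\beta\hat R\,c^*(\hat w,z_{\ell k})^{-\gamma_\ell} \mid \bar{\hat Z}=\bar z_\ell],
\]
where $\hat w = \hat R(w-c^*(w,z_{ij}))+\hat Y$ and assumption (1) has been used to eliminate $\ell<i$. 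Using assumption (3), I would first establish a sub-linear polynomial lower bound $c^*(w,z_{\ell k})\geq A_\ell\,w^{\theta_\ell}$ for each $\ell>i$, with $A_\ell>0$ and $\theta_\ell\in(\gamma_i/\gamma_\ell,1]$. Combined with the moment conditions in Assumption~\ref{a:spec_and_finite_exp}, this yields $\EE[\hat\beta\hat R\,c^*(\hat w,z_{\ell k})^{-\gamma_\ell}] = O(w^{-\theta_\ell\gamma_\ell}) = o(w^{-\gamma_i})$, so each exit term is negligible compared with $w^{-\gamma_i}$.

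Having absorbed the exit terms, I would multiply the Euler equation by $w^{\gamma_i}$ and, using $\hat w/w\to\hat R\,(1-c^*(w,z_{ij})/w)$ together with $\hat Y/w\to 0$, take the $\limsup$ and $\liminf$ of $x(w,\tilde z_j):=(w/c^*(w,z_{ij}))^{\gamma_i}\in[1,\infty)$ as $w\to\infty$. The surviving stay contribution has exactly the structure of the fixed-point equation for $F_i$ defined in~\eqref{eq:F_oper0}: both the upper and lower limits of $x(w,\tilde z_j)$ satisfy an $F_i$-type inequality on $[1,\infty)^M$. The contraction-type argument underlying Theorem~\ref{t:non0_ampc1} (which relies on $r(G_i)<1$) then forces the two limits to coincide and to equal the unique fixed point $x_i^*\in[1,\infty)^M$ of $F_i$, giving $c^*(w,z_{ij})/w\to x_i^*(\tilde z_j)^{-1/\gamma_i}$.

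The principal obstacle is the sub-linear polynomial lower bound used to absorb the exit terms. Since Theorem~\ref{t:0ampc_thm} may apply at the upper states $\bar z_\ell$, the asymptotic MPC there can be zero, so a linear lower bound on $c^*(\cdot,z_{\ell k})$ is unavailable, and one must exclude pathologically slow (say, logarithmic) consumption growth. Here assumption (3) is essential: because $R\geq m>0$ and $Y\geq b>0$ almost surely, the feasible wealth process stays above $b$ and the dynamics are sufficiently regular that $c^*(\cdot,z_{\ell k})$ can be compared, via Proposition~\ref{pr:monotoneY}, to the consumption rule of an auxiliary savings problem with deterministic return $m$ and deterministic income $b$, whose polynomial growth rate can be computed explicitly and chosen to exceed $\gamma_i/\gamma_\ell$.
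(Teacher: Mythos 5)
Your upper-bound half (drop the exit terms, use $r(G_i)<1$ and the $F_i$ fixed point) parallels the paper's Lemma~\ref{lm:limsup} and is fine. The gap is in the claimed negligibility of the exit terms, which is exactly where the whole difficulty of this theorem sits. You want a lower bound $c^*(w,z_{\ell k})\geq A_\ell w^{\theta_\ell}$ with $\theta_\ell>\gamma_i/\gamma_\ell$ for $\ell>i$, so that $\EE\hat\beta\hat R\,c^*(\hat w,z_{\ell k})^{-\gamma_\ell}=o(w^{-\gamma_i})$. Such a bound is generally \emph{false}. If state $\bar z_\ell$ can transition with positive probability to some $\bar z_{i'}$ with $i'\leq i$ (which the hypotheses of the theorem do not exclude), then the Euler equation at $(v,z_{\ell k})$ combined with $c^*\leq w$ gives $c^*(v,z_{\ell k})^{-\gamma_\ell}\geq \mathrm{const}\cdot\EE\hat\beta\hat R(\hat R v+\hat Y)^{-\gamma_{i'}}\gtrsim v^{-\gamma_{i'}}$, hence $c^*(v,z_{\ell k})\leq C v^{\gamma_{i'}/\gamma_\ell}\leq C v^{\gamma_i/\gamma_\ell}$; no exponent strictly above $\gamma_i/\gamma_\ell$ can then appear in a valid lower bound, and the corresponding exit term is in fact bounded \emph{below} by a positive multiple of $w^{-\gamma_{i'}}$, i.e.\ it is of the same order as (or larger than) the retained term. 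Your proposed fix also does not go through: Proposition~\ref{pr:monotoneY} only gives monotonicity in the \emph{income} process; it does not let you replace the stochastic return by its lower bound $m$ or alter the risk-aversion transition structure, and pointwise monotonicity of consumption in the return is false in general because of offsetting income and substitution effects.

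The paper's proof of the lower bound avoids this issue entirely by never bounding consumption at the higher states in isolation. It constructs an auxiliary optimal savings problem in which every state has risk aversion $\gamma_i$ but the period utility at state $\bar z_h$ carries a weight $\psi_h=\alpha b^{\gamma_i-\gamma_h}$, chosen so that $b^{-\gamma_h}/\psi_h=b^{-\gamma_i}/\psi_i$ and (taking $\alpha$ small) $r(U_i)<1$. Using $Y\geq b$, consumption along the iteration stays above $b$, and since $x\mapsto x^{-\gamma}$ decays faster for larger $\gamma$ on $[b,\infty)$ after this reweighting, Lemma~\ref{lm:pm} delivers a pointwise comparison $c^*(\cdot;\psi,\gamma_i)\leq c^*$ on $w\geq b$ by induction on time iteration. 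The auxiliary problem has a strictly positive, explicitly computable asymptotic MPC (Lemma~\ref{lm:Clim}), and repeated application of $T$ together with Lemma~\ref{lm:lim} (which needs $R\geq m$) lifts this linear lower bound up to the fixed point $x_i^*$ of $F_i$. If you want to salvage an Euler-equation-in-the-limit argument, you would need to first establish a \emph{linear} asymptotic lower bound on $c^*$ at state $i$ by some such comparison device; a direct polynomial bound at the exit states cannot do the job.
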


Establishing Theorem~\ref{t:non0_ampc3} is substantially more involved than the polar cases above. To construct upper and lower asymptotic MPC bounds and show that they coincide, we develop an important parametric monotonicity result for optimal consumption with respect to risk attitudes (Proposition~\ref{pr:pm}), which is of independent interest for the analysis of dynamic optimization problems with stochastic risk preferences. 

Taken together, the results in this section show that stochastic risk preferences do far more than generate vanishing asymptotic MPCs. Once risk attitudes evolve over time, asymptotic consumption behavior can exhibit qualitatively different regimes, ranging from vanishing MPCs to unit MPCs and strictly positive interior MPCs. Therefore, risk preference dynamics substantially enrich the set of possible long-run consumption outcomes and give rise to forms of asymptotic behavior that are absent from existing theories based on fixed risk attitudes. 
Table~\ref{tab:ampcs} summarizes the core results of Section~\ref{s:asym}.

\begin{landscape}
	%\clearpage
	%\thispagestyle{empty}
	\centering
	\vspace*{\fill}
	\begin{table}[htbp]
		\caption{Summary of Asymptotic MPCs  under Risk Preference Dynamics}
		\vspace{-.3cm}
		\label{tab:ampcs}
		\begin{minipage}{\linewidth}
			\centering
			\resizebox{1\textwidth}{!}{%
				\begin{tabular}{p{3.8cm}p{3.5cm}p{1cm}p{3.5cm}|p{3.5cm}p{3.5cm}p{3.6cm}}
					\specialrule{1.5pt}{0pt}{0pt}
					\multicolumn{4}{c|}{} & 
					\makecell[c]{\rule{0pt}{18pt}} &
					\makecell[c]{\rule{0pt}{18pt}$\sum_{\ell=1}^{i-1}\bar{p}_{i\ell} =0$} & 
					\makecell[c]{\rule{0pt}{18pt}} \\
					\noalign{\kern 6pt}\cline{5-7}\noalign{\kern -16pt}
					\makecell{\textbf{Key} \\ \textbf{Assumptions} \vspace{0.5cm}
					} & 
					\makecell{$\sum_{\ell=1}^{i-1}\bar{p}_{i\ell} >0$ 
						\vspace{0.5cm}} & 
					\makecell{or \vspace{0.5cm} } & 
					\makecell{$\bar{p}_{ii}>0$; \\ $r(G_{i})>1$ \vspace{0.5cm}} & 
					\makecell[c]{\rule{0pt}{27pt}$\bar{p}_{ii}=0$} & 
					\makecell[c]{\rule{0pt}{27pt}$\bar{p}_{ii}=1$} & 
					\makecell[c]{\rule{0pt}{27pt}$0<\bar{p}_{ii}<1$} \\
					\hline
					\makecell{\textbf{Interpretation} } & 
					\makecell{Less risk averse \\ with positive \\ probability} & 
					\makecell{} & 
					\makecell{Generalized \\ knife-edge \\ condition} & 
					\makecell{Strictly more \\ risk averse \\ with probability 1} & 
					\makecell{Same risk aversion \\ with probability 1} & 
					\makecell{Weakly more \\ risk averse with \\ positive probability} \\
					\hline
					\makecell{\textbf{Other}  \\ \textbf{Assumptions}} & 
					\makecell{$\PP_{(z,\hat z)}(\hat \beta \hat R >0)>0$ \\
						for all $z,\hat z \in \ZZ$} & 
					\makecell{} & 
					\makecell{$G_i$ is irreducible} & 
					\makecell{$R_t \geq m$ \\ for some $m>0$} & 
					\makecell{$r(G_{i})<1$} & 
					\makecell{$r(G_{i})<1$; \\ $R_t \geq m$, $Y_t \geq b$ \\ 
						for some $m,b > 0$} \\
					\specialrule{1.5pt}{0pt}{0pt}
					\makecell{\textbf{Asymptotic} \\ \textbf{MPCs}} & 
					\makecell{0} & 
					\makecell{} & 
					\makecell{0} & 
					\makecell{1} & 
					\makecell{$(0,1]$} & 
					\makecell{$(0,1]$} \\
					\hline
					\makecell{\textbf{References}} & 
					\makecell{Theorem 3.1} & 
					\makecell{} & 
					\makecell{Theorem 3.1} & 
					\makecell{Theorem 3.3} & 
					\makecell{Theorem 3.4} & 
					\makecell{Theorem 3.5} \\
					\specialrule{1.5pt}{0pt}{0pt}
				\end{tabular}%
			}
			\parbox{1\textwidth}{\vspace{.1cm}\small{Note: Results are reported conditional on the current risk aversion state $\gamma_i$ and hold for each $\gamma_i$. Here, $\bar p_{ij}$ denotes the one-period transition probability from $\gamma_i$ to $\gamma_j$. The matrix $G_i$ is defined in \eqref{eq:Gi}} and $r(G_i)$ is its spectral radius.}
		\end{minipage}
	\end{table}
	\vspace*{\fill}
\end{landscape}

\subsection{Examples}\label{ss:ex}
In what follows, we illustrate how stochastic risk preferences alter the asymptotic dynamics of several benchmark savings models. In each example, the introduction of risk preference shocks generates outcomes that differ sharply from those obtained under constant risk aversion. For each $z\in \ZZ$, we denote
\begin{equation*}
	\bar c(z) := \lim_{w\to\infty} \frac{c^*(w,z)}{w}.
\end{equation*}

\begin{example}[Downward risk aversion transitions]
	Suppose $\beta R > 0$ with positive probability at each 
	state and $\bar p_{i,i-1}>0$ 
	for all $i>2$, implying that risk aversion follows a persistent Markov 
	chain with strictly positive probability of downward transitions between 
	adjacent states. Then, by Theorem~\ref{t:0ampc_thm}, we have 
	$\bar c(z_{ij})=0$ for all $i>1$ and all $j$. For $i=1$, if $G_1$ is 
	irreducible with $r(G_1)\geq 1$, then $\bar c(z_{1j})=0$ for all $j$, and 
	if instead $r(G_1)<1$, then $\bar c(z_{1j})\in (0,1]$ for all $j$. Hence, vanishing asymptotic MPCs arise throughout the economy except possibly at the lowest risk aversion state.
\end{example}

\begin{example}[Strictly upward transitions]
	Suppose that, at some state, agents transition with probability one to a more risk averse state in the next period, and $R_t$ has a positive lower bound (which can be arbitrarily close to zero). Then Theorem~\ref{t:non0_ampc2} implies that consumption asymptotically tracks wealth one-for-one at that state, so the asymptotic MPC equals one. Moreover, Proposition~\ref{pr:non-concave} shows that the optimal consumption policy cannot be concave in wealth.
\end{example}

\begin{example}[Constant returns without capital income risk]
	Suppose the gross rate of return $R_t$ is constant at $R$. By Proposition~8 of \citet*{ma2021theory}, in the absence of preference shocks, a zero asymptotic MPC can occur only when $R<1$. This condition is difficult to reconcile with the observed asset returns, as it requires a negative risk-free interest rate. By contrast, once stochastic risk preferences are introduced, Theorem~\ref{t:0ampc_thm} implies that vanishing asymptotic MPCs arise whenever there is a positive probability of transitioning to a less risk averse state and $\beta R>0$ occurs with positive probability, without any further restriction on $R$. Therefore, the mechanism generating vanishing MPCs no longer relies on economically implausible return assumptions.
\end{example}

\begin{example}[Classical optimal savings problem with preference shocks]
	\label{ex:clas_ops}
	Consider the classical optimal savings problem, where $R_t \equiv R$ and 
	$\beta_t \equiv \beta$ are positive constants with $\beta R < 1$ and 
	$Y_t$ has a positive lower bound. Without preference shocks, 
	\citet*{ma2021theory} show that the asymptotic MPC is 
	\begin{equation*}
		\bar{c}(z) = \begin{cases}
			0, & \text{if $\beta R^{1-\gamma} \geq 1$,}  \\
			1-(\beta R^{1-\gamma})^{1/\gamma}, 
			& \text{otherwise.}   
		\end{cases}
	\end{equation*}
	With preference shocks, Theorems~\ref{t:0ampc_thm}--\ref{t:non0_ampc3} 
	imply that the asymptotic MPC is
	\begin{equation*}
		\bar c(z_{ij}) = \begin{cases}
			0, & \text{if $\sum_{\ell=1}^{i-1} \bar p_{i\ell} > 0$
				or $\bar p_{ii} \beta R^{1-\gamma_i} \geq 1$,}\\
			1 - (\bar p_{ii} \beta R^{1-\gamma_i})^{1/\gamma_i}, 
			& \text{otherwise.}
		\end{cases}
	\end{equation*}	
	In particular, a positive probability of transitioning to a lower risk aversion state is sufficient to generate vanishing asymptotic MPCs, regardless of whether the classical condition $\beta R^{1-\gamma}\ge 1$ holds. Thus, the source of vanishing MPCs shifts from \textit{restrictive parameter configurations} to \textit{the dynamics of risk preferences themselves}.
	
	At the opposite extreme, if $\sum_{\ell=1}^{i}\bar p_{i\ell}=0$, then by Theorem~\ref{t:non0_ampc2} and Proposition~\ref{pr:non-concave}, the asymptotic MPC equals one and the optimal consumption policy is nonconcave in wealth.
\end{example}

\begin{example}[Optimal savings with {\sc iid} return and discounting]
	\label{ex:iid_ops}
	Suppose, as in \cite{benhabib2015wealth}, $\{\beta_{t}\}$ and $\{R_{t}\}$ 
	are {\sc iid} driven by innovation $\{\epsilon_t\}$ with positive lower 
	bounds, and $\{Y_t\}$ has a positive lower bound. 
	Without preference shocks, \cite{ma2021theory} show that
	\begin{equation*}
		\bar c(z) =\begin{cases}
			0, & \text{if } \EE \beta R^{1-\gamma}\geq 1 \\
			1 - (\EE \beta R^{1-\gamma})^{1/\gamma}, & \text{otherwise.}
		\end{cases}
	\end{equation*}
	With preference shocks, Theorems~\ref{t:0ampc_thm}--\ref{t:non0_ampc3} 
	imply that the asymptotic MPC is
	\begin{equation*}
		\bar c(z_{ij}) = \begin{cases}
			0, & \text{if $\sum_{\ell=1}^{i-1} \bar p_{i\ell} > 0$ or 
				$\bar p_{ii} \EE \beta R^{1-\gamma_i} \geq 1$,} \\
			1 - (\bar p_{ii} \EE \beta R^{1-\gamma_i})^{1/\gamma_i},
			& \text{otherwise.}
		\end{cases}
	\end{equation*}
	As in Example~\ref{ex:clas_ops}, vanishing asymptotic MPCs no longer require the classical condition $\EE \beta R^{1-\gamma}\ge 1$. Instead, they emerge whenever agents face a positive probability of becoming less risk averse in the future. Similarly, the asymptotic MPC converges to 
	one and the optimal consumption function is not concave in wealth at state 
	$\gamma_i$ when $\sum_{\ell=1}^{i} \bar p_{i\ell} = 0$. Therefore, risk preference dynamics become the primary driver of asymptotic saving behavior.
\end{example}

Together, these examples highlight that risk preference dynamics constitute a new and powerful determinant of long-run saving behavior.

\subsection{The Asymptotic Saving Rate}\label{ss:a_saving}

As mentioned earlier, we follow the literature and define an agent’s 
\textit{saving rate} as the change in net worth relative to total income, 
excluding capital losses (see the discussion in Footnote~\ref{fn:saving_rate}). By \eqref{eq:trans_at} and some simple algebra, the saving rate can be expressed as
\begin{equation*}
	s_{t+1} = 1 - \frac{(R_{t+1}-1)^-(1-c_t/w_t) + c_t/w_t
	}{(R_{t+1}-1)^+(1-c_t/w_t)+Y_{t+1}/w_t} \in (-\infty, 1),
\end{equation*}
where $x^+ = \max\{x,0\}$ and $x^- = -\min\{x,0\}$ for each $x\in \RR$.
Taking the limit as $w_t \to \infty$, we obtain the \textit{asymptotic saving rate}
\begin{equation}\label{eq:a_sr}
	\bar s := 1 - \frac{(\hat R -1)^-(1-\bar c) + \bar c
	}{(\hat R - 1)^+(1-\bar c)} \in [-\infty, 1],
\end{equation}
where $\bar c$ denotes the asymptotic MPC. 

Equation \eqref{eq:a_sr} highlights the intimate link between asymptotic MPCs and asymptotic saving rates. In particular, if wealth yields a positive gross return (i.e., $R_t>1$) and the asymptotic MPC is zero, the agent's asymptotic saving rate reaches 100\%. Conversely, if the asymptotic MPC is positive and sufficiently large, the asymptotic saving rate can become negative, reflecting that precautionary saving motives eventually vanish as wealth grows without bound. Thus, the asymptotic saving rate directly inherits the qualitative features of the asymptotic MPC, providing an intuitive metric for understanding long-run wealth accumulation under stochastic risk preferences.

\section{Quantitative Analysis}\label{s:quant}

In this section, we complement the theory with a quantitative analysis that serves two purposes. First, using a parsimonious quantitative specification, we provide numerical illustrations that render the main theoretical results transparent and economically intuitive. Second, within a more general quantitative environment, we compute generalized impulse response functions to elucidate the economic mechanisms through which preference shocks generate qualitative changes in model dynamics.

%Our quantitative findings can be summarized as follows. First, the IRF analysis confirms that shocks to risk aversion have economically significant effects on households’ consumption and saving behavior. Second, the magnitude and persistence of these effects depend crucially on the household’s current state. In particular, the responses vary markedly with both the prevailing level of risk aversion and the level of wealth. This state dependence highlights strong nonlinearities in the transmission of preference shocks and illustrates how dynamic risk attitudes reshape saving incentives over the wealth distribution. Taken together, these results underscore the importance of accounting for dynamically evolving risk preferences when studying household saving behavior.

\subsection{Model Specification}\label{ss:quant_specif}

Consider an optimal savings problem with risk aversion shocks. The discount factor and the gross rate of return on wealth are held constant over time with $\beta_t \equiv \beta$ and $R_t \equiv R$. Risk aversion follows a discretized AR(1) process:
\begin{equation*}
	\gamma_{t+1} = \mu_\gamma(1-\rho_\gamma) + \rho_\gamma \gamma_t 
	+ \sigma_\gamma \sqrt{1-\rho_\gamma^2} \epsilon_{t+1}^\gamma,
	\quad (\epsilon_t^\gamma)\iidsim N(0,1).
\end{equation*}
This specification is a special case of our theory, with the preference state given by $\bar Z_t = \gamma_t$. For nonfinancial income, we allow for alternative specifications depending on the quantitative experiment under consideration. The model is solved using the endogenous grid method of \citet{carroll2006method}. Detailed implementation and algorithmic steps are provided in the Online Appendix.

In our quantitative experiments, we work at an annual frequency and set $\beta = 0.95$ and $R = 1.02$. In our baseline setting, risk aversion is parameterized by $\rho_\gamma = 0.6$, $\mu_\gamma = 4$, and $\sigma_\gamma = 1$, and discretized into a finite-state Markov chain using the method of \citet{tauchen1986finite}. The number of discrete states is allowed to vary across experiments. In particular, for the impulse response analysis in Section~\ref{ss:quant_irf}, we employ a fine discretization to improve approximation accuracy. In all specifications, the minimum, median (and mean), and maximum values of $\gamma_t$ are $1$, $4$, and $7$, respectively, which lie well within the range commonly used in empirical research.

\subsection{The Asymptotic Saving Rates}\label{ss:quant_asym}

To better illustrate the implications of our theory, we begin by comparing our model with the standard optimal savings problem without preference shocks. To this end, we consider simply $Y_t \equiv 1$ and discretize $\{\gamma_t\}$ into a three-state Markov chain. The resulting state space and transition matrix are 
\begin{equation*}
	\gamma_t \in \{1, 4, 7\}
	\quad \text{and} \quad 
	\bar P = \begin{bmatrix}
		0.6416 & 0.3538 & 0.0001 \\
		0.0303 & 0.9392 & 0.0305 \\
		0.0001 & 0.3538 & 0.6461
	\end{bmatrix}.
\end{equation*}
The stationary mean of this discrete process coincides with the median state $\gamma_t = 4$. At this state, the probability of remaining at $\gamma_t = 4$ next period is 93.92\%, while the probability of transitioning to the lower risk aversion state $\gamma_t = 1$ is only 3.03\%.

To ensure a fair comparison, we contrast the optimal consumption function when $\gamma_t$ is fixed at its stationary mean $\EE \gamma_t = 4$ in our stochastic risk aversion model with that of the standard optimal savings problem featuring CRRA utility with a constant risk aversion coefficient $\gamma = 4$. Figure~\ref{fig:cons} plots the results.\footnote{The 
	main difference between the left and right panels lies in the range of wealth displayed. Moreover, because the right panel is set on a log scale, the consumption function appears nonconcave, though it is in fact concave. A log scale is similarly used in Figure~\ref{fig:MPC_sr} when required.}
Despite the fact that current risk aversion is identical across the two models and that there is a high probability of remaining at the same level of risk aversion next period, consumption is substantially lower under stochastic risk aversion (red dotted line) than under constant risk aversion (blue solid line), reflecting the precautionary behavior induced by the possibility of future changes in risk aversion.

Figure~\ref{fig:MPC_sr} plots the consumption rate, defined as $c^*(w,z)/w$, and the saving rate as a function of wealth. The left panel shows that, in the standard optimal savings model, the consumption rate (blue solid line) converges to the theoretical asymptotic MPC, which is strictly positive and equal to $1-(\beta R^{1-\gamma})^{1/\gamma} = 0.0273$ as derived in Example~\ref{ex:clas_ops}. By contrast, in the stochastic risk aversion model, the consumption rate exhibits a pronounced downward trend and continues to decline even at very high levels of wealth. This pattern is fully consistent with Theorem~\ref{t:0ampc_thm}, which establishes that the asymptotic MPC is equal to zero.

\begin{figure}%[htb!]
	\begin{subfigure}{.45\linewidth}
		\centering
		\includegraphics[width=\linewidth]{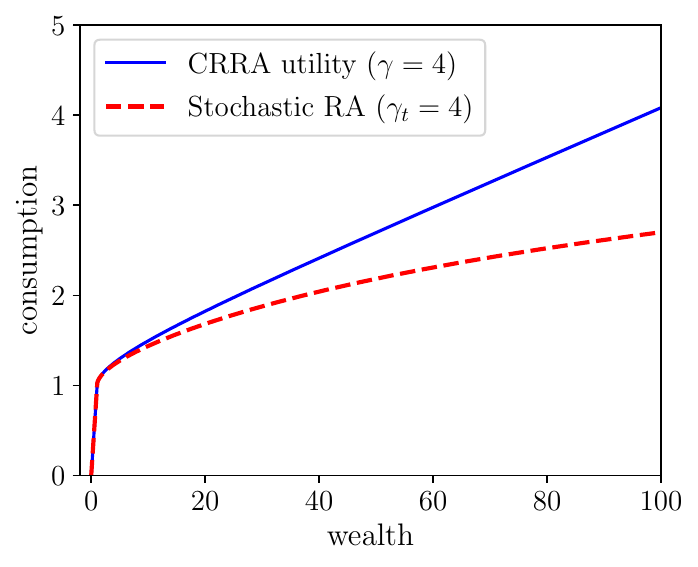}
	\end{subfigure}%
	\begin{subfigure}{.45\linewidth}
		\centering
		\includegraphics[width=\linewidth]{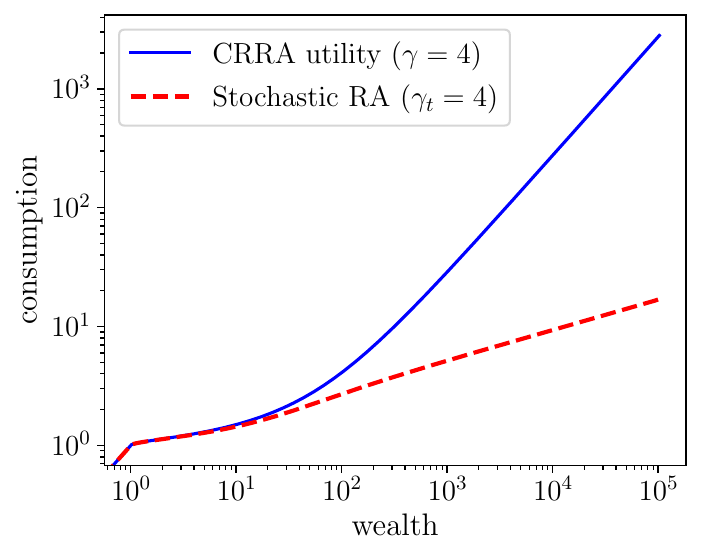}
	\end{subfigure}
	\caption{The optimal consumption function}\label{fig:cons}
\end{figure}

\begin{figure}%[htb!]
	\begin{subfigure}{.45\linewidth}
		\centering
		\includegraphics[width=\linewidth]{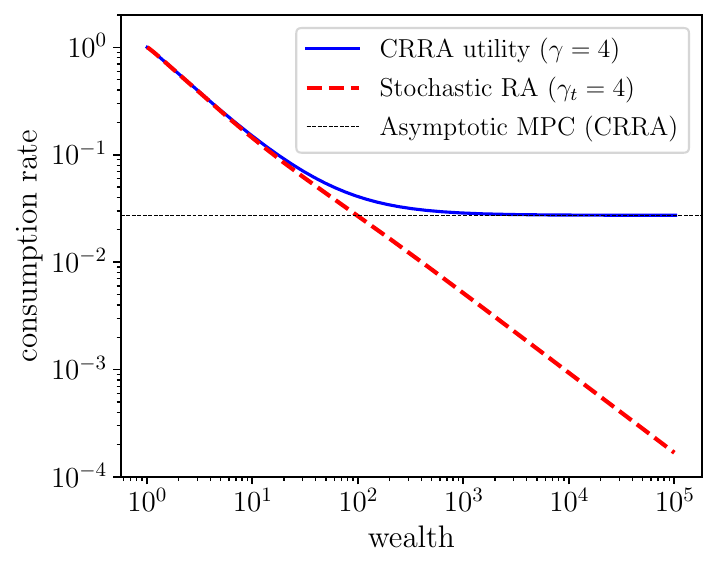}
	\end{subfigure}%
	\begin{subfigure}{.45\linewidth}
		\centering
		\includegraphics[width=\linewidth]{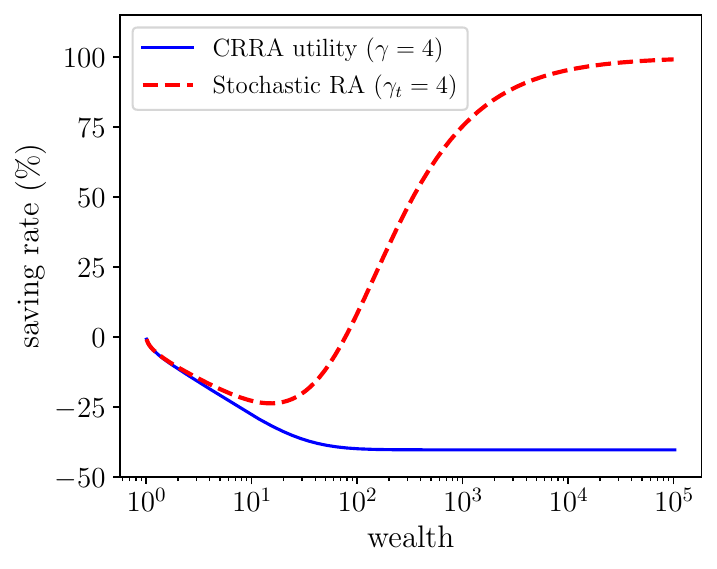}
	\end{subfigure}
	\caption{The consumption and saving rates}\label{fig:MPC_sr}
\end{figure}

The right panel plots the corresponding saving rates. In the standard optimal savings model, saving rates are negative and converge to their theoretical asymptotic value of $-40.31\%$, as given by \eqref{eq:a_sr}, in stark contradiction with the empirical evidence on wealthy households. In contrast, under stochastic risk aversion, the saving rate converges to its theoretical limit of $100\%$. Taken together, these results illustrate that even minor probabilities of shifts in risk aversion can have sizable effects on the saving behavior, suggesting that stochastic risk aversion offers a promising mechanism for generating persistently high saving rates among wealthy households, in line with the patterns observed in the data.

\subsection{Impulse Response Analysis}\label{ss:quant_irf}

In this section, we systematically examine how changes in risk aversion influence consumption and saving behavior using impulse response functions (IRFs). To capture the model’s nonlinear dynamics, we construct IRFs as state-dependent random variables, following  \citet{koop1996impulse}. These IRFs allow us to trace how shocks to risk aversion propagate over time and affect consumption, saving rates, and consumption volatility across different states.\footnote{Consumption 
	volatility in period-$t$ is defined as the conditional variance: 
	\begin{equation*}
		\Var_{t-1} [c^*(w_t,Z_t)] := 
		\EE_{t-1} [c^*(w_t,Z_t)]^2 - [\EE_{t-1} c^*(w_t,Z_t)]^2.
	\end{equation*}
}

We follow the standard earnings specification in the literature and assume that labor income evolves according to
\begin{align*}
	&\log Y_t = \eta_t + \nu_t, \quad (\nu_t) \sim N(0, \sigma_v^2), \\
	&\eta_{t+1} =  \rho_\eta \eta_t + \epsilon_t^\eta, 
	\quad (\epsilon_t^\eta) \sim N(0, \sigma_\eta^2).
\end{align*}
Consistent with the empirical estimates reported in \citet{lander2024estimating}, we set $\rho_\eta = 0.9735$, $\sigma_\eta = 0.1272$, and $\sigma_\nu = 0.2102$, based on PSID data over 1970-2012. 

To ensure sufficient numerical accuracy, we discretize $\{\gamma_t\}$ into a 121-state Markov chain following \citet{tauchen1986finite}. In addition, the persistent income component $\{\eta_t\}$ is discretized into 5 states, while expectations over the transitory shock $\{\nu_t\}$ are computed using a 7-point Gauss–Hermite quadrature. 
%We also conducted experiments using the discretization method of \cite{rouwenhorst1995asset}. 
Sensitivity analysis confirms that the results are robust to parametric choices. 

\begin{figure}%[htb!]
	\includegraphics[width=.987\linewidth]{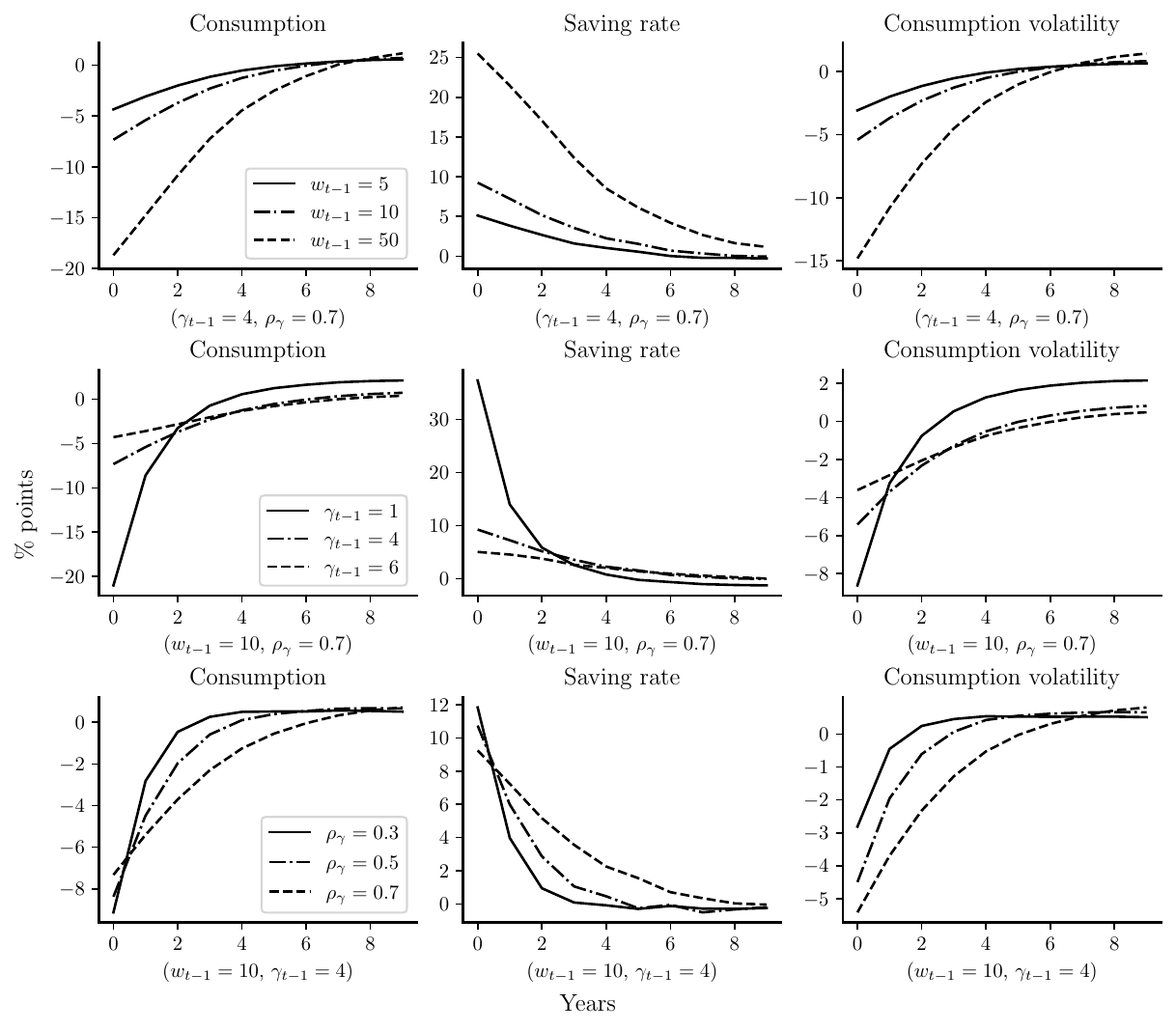}
	\caption{Impulse responses to a positive one-unit risk aversion shock}\label{fig:irf}
\end{figure}

Figure~\ref{fig:irf} reports impulse response functions to a one-unit increase in risk aversion at the beginning of period $t$. As noted earlier, the IRFs are computed as state-dependent responses, conditional on different initial levels of wealth, risk aversion, and the persistent income component $(w_{t-1},\gamma_{t-1},\eta_{t-1})$. All experiments are conducted conditional on $\eta_{t-1} = 0$, its stationary mean.

The top row of Figure~\ref{fig:irf} illustrates heterogeneity with respect to initial wealth, holding risk aversion fixed at its stationary mean $\gamma_{t-1} = 4$. The top-left panel shows that an increase in risk aversion induces an immediate and persistent decline in consumption, with the magnitude of the response rising sharply in wealth. When wealth rises from $w_{t-1} = 5$ to $10$ and $50$, contemporaneous consumption falls by 4.67\%, 7.89\%, and 20.23\%, respectively, with effects lasting roughly 2–6 years. This nonlinearity reflects the fact that low-wealth households are constrained in their ability to adjust savings, so consumption responds weakly to preference shocks, whereas wealthy households can reallocate intertemporal resources more flexibly.

The top-middle panel shows that this adjustment operates primarily through saving. A one-unit increase in risk aversion raises the saving rate by 5.81\%, 10.15\%, and 28.05\% at $w_{t-1} = 5, 10,$ and $50$, respectively, with effects persisting for 4–8 years. Thus, as wealth rises, the response shifts from limited consumption adjustment toward substantial increases in saving. This pattern aligns with our theoretical mechanism: at high wealth levels, precautionary motives dominate, prompting agents to accumulate assets aggressively even in response to transitory risk aversion shocks. The top-right panel further shows that consumption volatility declines following a positive risk aversion shock, particularly for wealthier households, reflecting stronger consumption smoothing driven by heightened precautionary behavior.

The middle row of Figure~\ref{fig:irf} illustrates heterogeneity with respect to initial risk aversion, holding initial wealth fixed at $w_{t-1} = 10$. When initial risk aversion is high ($\gamma_{t-1} = 6$), the same shock generates relatively modest responses: consumption falls by 4.84\% and the saving rate rises by 5.53\%. By contrast, when agents start from a low risk aversion state ($\gamma_{t-1} = 1$), consumption drops sharply by 19.26\% and saving increases by 32.21\%. This asymmetry reflects state-dependent amplification: a positive shock represents a larger proportional change in effective prudence when baseline risk aversion is lower, triggering a stronger reoptimization of intertemporal choices. Consistent with this mechanism, the decline in consumption volatility is also largest at low initial risk aversion.

The bottom row of Figure~\ref{fig:irf} examines the role of the persistence of risk aversion shocks, governed by $\rho_\gamma$, holding initial wealth and risk aversion fixed at $(w_{t-1},\gamma_{t-1})=(10,4)$. Higher persistence markedly amplifies and prolongs the responses of consumption and saving. As $\rho_\gamma$ increases, households cut consumption more sharply, raise saving more persistently, and experience larger declines in consumption volatility. Intuitively, greater persistence raises the expected duration of elevated risk aversion, strengthening precautionary motives and leading agents to treat preference shocks as quasi-permanent. This persistence channel thus reinforces the state-dependent and nonlinear dynamics emphasized by our theory.

Taken together, these impulse responses deliver three key insights. First, shocks to risk aversion generate economically large and persistent movements in consumption and saving. Second, these effects are highly state dependent, varying systematically with households’ wealth, the current level of risk aversion, and the persistence of preference shocks. Third, the IRFs provide a quantitative illustration of the core theoretical mechanism: stochastic risk preferences alone can fundamentally reshape saving behavior, even in environments with constant returns and discounting, accounting for the high and persistent saving rates observed among wealthy households.

\appendix

\section{Preliminaries}

Let $(\Omega, \fF, \PP)$ be a fixed probability space on which all random 
variables are defined, and let $\EE$ denote expectations under $\PP$. The state 
process $\{Z_t\}$ and the innovation process $\{\epsilon_t\}$ introduced in 
\eqref{eq:law_exog} are defined on this space, with $\pi$ denoting the marginal 
distribution of $\{\epsilon_t\}$. Let $\{\fF_t\}$ be the natural filtration 
generated by $\{Z_t\}$ and $\{\epsilon_t\}$. For $z\in \ZZ$, we write $\PP_z$ 
for probability conditional on $Z_0 = z$ and $\EE_z$ for expectation under 
$\PP_z$.

For a stochastic process $\phi_t = \phi(Z_{t-1}, Z_t, \epsilon_t)$, where 
$\phi$ is a nonnegative measurable function, we define the matrix $L_\phi$ as
\begin{equation}\label{eq:lfunc}
	L_\phi(z, \hat z) := 
	P(z, \hat z) \int \phi(z, \hat z, \hat \epsilon) \pi (\diff \hat \epsilon). 
\end{equation}
The matrix $L_\phi$ is expressed as a function on $\ZZ \times \ZZ$ in 
\eqref{eq:lfunc}, but it can be represented in traditional matrix notation by 
enumerating $\ZZ$. Recall that $\RR^{\ZZ}$ is the set of real-valued functions 
on $\ZZ$. In what follows, we treat $L_\phi$ as a linear operator on 
$\RR^{\ZZ}$. A proof by induction shows that, for any $h \in \RR^{\ZZ}$, we have
\begin{equation}
	\label{eq:ln_ope}
	(L_\phi^n \, h)(z) 
	= \EE_z \prod_{t=1}^n \phi_t h(Z_t),
\end{equation}
where $L_\phi^n$ is the $n$-th composition of the operator $L_\phi$ with
itself, or equivalently, the $n$-th power of the matrix $L_\phi$. The following result holds. The proof is similar to the proof of Lemma~4.1 of \cite{ma2020income} and thus omitted.

\begin{lemma}\label{lm:Lphi}
	Let $\{\phi_t \}$ and $L_\phi$ be as defined in \eqref{eq:lfunc}. If
	$r(L_\phi) < 1$, then there exist $N \in \NN$ and $\sigma \in (r(L_\phi), 1)$ such that $\max_{z \in \ZZ} \EE_z \prod_{t=1}^n \phi_t < \sigma^n$ for all $n \geq N$.
\end{lemma}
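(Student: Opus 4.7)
The plan is to reduce the claim to Gelfand's spectral radius formula applied to the matrix $L_\phi$. The key conceptual step is to identify the quantity $\max_{z\in\ZZ}\EE_z \prod_{t=1}^n \phi_t$ with a suitable operator norm of $L_\phi^n$; once that identification is made, the inequality is immediate from the hypothesis $r(L_\phi)<1$.

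Concretely, first I would apply identity \eqref{eq:ln_ope} with $h\equiv \1$ (the constant function equal to one on $\ZZ$) to obtain
\begin{equation*}
(L_\phi^n\,\1)(z)=\EE_z\prod_{t=1}^n\phi_t.
\end{equation*}
Because $\phi$ is nonnegative, every entry of $L_\phi$, and hence of $L_\phi^n$, is nonnegative. The operator norm induced on $\RR^\ZZ$ by the supremum norm equals the maximum absolute row sum, which for a nonnegative matrix reduces to the maximum row sum $\max_{z\in\ZZ}(L_\phi^n\,\1)(z)$. Combining these two observations yields
\begin{equation*}
\|L_\phi^n\|_\infty = \max_{z\in\ZZ}\EE_z\prod_{t=1}^n\phi_t.
\end{equation*}

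Next I would invoke Gelfand's formula, which gives $\lim_{n\to\infty}\|L_\phi^n\|_\infty^{1/n}=r(L_\phi)$. Since $r(L_\phi)<1$, fix any $\sigma\in(r(L_\phi),1)$. By the definition of the limit, there exists $N\in\NN$ such that $\|L_\phi^n\|_\infty^{1/n}<\sigma$ for all $n\geq N$, equivalently $\max_{z\in\ZZ}\EE_z\prod_{t=1}^n\phi_t<\sigma^n$, which is exactly the claimed inequality.

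There is no substantive obstacle: the argument is a standard combination of \eqref{eq:ln_ope}, the correspondence between row sums and the sup-norm operator norm for nonnegative matrices, and Gelfand's formula. The only point requiring a small amount of care is the reduction of the maximum absolute row sum to the maximum row sum, which uses the nonnegativity of $\phi$ (and thus of the entries of $L_\phi^n$); this is also what ensures $\|L_\phi^n\|_\infty$ is truly the right object to plug into Gelfand's formula for our purpose.
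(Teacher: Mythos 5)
Your proof is correct and follows essentially the same route as the paper: both identify $\max_{z\in\ZZ}\EE_z\prod_{t=1}^n\phi_t$ with $\max_{z}(L_\phi^n\,\1)(z)$ via \eqref{eq:ln_ope} and then extract $N$ and $\sigma$ from a spectral-radius limit formula. The only difference is the citation: the paper invokes a local spectral radius result for positive operators (Theorem~9.1 of Krasnosel'skii et al.\ with $h\equiv\1$), whereas you use Gelfand's formula together with the observation that for a nonnegative matrix the $\infty$-operator norm equals the maximum row sum $\max_z(L_\phi^n\,\1)(z)$ --- an equally valid and, for finite matrices, arguably more elementary justification of the same limit.
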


%\begin{proof}
%	The nonnegativity of $L_\phi$ and Theorem~9.1 of 
%	\cite{krasnosel2012approximate} imply that $r(L_\phi) = \lim_{n \to \infty} 
%	\| L_\phi^n \, h \|^{1/n}$, where $\| \cdot \|$ is any norm on $\RR^{\ZZ}$ 
%	and $h$ is a strictly positive function on $\ZZ$. For $h\equiv 1$ and 
%	$\|f\| = \max_{z\in \ZZ} |f(z)|$, we have
%	%
%	\begin{equation}\label{eq:lbr2}
%		r(L_\phi) 
%		= \lim_{n \to \infty} 
%		\left( \max_{z \in \ZZ} \, L_\phi^n \, \1(z) \right)^{1/n}
%		= \lim_{n \to \infty} 
%		\left(\max_{z \in \ZZ} \EE_z \prod_{t=1}^n \phi_t \right)^{1/n}.
%	\end{equation}
%	%
%	Since $r(L_\phi)<1$, it follows by the definition of the limit that the 
%	stated claim holds. This completes the proof.
%\end{proof}

\section{Proof of Section~\ref{s:or} Results}
\label{s:proof_or}

Throughout this section, we assume that
Assumptions~\ref{a:utility}--\ref{a:spec_and_finite_exp} 
hold. Without loss of generality, we may assume $\ZZ = \{1, \dots, Z\}$ in the 
related proofs of this section.

\begin{proposition}
	\label{pr:complete}
	$(\cC, \rho)$ is a complete metric space.
\end{proposition}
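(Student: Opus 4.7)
The plan is to verify the metric axioms first, then prove completeness by lifting the problem to the space of marginal utilities, where the metric $\rho$ becomes an ordinary sup norm. Non-negativity, symmetry, finiteness, and the triangle inequality follow immediately from the corresponding properties of the supremum norm on $\RR^{\SS_0}$, with finiteness already justified in the main text via \eqref{eq:upc_bd}. The only delicate axiom is the identity of indiscernibles, which uses $u''(\cdot,z)<0$ from Assumption~\ref{a:utility}: since $u'(\cdot,z)$ is then strictly monotone, hence injective, on $(0,\infty)$, the condition $\rho(c_1,c_2)=0$ forces $c_1\equiv c_2$ on $\SS_0$.

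For completeness, I would take a Cauchy sequence $\{c_n\}\subset\cC$ and set $g_n(w,z):=u'(c_n(w,z),z)$. By the definition of $\rho$, $\{g_n\}$ is Cauchy in the uniform norm on $\SS_0$, hence converges uniformly to some $g:\SS_0\to\RR$. I would then define the candidate limit by $c(w,z):=u'(\cdot,z)^{-1}(g(w,z))$ and verify membership in $\cC$. Monotonicity and continuity transfer easily: each $g_n(\cdot,z)$ is nonincreasing (composition of nondecreasing $c_n(\cdot,z)$ with the strictly decreasing $u'(\cdot,z)$), so $g$ is also nonincreasing in $w$, and inverting by the strictly decreasing continuous map $u'(\cdot,z)^{-1}$ makes $c(\cdot,z)$ nondecreasing and continuous; continuity in the discrete variable $z$ is automatic since $\ZZ$ is finite.

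The core technical step is ensuring that $g(w,z)$ stays strictly inside the range of $u'(\cdot,z)$, so that $c(w,z)$ is well defined with $0<c(w,z)\leq w$ and satisfies \eqref{eq:upc_bd}. I would handle all three simultaneously by a uniform two-sided bound: fixing any $c_1\in\cC$ with $\sup_{(w,z)}|g_1(w,z)-u'(w,z)|\leq M_1$ and using Cauchyness to get $\sup_n\sup_{(w,z)}|g_n(w,z)-g_1(w,z)|\leq M_2$, the triangle inequality yields $\sup_{(w,z)}|g_n(w,z)-u'(w,z)|\leq M_1+M_2$ uniformly in $n$, and passing to the limit gives $\sup_{(w,z)}|g(w,z)-u'(w,z)|\leq M_1+M_2$, which is \eqref{eq:upc_bd}. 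Meanwhile, $c_n(w,z)\leq w$ implies $g_n(w,z)\geq u'(w,z)$, and hence in the limit $g(w,z)\geq u'(w,z)>\lim_{c\to\infty}u'(c,z)$, where the strict inequality holds because $u'(\cdot,z)$ is strictly decreasing and $w<\infty$. Together with the finite upper bound, this places $g(w,z)$ strictly inside the range $(\lim_{c\to\infty}u'(c,z),\infty)$ of $u'(\cdot,z)$, making $c(w,z)$ well defined and satisfying $0<c(w,z)\leq w$.

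The main obstacle, therefore, is precisely this range control: ruling out degeneration of the limit policy to $0$ or $\infty$, which is what makes the space $\cC$ more subtle than it first appears. Once it is secured, the convergence $\rho(c_n,c)=\sup_{(w,z)}|g_n(w,z)-g(w,z)|\to0$ is immediate from the uniform convergence $g_n\to g$, completing the proof.
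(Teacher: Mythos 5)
Your proof is correct and follows exactly the standard route that the paper's omitted proof refers to (Proposition~4.1 of \cite{li2014solving}): transfer the Cauchy sequence to marginal-utility space where $\rho$ is the sup metric, obtain the uniform limit $g$, and use the two-sided bound $u'(w,z)\leq g(w,z)\leq u'(w,z)+M_1+M_2$ to keep $g$ inside the range of $u'(\cdot,z)$ so that the candidate $c=u'(\cdot,z)^{-1}\circ g$ lies in $\cC$. The only cosmetic point is that continuity of the limit policy should be attributed to the uniform convergence of the continuous $g_n$ plus continuity of $u'(\cdot,z)^{-1}$, rather than to monotonicity; everything else, including the verification of \eqref{eq:upc_bd} and $0<c(w,z)\leq w$, is complete.
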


\begin{proof}
	The proof is a straightforward extension of Proposition~4.1 of \cite{li2014solving} and thus omitted. 
	A full proof is available from the authors upon request.
\end{proof}

\begin{proposition}\label{pr:welldef_T}
	For all $c \in \cC$ and $(w,z) \in \SS$, there exists a unique $\xi \in (0,w]$ that solves \eqref{eq:tio}.
\end{proposition}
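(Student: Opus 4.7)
The plan is to show that the equation~\eqref{eq:tio} reduces to a one-variable scalar equation $F(\xi)=0$ on $(0,w]$, where $F$ is strictly decreasing and continuous, with $F(0^+)=+\infty$ and $F(w)\le 0$, so that a unique root exists by the intermediate value theorem.

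First, I would define
\begin{equation*}
H(\xi):=\max\left\{\EE_z \hat\beta \hat R u'\bigl(c(\hat R(w-\xi)+\hat Y,\hat Z),\hat Z\bigr),\, u'(w,z)\right\}
\end{equation*}
for $\xi\in(0,w]$ and verify it is finite-valued. Since $c\in\cC$ and $\hat w:=\hat R(w-\xi)+\hat Y\ge \hat Y$, monotonicity of $c(\cdot,\hat Z)$ and the fact that $u'(\cdot,\hat Z)$ is decreasing yield
\begin{equation*}
u'\bigl(c(\hat w,\hat Z),\hat Z\bigr)\le u'\bigl(c(\hat Y,\hat Z),\hat Z\bigr)\le u'(\hat Y,\hat Z)+M,
\end{equation*}
where $M$ is the finite bound in~\eqref{eq:upc_bd} applied to $c$. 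Multiplying by $\hat\beta\hat R$ and taking $\EE_z$, the first term is finite by Assumption~\ref{a:spec_and_finite_exp}\eqref{ac:fin_exp}; the second is finite because $\EE_z\hat\beta\hat R=\sum_{\hat z}K(1)_{z,\hat z}$ is necessarily finite whenever $r(K(1))<1$.

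Next I would establish two structural properties of $H$. Monotonicity: because $c(\cdot,\hat Z)$ is increasing and $u'(\cdot,\hat Z)$ is decreasing, the integrand is a.s.\ nondecreasing in $\xi$, hence so is the expectation, and thus $H$ is nondecreasing. Continuity: the same dominating envelope $\hat\beta\hat R[u'(\hat Y,\hat Z)+M]$ permits dominated convergence, using continuity of $c(\cdot,\hat Z)$ and of $u'(\cdot,\hat Z)$, to conclude that the expectation is continuous in $\xi$ on $(0,w]$, and taking the max with the constant $u'(w,z)$ preserves continuity. Meanwhile, $\xi\mapsto u'(\xi,z)$ is continuous and strictly decreasing on $(0,w]$, with $u'(\xi,z)\to\infty$ as $\xi\to 0^+$ by Assumption~\ref{a:utility}, and $u'(w,z)$ at $\xi=w$.

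Finally, I would set $F(\xi):=u'(\xi,z)-H(\xi)$. Then $F$ is continuous and strictly decreasing (strict because $u'(\cdot,z)$ is strictly decreasing while $H$ is nondecreasing), with $F(\xi)\to+\infty$ as $\xi\to 0^+$ and $F(w)=u'(w,z)-H(w)\le 0$ by construction of $H$. The intermediate value theorem delivers a unique $\xi\in(0,w]$ with $F(\xi)=0$, which is the desired solution of~\eqref{eq:tio}. The one potentially delicate step is the dominated-convergence argument for continuity of $H$, but the uniform envelope derived from the $\cC$-bound together with Assumption~\ref{a:spec_and_finite_exp}\eqref{ac:fin_exp} makes this routine; the remaining pieces are direct consequences of the monotonicity properties built into $\cC$ and of Assumption~\ref{a:utility}.
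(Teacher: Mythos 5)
Your proposal is correct and follows essentially the same route as the paper's proof: both reduce \eqref{eq:tio} to a scalar equation, establish that the right-hand side is nondecreasing and continuous in $\xi$ via the $\cC$-envelope $u'(\hat Y,\hat Z)+M$ together with Assumption~\ref{a:spec_and_finite_exp}, and then invoke the intermediate value theorem with uniqueness coming from the strict monotonicity of $u'(\cdot,z)$. The only difference is presentational (you package the endpoint conditions into a single function $F$ rather than stating them separately).
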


\begin{proof}
	For given $c \in \cC$, we can rewrite \eqref{eq:tio} as 
	\begin{equation}\label{eq:psi_c}
		u'(\xi, z) 
		= \psi_c(\xi, w,z) := \max \{g_c(\xi, w,z), u'(w, z)\},
	\end{equation}
	where $g_c$ is a function on 
	\begin{equation}\label{eq:G_set}
		G:= \{(\xi,w,z)\in \RR_+\times \SS: 0 < \xi \leq w \}
	\end{equation}
	defined by
	\begin{equation}\label{eq:g_c}
		g_c(\xi,w,z) := \EE_z \hat \beta \hat R u'\left(
		    c(\hat R(w-\xi) + \hat Y, \hat Z), \hat Z
		\right).
	\end{equation}
	Fix $c \in \cC$ and $(w,z) \in \SS$. Since $c \in \cC$, the map 
	$\xi \mapsto \psi_c(\xi,w,z)$ is increasing. Additionally, because 
	$\xi \mapsto u'(\xi, z)$ is strictly decreasing, \eqref{eq:tio} can 
	have at most one solution. Therefore, uniqueness holds.
	
	To establish existence, we apply the intermediate value theorem. It 
	suffices to verify the following three conditions:
	\begin{enumerate}
		\item[(a)] the map $\xi \mapsto \psi_c(\xi, w, z)$ is continuous on $(0, w]$,
		\item[(b)] there exists a $\xi \in (0,w]$ such that 
		$u'(\xi, z) \geq \psi_c(\xi, w, z)$, and
		\item[(c)] there exists a $\xi \in (0,w]$ such that 
		$u'(\xi, z) \leq \psi_c(\xi, w, z)$.
	\end{enumerate}
	For part (a), it is sufficient to show that $\xi \mapsto g_c(\xi,w,z)$ is 
	continuous on $(0,w]$. To this end, fix $\xi \in (0,w]$ and let $\xi_n \to 
	\xi$. Since $c\in \cC$, there exists a constant $M \in \RR_+$ such that 
	\begin{equation*}
		u'(w,z) \leq u'(c(w,z),z) \leq u'(w,z) + M
		\quad \text{for all } (w,z) \in \SS.
	\end{equation*}
	By the monotonicity of $u'$, we have
	\begin{equation}\label{eq:uppbd_ruprmc}
		\hat{\beta} \hat{R} 
		    u'[c (\hat{R} \left(w - \xi \right) + \hat{Y}, \hat{Z}),
		    \hat Z]
		\leq \hat{\beta} \hat{R} 
		    u'[c(\hat{Y}, \hat{Z}), \hat Z]
		\leq \hat{\beta} \hat{R} u'( \hat{Y}, \hat Z) 
		    + \hat{\beta} \hat{R} M.
	\end{equation}
	Furthermore, by Assumption~\ref{a:spec_and_finite_exp}, we know that, 
	for all $z\in \ZZ$,
	\begin{equation*}
		\EE_z \hat \beta \hat R u'(\hat Y, \hat Z) < \infty
		\quad \text{and} \quad
		\EE_z \hat \beta \hat R < \infty.
	\end{equation*}
	Hence, by the dominated convergence theorem and the continuity of $c$, we 
	can conclude that $g(\xi_n,w,z) \to g(\xi,w,z)$. This proves that 
	$\xi \mapsto \psi_c(\xi, w, z)$ is continuous.
	
	Part~(b) clearly holds, since $u'(\xi,z) \to \infty$ as $\xi \to 0$ 
	and $\xi \mapsto \psi_c(\xi, w, z)$ is increasing and always finite (since 
	it is continuous as shown in the previous paragraph). Part~(c) is also 
	trivial (just set $\xi = w$).
\end{proof}

\begin{proposition}\label{pr:self_map}
	We have $Tc \in \cC$ for all $c \in \cC$.
\end{proposition}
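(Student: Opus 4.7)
The definition of $\cC$ imposes four requirements on $Tc$: (i) $0 < Tc(w,z) \leq w$, (ii) continuity, (iii) $w \mapsto Tc(w,z)$ increasing, and (iv) the uniform bound \eqref{eq:upc_bd}. Item (i) is immediate from Proposition~\ref{pr:welldef_T}. For (iii), I would exploit the defining identity $u'(Tc(w,z),z) = \psi_c(Tc(w,z),w,z)$, observing that $\psi_c(\xi,w,z) = \max\{g_c(\xi,w,z), u'(w,z)\}$ is increasing in $\xi$ and decreasing in $w$: since $c \in \cC$ is increasing, raising $\xi$ lowers next-period wealth and hence raises $u'(c(\cdot),\cdot)$, while raising $w$ does the opposite and also lowers $u'(w,z)$. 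A contradiction argument then completes (iii): if $w_1<w_2$ but $Tc(w_1,z) > Tc(w_2,z)$, strict monotonicity of $u'$ forces $u'(Tc(w_1,z),z) < u'(Tc(w_2,z),z)$, whereas the chain $\psi_c(Tc(w_1,z),w_1,z) \geq \psi_c(Tc(w_2,z),w_1,z) \geq \psi_c(Tc(w_2,z),w_2,z)$ gives the reverse.

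For (ii), continuity in $z$ is vacuous since $\ZZ$ carries the discrete topology, so I would fix $z$ and consider $w_n \to w > 0$. Because $Tc(w_n,z) \in (0,w_n]$ is bounded, every subsequence has a further subsequence converging to some $\xi^* \in [0,w]$. The possibility $\xi^* = 0$ is ruled out by noting that $u'(Tc(w_n,z),z) \to \infty$ along such a subsequence by Assumption~\ref{a:utility}, whereas $\psi_c(Tc(w_n,z),w_n,z)$ stays bounded: joint continuity of $\psi_c$ follows from a dominated convergence argument parallel to the one in Proposition~\ref{pr:welldef_T}, with dominating function $\hat\beta\hat R[u'(\hat Y,\hat Z) + M]$ integrable under Assumption~\ref{a:spec_and_finite_exp}~\eqref{ac:fin_exp} (here $M$ is the uniform bound associated with $c \in \cC$ via \eqref{eq:upc_bd}). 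Passing to the limit in $u'(Tc(w_n,z),z) = \psi_c(Tc(w_n,z),w_n,z)$ then yields $u'(\xi^*,z) = \psi_c(\xi^*,w,z)$, so $\xi^* = Tc(w,z)$ by the uniqueness part of Proposition~\ref{pr:welldef_T}. A standard subsequence-of-subsequence argument delivers continuity.

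The main obstacle is (iv), where the candidate-class structure and the integrability hypotheses must interact. Since $Tc(w,z) \leq w$ and $u'(\cdot,z)$ is decreasing, the absolute value collapses to $u'(Tc(w,z),z) - u'(w,z) = \max\{g_c(Tc(w,z),w,z) - u'(w,z),\,0\} \leq g_c(Tc(w,z),w,z)$. To bound the latter uniformly, I would combine the candidate-space estimate $u'(c(w',z'),z') \leq u'(w',z') + M$ with $\hat R(w - Tc(w,z)) + \hat Y \geq \hat Y$ and monotonicity of $u'$ to obtain $g_c(Tc(w,z),w,z) \leq \EE_z \hat\beta\hat R\,u'(\hat Y,\hat Z) + M\,\EE_z\hat\beta\hat R$. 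The first term is finite by Assumption~\ref{a:spec_and_finite_exp}~\eqref{ac:fin_exp}, and the second is $M(K(1)\1)(z)$, whose finiteness is implicit in the spectral bound \eqref{ac:spec}. The resulting bound is independent of $w$, and since $\ZZ$ is finite, taking the supremum over $z$ preserves finiteness, establishing \eqref{eq:upc_bd} and concluding $Tc \in \cC$.
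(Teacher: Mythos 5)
Your proposal is correct and follows essentially the same four-part decomposition as the paper's proof: feasibility from Proposition~\ref{pr:welldef_T}, the identical contradiction argument for monotonicity using that $\psi_c$ is increasing in $\xi$ and decreasing in $w$, and the same domination bound $\EE_z \hat\beta\hat R\,u'(\hat Y,\hat Z) + M\,\EE_z\hat\beta\hat R$ for \eqref{eq:upc_bd}. The only cosmetic difference is in the continuity step, where you unpack the parametric continuity theorem (which the paper invokes as Theorem~B.1.4 of Stachurski) into an explicit subsequence argument resting on the same ingredients, namely joint continuity of $\psi_c$ and uniqueness of the solution to \eqref{eq:tio}.
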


\begin{proof}
	Fix $c \in \cC$ and let $\psi_c$ and $g_c$ be defined as in \eqref{eq:psi_c} and \eqref{eq:g_c}, respectively.
	
	\textbf{Step~1.} We show that $Tc$ is continuous. To apply a standard 
	fixed point parametric continuity result, such as Theorem~B.1.4 of 
	\cite{stachurski2009economic}, we first demonstrate that $\psi_c$ is jointly 
	continuous on the set $G$ defined in \eqref{eq:G_set}. This will hold
	if $g_c$ is jointly continuous on $G$. For any $\{ (\xi_n, w_n, z_n) \}$ 
	and $(\xi, w, z)$ in $G$ with $(\xi_n, w_n, z_n) \to (\xi, w, z)$, we 
	need to show that $g_c(\xi_n, w_n, z_n) \to g_c(\xi, w, z)$. To this 
	end, we define
	\begin{align*}
		h_1 ( \xi, w, Z, \hat{Z}, \hat{\epsilon}), \,
		h_2 ( \xi, w, Z, \hat{Z}, \hat{\epsilon})  
		:= \hat{\beta} \hat{R} \left(
		    u'(\hat Y,\hat Z) + M \pm 
		    u'(c(\hat{R} \left( w - \xi \right) + \hat{Y}, \hat{Z}), \hat Z) 
		\right),
	\end{align*}
	where $\hat{\beta} := \beta (Z, \hat{Z}, \hat{\epsilon})$, 
	$\hat{R} := R (Z, \hat{Z}, \hat{\epsilon})$ and 
	$\hat{Y} := Y (Z, \hat{Z}, \hat{\epsilon})$ as defined in 
	\eqref{eq:law_exog}. Then $h_1$ and $h_2$ are continuous in 
	$(\xi, w, Z, \hat{Z})$ by the continuity of $c$  and nonnegative by 
	\eqref{eq:uppbd_ruprmc}.
	
	By Fatou's lemma and Theorem~1.1 of \cite{feinberg2014fatou}, we have
	\begin{align*}
		\int \sum_{\hat z \in \ZZ}
		h_i ( \xi, w, z, \hat{z}, \hat{\epsilon})
		P(z,\hat{z}) \pi(\diff \hat{\epsilon}) 
		&\leq \int \liminf_{n \to \infty}
		\sum_{\hat z \in \ZZ}
		h_i ( \xi_n, w_n, z_n, \hat{z}, \hat{\epsilon})
		P(z_n, \hat{z})
		\pi(\diff \hat{\epsilon})  \\
		& \leq \liminf_{n \to \infty}
		\int 
		\sum_{\hat z \in \ZZ}
		h_i ( \xi_n, w_n, z_n, \hat{z}, \hat{\epsilon})
		P( z_n, \hat z)
		\pi (\diff \hat{\epsilon}),
	\end{align*}
	which implies 
	\begin{equation*}
		\liminf_{n \to \infty} \left[
		\pm \EE_{z_n} \hat{\beta} \hat{R} 
		u'(c(\hat{R} \left(w_n - \xi_n \right) + \hat{Y}, \hat{Z}),
		    \hat Z)
		\right]
		\geq \left[ 
		\pm \EE_{z} \hat{\beta} \hat{R}
		u'(c(\hat{R} \left(w - \xi \right) + \hat{Y}, \hat{Z}), 
		    \hat Z)
		\right].
	\end{equation*}
	This shows that $g_c$ is then continuous, since the inequality above is 
	equivalent to the statement
	\begin{equation*}
		\liminf_{n \to \infty} g_c(\xi_n, w_n, z_n)   
		\geq g_c(\xi, w, z) \geq   
		\limsup_{n \to \infty} g_c(\xi_n, w_n, z_n).
	\end{equation*}
	Hence, $\psi_c$ is continuous on $G$, as required. Moreover, since $\xi 
	\mapsto \psi_c(\xi, w, z)$ takes values in the closed interval 
	\begin{equation*}
		I(w,z) := \left[
		    u'(w,z), 
		    u'(w,z) + 
		    \EE_z \hat{\beta} \hat{R} (u'(\hat{Y}, \hat Z) + M)
		\right]
	\end{equation*}
	and the correspondence $(w, z) \mapsto I(w,z)$ is nonempty, compact-valued 
	and continuous, Theorem~B.1.4 of \cite{stachurski2009economic} implies that
	$Tc$ is continuous on $\SS$.
	
	\textbf{Step 2.} We show that $Tc$ is increasing in $w$. Suppose that for 
	some $z \in \ZZ$ and $w_1, w_2 \in (0, \infty)$ with $w_1 < w_2$, we have 
	$\xi_1 := Tc (w_1,z) > Tc (w_2,z) =: \xi_2$. Since $c$ is increasing in $w$ 
	by assumption, $\psi_c$ is increasing in $\xi$ and decreasing in $w$. Thus, 
	we have $u'(\xi_1,z) < u'(\xi_2,z) = \psi_c(\xi_2, w_2, z) \leq \psi_c(\xi_1, w_1, z) = u'(\xi_1,z)$, which is a contradiction.
	
	\textbf{Step 3.} We have shown in Proposition~\ref{pr:welldef_T} that $Tc(w,z) \in (0,w]$ for all $(w,z) \in \SS$.
	
	\textbf{Step 4.} Since $u'[Tc(w,z),z] \geq u'(w, z)$, we have
	\begin{align*}
		&\left| u'[Tc(w,z), z] - u'(w, z) \right| 
		= u'[Tc(w,z),z] - u'(w,z)    \\
		& \leq \EE_{z} \hat{\beta} \hat{R} 
		u'\,[c (\hat{R}( w - Tc(w,z)) + \hat{Y}, \, \hat{Z}), 
		    \hat Z \,]
		\leq \EE_{z} \hat{\beta} \hat{R} (u'(\hat{Y},\hat Z) + M)
	\end{align*}
	for all $(w,z) \in \SS$. The last term is finite by 
	Assumption~\ref{a:spec_and_finite_exp}.
\end{proof}

For each $h:(0,\infty) \to \RR_+^{\ZZ}$, we write 
\begin{equation*}
    h(w) = (h_1(w), \dots, h_Z(w)).	
\end{equation*}
To prove Theorem~\ref{t:opt}, let $\hH$ denote 
the set of all continuous functions $h:(0,\infty) \to \RR_+^{\ZZ}$ such that 
each $h_z$ is decreasing and $w \mapsto h_z(w)-u'(w,z)$ is bounded and 
nonnegative. For any $h \in \hH$, we define $(\tilde{T} h)_z(w)$ as the value 
$\kappa$ that solves
\begin{equation}\label{eq:kappa}
	\kappa = \max \left\{ 
	    \EE_{z} \, \hat{\beta} \hat{R} \,
	    h (\hat{R} \,[w - (u'(\cdot,z))^{-1}(\kappa)] + \hat{Y}, \hat{Z}), 
	    u'(w,z) 
	\right\},
\end{equation}
where, with a slight abuse of notation, we denote $h_z(w) := h(w,z)$.
Moreover, we consider the bijection $H: \cC \to \hH$ defined by 
$(Hc)_z(w) := u'(c(w,z),z)$. The next lemma implies that $\tilde{T}$ is 
a well-defined self-map on $\hH$, as well as topologically conjugate to $T$.

\begin{lemma}\label{lm:conjug}
	The operator $\tilde{T} \colon \hH \to \hH$ and satisfies $\tilde{T} H = H 
	T$ on $\cC$.
\end{lemma}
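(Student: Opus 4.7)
The plan is to derive both claims---that $\tilde T$ is a self-map on $\hH$ and that $\tilde T H = H T$ on $\cC$---simultaneously, by recognizing that \eqref{eq:kappa} is merely a change of variables of the equation \eqref{eq:tio} defining $T$. Granted this, all of the analytic heavy lifting has already been done in Propositions~\ref{pr:welldef_T} and \ref{pr:self_map}, and the lemma reduces to careful bookkeeping together with an appeal to the stated bijectivity of $H$.

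First I would fix $c \in \cC$, set $h := Hc \in \hH$, pick $(w,z) \in \SS_0$, and let $\xi := Tc(w,z)$, which is the unique element of $(0,w]$ solving \eqref{eq:tio} by Proposition~\ref{pr:welldef_T}. Since $u'(\cdot,z)$ is strictly decreasing on $(0,\infty)$ with $u'(c,z) \to \infty$ as $c \to 0^+$ (Assumption~\ref{a:utility}), the map $\xi \mapsto \kappa := u'(\xi,z)$ is a bijection from $(0,w]$ onto $[u'(w,z),\infty)$ with inverse $\xi = (u'(\cdot,z))^{-1}(\kappa)$. Substituting this inverse into \eqref{eq:tio}, and using the identity $u'(c(\hat w,\hat Z),\hat Z) = h(\hat w,\hat Z)$ to rewrite the conditional expectation, transforms \eqref{eq:tio} verbatim into \eqref{eq:kappa}. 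Consequently $\kappa = u'(Tc(w,z),z) = (H(Tc))_z(w)$ solves \eqref{eq:kappa}, and conversely any solution of \eqref{eq:kappa} produces, via the inverse substitution, a solution of \eqref{eq:tio} in $(0,w]$; uniqueness for \eqref{eq:kappa} therefore follows from the uniqueness part of Proposition~\ref{pr:welldef_T}.

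This immediately yields the conjugation identity $(\tilde T h)_z(w) = (HTc)_z(w)$ pointwise on $\SS_0$ for every $c \in \cC$, which is precisely $\tilde T H = H T$ on $\cC$. For the self-map assertion, I would invoke the stated bijectivity of $H\colon \cC \to \hH$: given arbitrary $h \in \hH$, set $c := H^{-1} h \in \cC$; then $\tilde T h = H(Tc)$, and since $Tc \in \cC$ by Proposition~\ref{pr:self_map} and $H$ maps $\cC$ into $\hH$, we conclude $\tilde T h \in \hH$. In particular $\tilde T$ is well defined as a map $\hH \to \hH$.

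I do not expect any serious obstacle. The only small subtlety is to ensure that the change of variable is internally consistent: the right-hand side of \eqref{eq:kappa} lies in $[u'(w,z), \infty)$ because $h \geq u' \circ \mathrm{id}$ on $\SS_0$, so the inverse $(u'(\cdot,z))^{-1}$ appearing inside the expectation is applied within its domain, and the resulting $\kappa$ lies in the codomain of $u'(\cdot,z)$ restricted to $(0,w]$. Beyond this, the argument is a routine translation between the consumption variable $c$ and the marginal-utility variable $h = Hc$, with existence, uniqueness, and the structural properties preserving membership in $\cC$ imported directly from the preceding propositions.
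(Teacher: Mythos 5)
Your proposal is correct and follows essentially the same route as the paper: identify \eqref{eq:kappa} as the change of variables $\kappa = u'(\xi,z)$ applied to \eqref{eq:tio}, conclude $\tilde{T}H = HT$ pointwise, and then use the bijectivity of $H$ together with Proposition~\ref{pr:self_map} to get that $\tilde{T}$ maps $\hH$ into itself. Your added remarks on uniqueness of the solution to \eqref{eq:kappa} and on the domain consistency of the substitution are slightly more explicit than the paper's proof but do not change the argument.
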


\begin{proof}
	Pick any $c \in \cC$ and $(w,z) \in \SS$. Let $\xi := Tc(w,z)$. By definition, $\xi$ solves
	\begin{equation}
		\label{eq:Tc_eq}
		u'(\xi, z) = 
		\max \left\{ 
			\EE_z \hat{\beta} \hat{R}
			u'(c(\hat{R} \left(w - \xi \right) + \hat{Y}, \hat{Z}), 
			\hat Z), u'(w, z) 
		\right\}.
	\end{equation}
	We need to show that $HTc$ and $\tilde{T} Hc$ evaluate to the same number 
	at $(w,z)$. In other words, we need to verify that $u'(\xi,z)$ is the 
	solution to
	\begin{equation*}
		\kappa = \max \left\{ 
			\EE_{z} \hat \beta \hat R 
			u'(c(\hat{R} \, [w - u'(\cdot,z)^{-1} (\kappa)] + \hat Y, \hat{Z}), 
			\hat Z), u'(w,z) 
		\right\}.
	\end{equation*}
	But this follows immediately from \eqref{eq:Tc_eq}. Hence, we have shown that
	$\tilde{T} H = H T$ on $\cC$. Since $H \colon \cC \to \hH$ is a bijection,
	we have $\tilde{T} = HT H^{-1}$. Moreover, Proposition~\ref{pr:self_map} 
	ensures that $T \colon \cC \to \cC$, and hence $\tilde{T} \colon \hH \to \hH$. 
	This completes the proof.
\end{proof}

To prove Theorem~\ref{t:opt}, we apply the Perov contraction theorem following \cite{toda2025essential}. For $h,g\in \hH$ and $z\in \ZZ$, define
\begin{equation*}
	d_z(h,g):= \sup_{w\in (0,\infty)}|h_z(w)-g_z(w)|.
\end{equation*}
Note that $d_z$ is always finite by the definition of $\hH$. Define the vector-valued metric $d:\hH\times \hH\to \RR_+^{\ZZ}$ by $d(h,g)=(d_1(h,g),\dots,d_Z(h,g))$. By Proposition~\ref{pr:complete} and the discussion in Section~7.5 of \cite{toda2025essential}, $(\hH,d)$ is a complete vector-valued metric space.
\begin{lemma}\label{lm:Perov}
	Let $B:= K(1)$. Then $\tilde{T}:\hH\to \hH$ is a Perov contraction with coefficient matrix $B$, that is, for any $h_1,h_2\in \hH$, we have $d(\tilde{T}h_1,\tilde{T}h_2)\le Bd(h_1,h_2)$.
%	\begin{equation*}
%		d(\tilde{T}h_1,\tilde{T}h_2)\le Bd(h_1,h_2).
%	\end{equation*}
\end{lemma}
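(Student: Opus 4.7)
The plan is to fix $z \in \ZZ$ and $w>0$, set $\kappa_i := (\tilde T h_i)_z(w)$ and $\xi_i := u'(\cdot,z)^{-1}(\kappa_i) \in (0,w]$, and show the pointwise estimate
\begin{equation*}
    |\kappa_1 - \kappa_2| \;\le\; \EE_z \hat\beta \hat R \, d_{\hat Z}(h_1,h_2) \;=\; (Bd(h_1,h_2))(z).
\end{equation*}
Taking the supremum over $w$ then gives $d_z(\tilde T h_1, \tilde T h_2) \le (Bd(h_1,h_2))(z)$ for every $z$, which is the required Perov contraction inequality. Without loss of generality I would assume $\kappa_1 \ge \kappa_2$; since $u'(\cdot,z)$ is strictly decreasing, this is equivalent to $\xi_1 \le \xi_2$.

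The argument proceeds by splitting into cases according to which branch of the max in \eqref{eq:kappa} is active. If $\kappa_1 = \kappa_2 = u'(w,z)$ there is nothing to prove. If both values are interior, i.e.\ $\kappa_i = \EE_z \hat\beta \hat R\, h_i(\hat R(w - \xi_i) + \hat Y, \hat Z)$, then using $\xi_1 \le \xi_2$ together with the fact that each $h_i(\cdot,\hat z)$ is \emph{decreasing} in wealth (a defining property of $\hH$) I get
\begin{equation*}
    \kappa_1 - \kappa_2 \;\le\; \EE_z \hat\beta \hat R \bigl[ h_1(\hat R(w-\xi_2)+\hat Y, \hat Z) - h_2(\hat R(w-\xi_2)+\hat Y, \hat Z)\bigr] \;\le\; \EE_z \hat\beta \hat R\, d_{\hat Z}(h_1,h_2),
\end{equation*}
exactly as desired. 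The mixed case $\kappa_1 > u'(w,z) = \kappa_2$ (so $\xi_2 = w$, $\xi_1 \le w$) uses the inequality $u'(w,z) \ge \EE_z \hat\beta \hat R\, h_2(\hat Y, \hat Z)$ implicit in $\kappa_2 = u'(w,z)$, giving
\begin{equation*}
    \kappa_1 - \kappa_2 \;\le\; \EE_z \hat\beta \hat R \bigl[ h_1(\hat R(w-\xi_1)+\hat Y, \hat Z) - h_2(\hat Y, \hat Z)\bigr] \;\le\; \EE_z \hat\beta \hat R\, d_{\hat Z}(h_1,h_2),
\end{equation*}
where the second inequality uses $h_1(\hat R(w-\xi_1)+\hat Y,\hat Z) \le h_1(\hat Y,\hat Z)$ once more by monotonicity. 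This exhausts all cases.

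The main obstacle I anticipate is handling the mixed case cleanly, because there $\xi_1$ and $\xi_2$ differ and one side of the comparison lacks an explicit integral representation. The resolution is exactly the trick above: replacing $u'(w,z)$ by the (weakly smaller) conditional expectation $\EE_z \hat\beta \hat R\, h_2(\hat Y, \hat Z)$ and then invoking the decreasing-in-wealth property of $h_1$ to dominate $h_1(\hat R(w-\xi_1)+\hat Y,\hat Z)$ by $h_1(\hat Y,\hat Z)$. Once that is in place, the rest is routine: finiteness of the bounding expectation follows from Assumption~\ref{a:spec_and_finite_exp}\eqref{ac:fin_exp} combined with boundedness of $h_i - u'$ on $\SS_0$, and passing to the supremum over $w$ converts the pointwise bound into the vector-valued contraction estimate. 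Combined with Assumption~\ref{a:spec_and_finite_exp}\eqref{ac:spec}, which gives $r(B)=r(K(1))<1$, this lemma feeds directly into the Perov contraction theorem to yield Theorem~\ref{t:opt}.
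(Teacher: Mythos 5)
Your proof is correct, but it takes a genuinely different route from the paper's. The paper does not estimate $|\kappa_1-\kappa_2|$ directly; instead it verifies the two Blackwell-type sufficient conditions of Toda (2021) --- monotonicity of $\tilde{T}$ (if $h_1\le h_2$ pointwise then $\tilde{T}h_1\le\tilde{T}h_2$) and the shift inequality $\tilde{T}(h+a)\le\tilde{T}h+Ba$ for constant vectors $a\in\RR_+^{\ZZ}$ --- and then invokes that abstract result to conclude $\tilde{T}$ is a Perov contraction with coefficient matrix $B=K(1)$. Your argument is a direct, self-contained verification: for fixed $(w,z)$ you translate $\kappa_1\ge\kappa_2$ into $\xi_1\le\xi_2$ via strict monotonicity of $u'(\cdot,z)$, use the fact that each $h_i(\cdot,\hat z)$ is decreasing to compare the two integrands at a common wealth level, and your handling of the mixed case (replacing $u'(w,z)$ by the dominated quantity $\EE_z\hat\beta\hat R\,h_2(\hat Y,\hat Z)$ and then bounding $h_1$ at the larger wealth argument by $h_1(\hat Y,\hat Z)$) correctly closes the only delicate configuration. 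Your case enumeration is exhaustive because $\kappa_i\ge u'(w,z)$ always holds, so $\kappa_2>u'(w,z)$ forces both values onto the expectation branch. The paper's route buys modularity --- the monotonicity of $\tilde{T}$ is established anyway and reused, and the Blackwell recipe is standard --- while yours avoids the external theorem and makes the appearance of $B=K(1)$ transparent; in substance the two are equivalent, since the Blackwell--Perov argument internally runs the same comparison starting from $h_1\le h_2+d(h_1,h_2)$.
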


\begin{proof}
	By Assumption~\ref{a:spec_and_finite_exp}~\eqref{ac:spec}, we have 
	$r(B)\in [0,1)$. By Theorem~7.7 of \cite{toda2025essential}, 
%	which is a generalization of the sufficient condition for contraction of 
%	\cite{blackwell1965discounted}, 
	it suffices to verify the following two conditions:
	\begin{enumerate}
		\item\label{item:Perov_a} 
		If $h_1,h_2\in \hH$ and $h_1\le h_2$ pointwise, then $\tilde{T}h_1\le \tilde{T}h_2$ pointwise.
		\item\label{item:Perov_b} 
		For any $h\in \hH$ and $a\in \RR_+^{\ZZ}$, we have $\tilde{T}(h+a)\le \tilde{T}h+Ba$ pointwise.
	\end{enumerate}
	\eqref{item:Perov_a} Suppose $h_1\le h_2$ and take any $(w,z)\in \SS$. For $i\in\{1,2\}$, let $\kappa_i=(\tilde{T}h_i)_z(w)$, which satisfies \eqref{eq:kappa}. Let $\xi_i=u'(\cdot,z)^{-1}(\kappa_i)$. Since $u'(\cdot,z)$ is strictly decreasing, to prove $\kappa_1\le \kappa_2$, it suffices to show $\xi_1\ge \xi_2$. Suppose on the contrary that $\xi_1<\xi_2$. By Lemma \ref{lm:conjug} and \eqref{eq:kappa}, we obtain
	\begin{align*}
		u'(\xi_1, z)
		&=\max\{\EE_z \hat{\beta}\hat{R} \,
		    h_1(\hat{R}(w-\xi_1)+\hat{Y},\hat{Z}),u'(w, z) \} \\
		&\leq \max\{\EE_z \hat{\beta}\hat{R} \,
		    h_2(\hat{R}(w-\xi_1)+\hat{Y},\hat{Z}),u'(w, z) \} \\
		&\leq \max\{\EE_z \hat{\beta}\hat{R} \, 
		    h_2(\hat{R}(w-\xi_2)+\hat{Y},\hat{Z}),u'(w, z) \}
		 = u'(\xi_2, z),
	\end{align*}
	where the first inequality uses $h_1\le h_2$ and the second inequality uses 
	the fact that $h_2$ is decreasing in $w$ and $\xi_1<\xi_2$. Since $u'$ is 
	strictly decreasing, we obtain $\xi_1\geq \xi_2$, which is a contradiction.
	
	\eqref{item:Perov_b} Let $h\in \hH$ and $a\in \RR_+^{\ZZ}$. Fixing $(w,z)\in \SS$, let $\kappa(a)$ be the value of $\kappa$ in \eqref{eq:kappa} where $h$ is replaced by $h+a$. Let $\xi(a)=(u')^{-1}(\kappa(a),z)$. Since $\tilde{T}$ is monotonic by part~\eqref{item:Perov_a}, we have $\xi(a)\le \xi(0)$. Therefore,
	\begin{align*}
		&(\tilde{T}(h+a))_z(w)=\kappa(a)=u'(\xi(a),z) \\
		&=\max\{\EE_z \hat{\beta}\hat{R} 
		    (h+a)(\hat{R}(w-\xi(a))+\hat{Y},\hat{Z}),u'(w,z)\} \\
		&=\max\{\EE_z \hat{\beta}\hat{R}
		    h(\hat{R}(w-\xi(a))+\hat{Y},\hat{Z})+\EE_z\hat{\beta}\hat{R}a_{\hat{Z}}, u'(w,z) \}\\
		&\leq \max\{\EE_z \hat{\beta}\hat{R} 
		    h(\hat{R}(w-\xi(a))+\hat{Y},\hat{Z}),u'(w,z)\}
		    +\EE_z\hat{\beta}\hat{R}a_{\hat{Z}} \\
		&\leq \max\{\EE_z \hat{\beta}\hat{R} 
		    h(\hat{R}(w-\xi(0))+\hat{Y},\hat{Z}),u'(w,z)\}
		    +\EE_z\hat{\beta}\hat{R}a_{\hat{Z}} \\
		&=(\tilde{T}h)_z(w)+(Ba)_z.
	\end{align*}
	Therefore, $\tilde{T}(h+a)\le \tilde{T}h+Ba$ pointwise.
\end{proof}

\begin{proof}[Proof of Theorem \ref{t:opt}]
	By Lemma~\ref{lm:Perov} and Theorem 7.6 of \cite{toda2025essential}, there exists a 
	unique fixed point $h^*\in \hH$ of $\tilde{T}$. From the topological 
	conjugacy result in Lemma~\ref{lm:conjug}, we have $\tilde{T} = H T 
	H^{-1}$, so there exists a unique fixed point $c^*\in \cC$ of $T$, and 
	claim~(i) is verified. 
	
	To see that claim~(2) holds, take any $\sigma \in (r(B),1)$. With a slight 
	abuse of notation, let $\|\cdot\|$ denote the supremum norm on $\RR^{\ZZ}$, 
	the operator norm it induces on $\RR^{\ZZ\times \ZZ}$, as well as the 
	supremum norm on the space of functions from $(0,\infty)$ to $\RR^{\ZZ}$. 
	Since $\tilde{T}^n=HT^nH^{-1}$ and $\tilde{T}$ is a Perov contraction with 
	coefficient matrix $B$ with spectral radius $r(B)<1$, it follows from the 
	definition of $\rho$, $\cC$, $\hH$ and Lemma~\ref{lm:Lphi} that
	\begin{align*}
		\rho(T^n c, c^*)
		&=\rho(T^n c, T^n c^*)=\|HT^nc-HT^nc^*\|
		    =\| \tilde{T}^nHc-\tilde{T}^nHc^* \| \\
		&\leq \|B^n\| \|Hc-Hc^*\| 
		    \leq \sigma^n\|Hc-Hc^*\|
	\end{align*}
	for large enough $n$. Therefore, $\rho(T^n c, c^*)\to 0$ as $n\to \infty$.
\end{proof}

Our next goal is to prove Proposition~\ref{pr:monotonea}. We begin by defining
\begin{equation*}
	\cC_0 = \left\{
	c \in \cC \colon 
	w \mapsto w - c(w,z) \text{ is increasing for all } z \in \ZZ 
	\right\}.
\end{equation*}

\begin{lemma}\label{lm:cC'}
	$\cC_0$ is a closed subset of $\cC$, and $Tc \in \cC_0$ for all $c \in \cC_0$. 
\end{lemma}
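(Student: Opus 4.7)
The plan is to prove the two assertions separately. A common tool is the observation that $\rho$-convergence is compatible with pointwise convergence of consumption policies: by the definition of $\rho$, if $\rho(c_n,c)\to 0$ then $u'(c_n(w,z),z) \to u'(c(w,z),z)$ uniformly in $(w,z)\in \SS_0$, and since $u'(\cdot,z)$ is continuous and strictly decreasing by Assumption~\ref{a:utility}, its inverse on the range is continuous, so $c_n(w,z) \to c(w,z)$ pointwise.

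For closedness of $\cC_0$, I would take any $\{c_n\} \subset \cC_0$ with $c_n \to c$ in $\cC$ and pass the defining inequality $c_n(w_2,z) - c_n(w_1,z) \leq w_2 - w_1$, valid for all $w_1 < w_2$ and equivalent to savings being nondecreasing in $w$, to the pointwise limit established above to conclude $c \in \cC_0$.

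For the invariance $T(\cC_0) \subset \cC_0$, fix $c \in \cC_0$ and $w_1 < w_2$, and let $\xi_i := Tc(w_i, z)$ and $s_i := w_i - \xi_i$; the goal is $s_1 \leq s_2$. The key structural observation is that the expectation in \eqref{eq:tio}, namely $g_c(\xi, w, z) = \EE_z \hat\beta \hat R u'(c(\hat R(w - \xi) + \hat Y, \hat Z), \hat Z)$, depends on $(\xi, w)$ only through $s = w - \xi$; writing $g_c(\xi, w, z) = \tilde g_c(s, z)$, the monotonicity of $c$ in wealth makes $\tilde g_c(\cdot, z)$ decreasing. The Euler equation for $\xi_i$ then reads $u'(w_i - s_i, z) = \max\{\tilde g_c(s_i, z),\, u'(w_i, z)\}$. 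Supposing $s_1 > s_2$ for contradiction, I split on whether each $s_i$ is zero (borrowing constraint binding) or strictly positive. In the interior--interior case, equality in the Euler equation together with the monotonicity of $\tilde g_c$ and of $u'(\cdot,z)$ yields $w_1 - s_1 \geq w_2 - s_2$, which combined with $s_1 > s_2$ forces $w_1 > w_2$ --- a contradiction. In the mixed case $s_1 > 0 = s_2$, combining equality at $w_1$ with the slackness $u'(w_2, z) \geq \tilde g_c(0, z)$ at $w_2$ forces $\xi_1 \geq w_2$, contradicting $\xi_1 < w_1 < w_2$. The remaining case $s_1 = 0 < s_2$ is vacuous since it contradicts $s_1 > s_2$.

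The main obstacle, such as it is, is organizing the case split around the borrowing constraint and handling the mixed regime where one endpoint is constrained and the other interior; once the Euler equation is rewritten in the savings variable $s$ so that $\tilde g_c$ is manifestly decreasing, everything else follows by strict monotonicity of $u'(\cdot, z)$.
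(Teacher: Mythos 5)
Your proposal is correct and takes essentially the same route as the paper: closedness follows by passing the monotonicity of $w \mapsto w - c_n(w,z)$ to the pointwise limit induced by $\rho$-convergence, and invariance follows by contradiction from the Euler equation together with the fact that $g_c$ depends on $(\xi,w)$ only through savings $s = w-\xi$ and is decreasing in $s$. The only organizational difference is that the paper invokes the already-established monotonicity of $Tc$ in $w$ (Proposition~\ref{pr:self_map}) to reduce immediately to the case $\xi(w_1,z) < w_1$, whereas you run an explicit case split on whether the borrowing constraint binds at each endpoint; both versions rest on the same strict-monotonicity-of-$u'$ contradiction.
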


\begin{proof}%[Proof of Lemma~\ref{lm:cC'}]
	To see that $\cC_0$ is closed, let $\{c_n\}$ be a sequence in $\cC_0$, and 
	suppose $c \in \cC$ satisfies $\rho(c_n, c) \to 0$. We claim that $c \in 
	\cC_0$. To see this, for each $n$, the map $w \mapsto w - c_n(w,z)$ is 
	increasing for all $z$, and the convergence $\rho(c_n,c) \to 0$ implies 
	pointwise convergence $c_n (w,z) \to c(w,z)$ for all $(w,z) \in \SS$. 
	Thus the monotonicity property is preserved in the limit, so $c \in \cC_0$.
	
	Fix $c \in \cC_0$. We now show that $\xi := Tc \in \cC_0$. By 
	Proposition~\ref{pr:self_map}, $\xi \in \cC$, hence it suffices to show 
	that $w \mapsto w - \xi(w,z)$ is increasing. Suppose not. Then there exist
	$z \in \ZZ$ and $w_1, w_2 \in (0, \infty)$ such that 
	$w_1 < w_2$ and $w_1 - \xi(w_1, z) > w_2 - \xi (w_2, z)$.
	Since $w_1 - \xi (w_1, z) \geq 0$, $w_2 - \xi (w_2, z) \geq 0$ and 
	$\xi(w_1,z) \leq \xi (w_2, z)$ by Proposition~\ref{pr:self_map}, we must 
	have $\xi(w_1, z) < w_1$ and  $\xi(w_1, z) < \xi(w_2, z)$. However, the 
	definition of $T$ combined with the concavity of $u$ then implies 
	that
	\begin{align*}
		u'(\xi (w_1, z), z) 
		&= \EE_z \hat{\beta} \hat{R} 
		    u'(c(\hat{R} \, [w_1 - \xi(w_1,z)] + \hat{Y}, \hat{Z}),
		        \hat Z) \\
		&\leq \EE_z \hat{\beta} \hat{R} 
		    u'(c(\hat{R} \, [w_2 - \xi(w_2,z)] + \hat{Y}, \hat{Z} ), 
		        \hat Z)
		\leq u'(\xi (w_2, z),z),
	\end{align*}
	which gives $\xi(w_1, z) \geq \xi(w_2, z)$, yielding a contradiction. 
	Hence, $w \mapsto w - \xi (w,z)$ is increasing and $T$ is a self-map on 
	$\cC_0$.
\end{proof}

\begin{proof}[Proof of Proposition~\ref{pr:monotonea}]
	Since $T$ maps elements of the closed subset $\cC_0$ into itself by 
	Lemma~\ref{lm:cC'}, Theorem~\ref{t:opt} implies that $c^* \in \cC_0$. 
	Hence, the stated claims hold.
\end{proof}

\begin{proof}[Proof of Proposition~\ref{pr:monotoneY}]
	Let $T_j$ be the time iteration operator for the income process $j$
	as defined in Proposition~\ref{pr:self_map}. It suffices to show $T_1c
	\leq T_2c$ for all $c \in \cC$. To see this, by the monotonicity of $T_j$, 
	we have $T_jc_1 \leq T_jc_2$ whenever $c_1 \leq c_2$. Thus, if $T_1c \leq 
	T_2c$ for all $c \in \cC$, then for any $c_1,c_2\in \cC$ with $c_1\leq 
	c_2$, $T_1c_1 \leq T_1c_2 \leq T_2c_2$. Iterating from any $c\in \cC$ and 
	using Theorem~\ref{t:opt}, we obtain 
	$c_1^* = \lim_{n \to \infty}(T_1)^nc \leq \lim_{n \to \infty}(T_2)^nc 
	= c_2^*$, which proves the claim once $T_1c \leq T_2c$ is established.
	
	To show that $T_1c \leq T_2c$ for any $c\in \cC$, take any $(w,z)\in
	\SS$ and define $\xi_j=(T_jc)(w,z)$. To show $\xi_1 \leq \xi_2$, suppose
	on the contrary that $\xi_1 > \xi_2$. Since $c$ is increasing in $w$, it follows from the definition of the time iteration operator in \eqref{eq:tio}, $Y_1 \leq Y_2$, and $u''(\cdot, z)<0$ that
	\begin{align*}
		u'(\xi_2,z)>u'(\xi_1, z)
		&=\max \{\EE_{z} \hat{\beta} \hat{R}
		u'(c(\hat{R}(w - \xi_1) + \hat{Y}_1, \hat{Z}), \hat Z), 
		u'(w, z) \} \\
		&\geq \max \{ \EE_{z} \, \hat{\beta} \hat{R} 
		u'(c(\hat{R}(w - \xi_2) + \hat{Y}_2, \hat{Z}), \hat Z), 
		u'(w, z) \} 
		= u'(\xi_2, z),
	\end{align*}
	which is a contradiction. This completes the proof.
\end{proof}

\begin{proof}[Proof of Proposition~\ref{pr:binding}]
	Recall that, for all $c \in \cC$, $\xi(w,z) := Tc(w,z)$ solves
	\begin{equation}
		\label{eq:T_opr_general}
		u'(\xi(w,z),z) = \max \left\{ 
		    \EE_{z} \hat{\beta} \hat{R} 
		    u'(c (\hat{R} \, [w - \xi(w,z)] + \hat{Y}, \hat{Z}),
		    \hat Z), u'(w, z) \right\}.
	\end{equation}
	%
	%Let $c^* \in \cC$ denote the optimal policy. 
	For each $z \in \ZZ$ and $c \in \cC$, define
	\begin{equation}\label{eq:a_bar}
		\bar{w}_c (z) := u'(\cdot, z)^{-1} \left[
		    \EE_z \hat{\beta} \hat{R} u'(c(\hat{Y}, \hat{Z}), \hat Z) 
		\right]
		\quad \text{and} \quad
		\bar{w}(z) := \bar{w}_{c^*} (z).
	\end{equation}
	Let $w \leq \bar{w}_c (z)$. We claim that $\xi(w,z) = w$. Suppose to
	the contrary that $\xi(w,z) < w$. Then 
	$u'(\xi(w,z), z) > u'(w, z)$. In view of 
	\eqref{eq:T_opr_general}, we have
	\begin{equation*}
		u'(w, z) 
		< \EE_z \hat{\beta} \hat{R} 
		    u'(c(\hat{R} \left[w - \xi(w,z) \right] + \hat{Y}, \hat{Z}), 
		    \hat Z) 
		\leq \EE_z \hat{\beta} \hat{R} 
		u'(c(\hat{Y}, \hat{Z}), \hat Z)
		= u'(\bar{w}_c (z), z).
	\end{equation*}
	From this we get $w > \bar{w}_c (z)$, which is a contradiction. Hence,
	$\xi(w,z) = w$.
	
	On the other hand, if $\xi(w,z) = w$, then 
	$u'(\xi(w,z), z) = u'(w,z)$. 
	By \eqref{eq:T_opr_general}, we have
	$u'(w, z) \geq \EE_z \, \hat{\beta} \hat{R}
	u'(c(\hat{Y}, \hat{Z}), \hat Z)
	= u'(\bar{w}_c (z), z)$.                    
	Hence, $w \leq \bar{w}_c (z)$. The first claim is verified. The second 
	claim follows immediately from the first claim and the fact that $c^*$ is 
	the unique fixed point of $T$ in $\cC$.
\end{proof}

\section{Proofs of Section~\ref{ss:ampc=0} Results}

We organize the proofs according to the two mechanisms underlying vanishing MPCs: a central mechanism driven by downward transitions in risk aversion, and a knife-edge mechanism based on spectral radius conditions. %that covers the remaining parameter configurations. 

To establish our key results, we extend the operator $F_i$ defined in 
\eqref{eq:F_oper0} as follows. With slight abuse of notation, in what follows, 
for each $i \in \{1, \dots, N\}$ and $j \in \{1, \dots, M\}$, we define 
$F_i: [1,\infty]^{M} \to [1,\infty]^{M}$ by
\begin{equation}
	\label{eq:F_oper}
	(F_ix)(\tilde z_{j}):=
	\begin{cases}
			\infty, & \text{if\, $\sum_{\ell=1}^{i-1}\bar{p}_{i\ell}>0$}, \\
			\left(1+(G_ix)(\tilde z_{j})^{1/\gamma_{i}}\right)^{\gamma_{i}}, 
			& \text{if\, $\sum_{\ell=1}^{i-1}\bar{p}_{i\ell}=0$}.
		\end{cases}
\end{equation}
Moreover, with slight abuse of notation, we define
\begin{equation*}
	\gamma_i = \gamma(\bar z_i) = \gamma(z_{ij})
	\quad \text{for all } \, z_{ij} = (\bar z_i, \tilde z_j) \in \ZZ. 
\end{equation*}
%
%This implies that $\gamma(\hat Z) = \gamma_i$ when $\hat Z = z_{ij}$.

\begin{lemma}
	\label{lm:limsup}
	Fix $i \in \{1, \dots, N\}$. If $c \in \cC$ and for all 
	$j \in \{1, \dots, M\}$, we have
	\begin{equation*}
		\limsup_{w\to \infty} \frac{c(w,z_{ij})}{w} 
		\leq x(\tilde z_{j})^{-1/\gamma_{i}},
	\end{equation*}
	where $x(\tilde z_j) \in [1,\infty]$, then for all $j \in \{1, \dots, M\}$, we have
	\begin{equation}
		\label{eq:lm-limsup}
		\limsup_{w\to \infty} \frac{Tc(w,z_{ij})}{w} 
		\leq (F_ix)(\tilde z_{j})^{-1/\gamma_{i}}.
	\end{equation}
\end{lemma}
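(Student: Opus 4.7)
The plan is to recast the claim as a lower bound on $\liminf_w (\xi_w^{(j)})^{-\gamma_i}/w^{-\gamma_i}$ and apply Fatou's lemma along convergent subsequences, where $\xi_w^{(j)}:=Tc(w,z_{ij})$. Using the CRRA marginal utility $u'(c,z)=c^{-\gamma(z)}$ together with Proposition~\ref{pr:binding}, for all $w>\bar w_c(z_{ij})$ the borrowing constraint is slack and \eqref{eq:tio} reduces to
\begin{equation*}
  (\xi_w^{(j)})^{-\gamma_i}=\EE_{z_{ij}}\left[\hat\beta\hat R\,c(\hat w^{(j)},\hat Z)^{-\gamma(\hat Z)}\right],\qquad \hat w^{(j)}:=\hat R(w-\xi_w^{(j)})+\hat Y.
\end{equation*}
Setting $\eta_w^{(j)}:=\xi_w^{(j)}/w\in(0,1]$ and using that $\limsup\eta_w^{(j)}\le C$ is equivalent to $\liminf(\eta_w^{(j)})^{-\gamma_i}\ge C^{-\gamma_i}$, compactness in $[0,1]^M$ lets me fix any subsequence $w_n\to\infty$ along which $\eta_{w_n}^{(j)}\to \eta^*(\tilde z_j)$ simultaneously for all $j$; it then suffices to show $\eta^*(\tilde z_j)\le (F_ix)(\tilde z_j)^{-1/\gamma_i}$ for every $j$.

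The proof splits according to the definition of $F_i$. When $\sum_{\ell<i}\bar p_{i\ell}=0$, only transitions with $\ell\ge i$ contribute, and I multiply the Euler equation by $w_n^{\gamma_i}$ and factor each summand as
\begin{equation*}
  w_n^{\gamma_i}c(\hat w_n^{(j)},z_{\ell k})^{-\gamma_\ell}
  =w_n^{\gamma_i-\gamma_\ell}\bigl(c(\hat w_n^{(j)},z_{\ell k})/\hat w_n^{(j)}\bigr)^{-\gamma_\ell}\bigl(\hat w_n^{(j)}/w_n\bigr)^{-\gamma_\ell}.
\end{equation*}
The $\ell>i$ terms vanish (since $w_n^{\gamma_i-\gamma_\ell}\to 0$ while the other factors stay bounded on $\{\hat R>0\}$), and for $\ell=i$ the hypothesis gives $\liminf(c(\hat w_n^{(j)},z_{ik})/\hat w_n^{(j)})^{-\gamma_i}\ge x(\tilde z_k)$ while $\hat w_n^{(j)}/w_n\to\hat R(1-\eta^*(\tilde z_j))$. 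Fatou's lemma, together with the independence $P=\bar P\otimes\tilde P$ and \eqref{eq:Gi}, regroups the $\bar p_{ii}\tilde p_{jk}\EE[\hat\beta\hat R^{1-\gamma_i}\mid\cdot]$ pieces into $G_i(j,k)$ and yields $(\eta^*(\tilde z_j))^{-\gamma_i}\ge(1-\eta^*(\tilde z_j))^{-\gamma_i}(G_ix)(\tilde z_j)$; taking $1/\gamma_i$-th roots and rearranging gives $\eta^*(\tilde z_j)\le(1+(G_ix)(\tilde z_j)^{1/\gamma_i})^{-1}=(F_ix)(\tilde z_j)^{-1/\gamma_i}$. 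When $\sum_{\ell<i}\bar p_{i\ell}>0$, I argue by contradiction: assuming $\eta^*(\tilde z_j)>0$, I fix $\ell<i$ and $k$ with $\bar p_{i\ell}\tilde p_{jk}>0$; the same factorization now has $w_n^{\gamma_i-\gamma_\ell}\to\infty$, and using the feasibility bound $c\le\hat w_n^{(j)}$ to control $(c/\hat w_n^{(j)})^{-\gamma_\ell}\ge 1$, the integrand diverges on the positive-probability set $\{\hat\beta\hat R>0\}$; Fatou forces $(\eta^*)^{-\gamma_i}=\infty$, a contradiction that gives $\eta^*(\tilde z_j)=0=(F_ix)(\tilde z_j)^{-1/\gamma_i}$.

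The main obstacle will be the boundary case $\eta^*(\tilde z_j)=1$, where $w_n-\xi_{w_n}^{(j)}=w_n(1-\eta_{w_n}^{(j)})$ is an indeterminate product and $\hat w_n^{(j)}$ may remain bounded, invalidating the direct asymptotic $\hat w_n^{(j)}/w_n\to\hat R(1-\eta^*)$. I would handle it by extracting a further subsequence along which $w_n-\xi_{w_n}^{(j)}$ is either unbounded (where the main argument applies verbatim) or stays in a compact set; in the latter case the continuity and strict positivity of $c$ on compact subsets of $\SS_0$ (inherited from $c\in\cC$) bound $c^{-\gamma_\ell}$ above and below by positive constants, so the $w_n^{\gamma_i}$ prefactor alone still forces the integrand to diverge on $\{\hat\beta\hat R>0\}$ in Case~1, and in Case~2 it forces $(G_ix)(\tilde z_j)=0$, making the bound $\eta^*\le 1$ trivial. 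A routine check is that the Fatou interchange is legitimate against the joint kernel, which uses only nonnegativity of the integrands and the Kronecker product structure $P=\bar P\otimes\tilde P$.
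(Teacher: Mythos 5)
Your proposal is correct and follows essentially the same route as the paper's proof: pass to a subsequence realizing the $\limsup$, plug into the Euler equation, apply Fatou's lemma, and split on $\sum_{\ell<i}\bar p_{i\ell}>0$ versus $=0$, with the factor $w_n^{\gamma_i-\gamma_\ell}$ driving divergence in the first case and the surviving $\ell=i$ terms producing the $G_i$ recursion (hence $F_i$) in the second. The only cosmetic differences are that the paper drops the $\ell>i$ terms by nonnegativity rather than arguing they vanish, and it dispatches your boundary case $\eta^*=1$ (or $\hat R=0$) directly via $\lambda_n\le \hat R(1-\alpha_n)+\hat Y/w_n\to 0$ instead of a further subsequence extraction.
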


\begin{proof}
	Let $\alpha(z_{ij})=\limsup_{w\to \infty} Tc(w,z_{ij})/w$. By definition, we can take an increasing sequence $\{w_{n}\}$ such that $\alpha(z_{ij})=\lim_{n\to \infty} Tc(w_n,z_{ij})/w_n$. Define $\alpha_{n}(z_{ij}):= Tc(w_n,z_{ij})/w_n \in (0,1]$ and 
	\begin{equation}
		\label{eq1:lm-limsup}
		\lambda_{n}(z_{ij})
		=c \big( 
			\hat R (1-\alpha_{n}(z_{ij}))w_{n} + \hat Y, \hat Z 
		\big) \big/ w_{n} >0.
	\end{equation}
	We next show that
	\begin{equation}
		\label{eq:alpha_n_contrdsup}
		\limsup_{n\to \infty}  \lambda_{n}(z_{ij}) \leq x(\hat Z)^{-1/\gamma(\hat Z)}\hat R (1-\alpha(z_{ij})).
	\end{equation}
	Note that 
	\begin{equation*}
		\lambda_{n}(z_{ij}) = 
		\frac{c \big(
			\hat R (1-\alpha_{n}(z_{ij}))w_{n} + \hat Y, \hat Z
			\big)}{\hat R (1-\alpha_{n}(z_{ij}))w_{n} + \hat Y}
		\left(
		\hat R (1-\alpha_{n}(z_{ij})) + \frac{\hat Y}{w_{n}} 
		\right)
	\end{equation*}
	If $\alpha(z_{ij})<1$ and $\hat R>0$, then since $\hat R (1-\alpha_{n}(z_{ij}))w_{n}\to \hat R (1-\alpha(z_{ij}))\cdot \infty = \infty$, by assumption we have
	\begin{equation*}
		\limsup_{n\to \infty} \lambda_{n}(z_{ij}) 
		\leq \limsup_{w\to \infty} 
		\frac{c(w,\hat Z)}{w} \hat R (1-\alpha(z_{ij})) 
		\leq x(\hat Z)^{-1/\gamma(\hat Z)}\hat R (1-\alpha(z_{ij})),
	\end{equation*}
	which is equation (\ref{eq:alpha_n_contrdsup}). If $\alpha(z_{ij})=1$ or $\hat R=0$, since $c(w, z)\leq w$, we have
	\begin{equation*}
		\lambda_{n}(z_{ij})
		\leq \hat R (1-\alpha_{n}(z_{ij})) + \hat Y/w_{n} 
		\to \hat R (1-\alpha(z_{ij}))=0,
	\end{equation*}
	so again equation (\ref{eq:alpha_n_contrdsup}) holds.
	
	Since $Tc(w_{n},z_{ij})=\alpha_{n}(z_{ij})w_{n}$ solves the Euler equation, we have
	\begin{align*}
		\alpha_{n}(z_{ij})^{-\gamma_{i}} 
		&=\max \left\{ 
		\EE_{z_{ij}} \hat \beta \hat R  
		\frac{
			c \big(
			\hat R (w_{n}-\alpha_{n}(z_{ij})w_{n}) + \hat Y, \hat Z 
			\big)^{-\gamma(\hat Z)}
		}{w_n^{-\gamma_{i}}}, \; 1 
		\right\} \\
		&= \max \left\{
		\EE_{z_{ij}} \hat \beta \hat R 
		\lambda_{n}(z_{ij})^{-\gamma(\hat Z)}
		w_{n}^{\gamma_{i}-\gamma(\hat Z)}, \; 1 
		\right\}.
	\end{align*}
	Therefore, 
	\begin{equation}
		\label{eq3:lm-limsup}
		\alpha_{n}(z_{ij})^{-\gamma_{i}} 
		\geq \EE_{z_{ij}} \hat \beta \hat R 
		\lambda_{n}(z_{ij})^{-\gamma(\hat Z)}
		w_{n}^{\gamma_{i}-\gamma(\hat Z)}.
	\end{equation}
	\textit{Case 1.} If $\sum_{\ell=1}^{i-1}\bar{p}_{i\ell}>0$, then there exists at least one $h<i$ such that $\bar p_{ih} > 0$. By definition, this implies that $\gamma_{i}-\gamma_{h} > 0$ with positive probability. Letting $n \to \infty$ in \eqref{eq3:lm-limsup} and applying Fatou's lemma, we have
	\begin{align*}
		\alpha(z_{ij})^{-\gamma_{i}} 
		&= \lim_{n \to \infty} \alpha_{n}(z_{ij})^{-\gamma_{i}} 
		\geq \liminf_{n\to \infty}  
		\EE_{z_{ij}} \hat \beta \hat R 
		\lambda_{n}(z_{ij})^{-\gamma(\hat Z)}w_{n}^{\gamma_{i}-\gamma(\hat Z)} \\
		&\geq \EE_{z_{ij}} \left( 
		\hat \beta \hat R 
		\big[
		\limsup_{n\to \infty} \lambda_{n}(z_{ij}) 
		\big]^{-\gamma(\hat Z)}
		\lim_{n \to \infty} w_{n}^{\gamma_{i}-\gamma(\hat Z)} 
		\right) \\
		&\geq \EE_{z_{ij}} \left( 
		\hat \beta \hat R \big[
		x(\hat Z)^{-1/\gamma(\hat Z)}\hat R (1-\alpha(z_{ij})) 
		\big]^{-\gamma(\hat Z)} 
		\lim_{n \to \infty} w_{n}^{\gamma_{i}-\gamma(\hat Z)} 
		\right).
	\end{align*}
	Note that by definition, there exists some $k \in \{1, \dots, M\}$ such that $\tilde p_{jk} > 0$. For $\hat Z = z_{hk}$, which indicates $\hat R = R(z_{ij}, z_{hk}, \hat \epsilon)$ and $\hat \beta = \beta(z_{ij},z_{hk}, \hat \epsilon)$, we have
	\begin{align*}
		\alpha(z_{ij})^{-\gamma_{i}} 
		&\geq \bar p_{ih} \tilde p_{jk} \EE_{z_{ij},z_{hk}} \left( 
			\hat \beta \hat R \big[
				x(z_{hk})^{-1/\gamma_h}\hat R (1-\alpha(z_{ij})) 
			\big]^{-\gamma_h} 
			\lim_{n \to \infty} w_{n}^{\gamma_{i}-\gamma_h} 
		\right) = \infty,
	\end{align*}
	where the expectation $\EE_{z_{ij},z_{hk}}$ is taken with respect to the innovation $\hat \epsilon$, and the last equality holds because, in this case, $\PP_{(z,\hat z)}(\hat \beta \hat R > 0) > 0$ for all $(z,\hat z)\in \ZZ^2$ by assumption. In addition, since by definition $(F_ix) (\tilde z_j) \equiv \infty$ in this case, we have 
	\begin{equation*}
		\limsup_{w \to \infty}\frac{Tc(w,z_{ij})}{w}
		=\alpha(z_{ij})\leq 0= (F_i x)(\tilde z_{j})^{-1/\gamma_{i}}.
	\end{equation*}
	\textit{Case 2. }If $\sum_{\ell=1}^{i-1}\bar{p}_{i\ell}=0$ and $\bar{p}_{ii}>0$, then, considering only the transition of state $\bar z_i$ to itself next period, \eqref{eq3:lm-limsup} implies that
	\begin{equation*}
		\alpha_{n}(z_{ij})^{-\gamma_{i}} 
		\geq \bar{p}_{ii} \sum_{k=1}^{M} \tilde{p}_{jk} 
		\EE_{z_{ij}, z_{ik}} \left[
		\hat \beta \hat R \lambda_{n}(z_{ij})^{-\gamma_{i}} 
		\right].
	\end{equation*}
	Letting $n \to \infty$ and applying Fatou's lemma, we get
	\begin{align*}
		\alpha(z_{ij})^{-\gamma_{i}}
		&=\lim_{n \to \infty} \alpha_{n}(z_{ij})^{-\gamma_{i}} 
		\geq \liminf_{n \to \infty} 
		\bar{p}_{ii} \sum_{k=1}^{M} \tilde{p}_{jk} 
		\EE_{z_{ij}, z_{ik}} 
		    \hat \beta \hat R \lambda_{n}(z_{ij})^{-\gamma_{i}}  \\
		&\geq \bar{p}_{ii} \sum_{k=1}^{M} \tilde{p}_{jk} 
		\EE_{z_{ij}, z_{ik}} \hat \beta \hat R 
		    \liminf_{n\to\infty} \lambda_{n}(z_{ij})^{-\gamma_{i}} \\
		&= \bar{p}_{ii} \sum_{k=1}^{M} \tilde{p}_{jk} 
		\EE_{z_{ij}, z_{ik}} \hat \beta \hat R 
		    \big(\limsup_{n\to\infty} \lambda_{n} (z_{ij})\big)^{-\gamma_{i}} \\
		&\geq \bar{p}_{ii} \sum_{k=1}^{M} \tilde{p}_{jk} 
		\EE_{z_{ij}, z_{ik}} \hat \beta \hat R 
		    \big[x(z_{ik})^{-1/\gamma_{i}} \hat R (1-\alpha(z_{ij}))\big]^{-\gamma_{i}}.
	\end{align*}
	Solving the above inequality, we have
	\begin{align*}
		\limsup_{w \to \infty}\frac{Tc(w,z_{ij})}{w}
		&=\alpha(z_{ij}) 
		\leq \frac{1}{1+\left( 
			\bar{p}_{ii} \sum_{k=1}^{M} \tilde{p}_{jk} 
			\EE_{z_{ij}, z_{ik}} \hat \beta \hat R^{1-\gamma_{i}}x(z_{ik})
			\right)^{1/\gamma_{i}}} \\
		&= (F_ix)(\tilde z_{j})^{-1/\gamma_{i}}.
	\end{align*}
	\textit{Case 3.}
	If $\sum_{\ell=1}^{i-1}\bar{p}_{i\ell}=0$ and $\bar{p}_{ii}=0$, then, since 
	$Tc(w,z) \leq w$ and $(F_ix) (\tilde z_{j}) \equiv 1$ for all $(w,z)\in \SS$, \eqref{eq:lm-limsup} trivially holds. The proof is now complete.
\end{proof}

%\begin{lemma}\label{lm:phi_ub}
%	Let $\gamma >0$ and define $\phi: \RR_{+} \to \RR_{+}$ by 
%	%
%	\begin{equation*}
	%		\phi(t)=(1+t^{1/\gamma})^{\gamma}.
	%	\end{equation*}
%	Then there exist $a \geq 1$ and $b \geq 0$ such that $\phi(t) \leq a t +b$. Futhermore, we can take  $a \geq 1$ arbitrarily close to 1. %(The choice of $b$ may depend on $a$.)
%\end{lemma}
%
%\begin{proof}
%	See the proof of Lemma~15 of \cite{ma2021theory}.
%\end{proof}

\begin{lemma}\label{lm:F_conv}
	Let $G$ be a $M \times M$ nonnegative matrix. Define $F: \RR_{+}^{M} \to \RR_{+}^{M}$ by $Fx=\phi(Gx)$, where $\phi(t)=(1+t^{1/\gamma})^{\gamma}$ and $\gamma >0$ is a constant. Then $F$ has a fixed point $x^{*} \in \RR_{+}^{M}$ if and only if $r(G)<1$, and the fixed point is unique. Take any $x_{0} \in \RR_{+}^{M}$ and define the sequence $\{x_{n}\}_{n=1}^{\infty} \subset \RR_{+}^{M}$ by $x_{n}=Fx_{n-1}$ for all $n \in \mathbb{N}$. Then we have the following:
	\begin{enumerate}
		\item If $r(G) < 1$, then  $\{x_{n}\}_{n=1}^{\infty}$ converges to $x^{*}$.
		\item If $r(G) \geq 1$ and $G$ is irreducible, then $\{x_{n}\}_{n=1}^{\infty}$ converges to $x^{*}=\infty$.
	\end{enumerate}
\end{lemma}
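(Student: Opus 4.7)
The plan rests on three structural features of $F$: it is continuous and componentwise monotone; the asymptotic linearization $\phi(t)/t \to 1$ as $t \to \infty$ makes $F$ behave ``eventually like $G$''; and at any fixed point $y^*$ the identity $y_j^{*\,1/\gamma} = 1 + (Gy^*)_j^{1/\gamma}$ renders the equation strictly subhomogeneous in a useful way. I would first establish the necessity direction: any fixed point $x^* \in \RR_+^M$ satisfies $x^* \geq \mathbf{1}$ (since $\phi \geq 1$) and $x_j^* = (1+(Gx^*)_j^{1/\gamma})^\gamma > (Gx^*)_j$ strictly, so $Gx^* < x^*$ componentwise with $x^* > 0$, and the Collatz--Wielandt inequality for nonnegative matrices yields $r(G) < 1$. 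For sufficiency, when $r(G) < 1$ I would choose $\epsilon > 0$ with $(1+\epsilon) r(G) < 1$ and pick $T_\epsilon$ such that $\phi(t) \leq \phi(T_\epsilon) + (1+\epsilon) t$ for all $t \geq 0$ (possible because $\phi(t)/t \to 1$); componentwise this gives $Fx \leq \phi(T_\epsilon) \mathbf{1} + (1+\epsilon) Gx$, so iterating from $x_0 = \mathbf{0}$ produces a sequence bounded by $\phi(T_\epsilon) \sum_{k \geq 0} ((1+\epsilon) G)^k \mathbf{1} < \infty$; since $F \mathbf{0} = \mathbf{1} \geq \mathbf{0}$ and $F$ is monotone, $\{F^n \mathbf{0}\}$ is monotone increasing and bounded, hence converges by continuity to a fixed point $x^*$.

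For uniqueness, given two fixed points $x^*, y^*$, I would set $\lambda = \min_j x^*_j / y^*_j$. If $\lambda < 1$, then using $y_j^{*\,1/\gamma} = 1 + (Gy^*)_j^{1/\gamma}$ one computes $F(\lambda y^*)_j^{1/\gamma} - (\lambda y^*)_j^{1/\gamma} = 1 - \lambda^{1/\gamma} > 0$, so monotonicity forces $x^* = F x^* \geq F(\lambda y^*) > \lambda y^*$, contradicting the definition of $\lambda$; symmetry then gives $x^* = y^*$. The same calculation with $c > 1$ in place of $\lambda$ shows $F(c x^*) < c x^*$ componentwise. For part (1), I pick $c > 1$ large enough that $x_0 \leq c x^*$; then $\{F^n(c x^*)\}$ is monotone decreasing while $\{F^n \mathbf{0}\}$ is monotone increasing, continuity and uniqueness force both limits to equal $x^*$, and monotonicity squeezes $F^n x_0$ between them, yielding $F^n x_0 \to x^*$.

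For part (2), the necessity step rules out any finite fixed point. The sequence $\{F^n \mathbf{0}\}$ is still monotone increasing, and I would let $B = \{j : \sup_n (F^n \mathbf{0})_j < \infty\}$. From $(F^{n+1} \mathbf{0})_j = (1 + (G F^n \mathbf{0})_j^{1/\gamma})^\gamma$, $j \in B$ together with $G_{jk} > 0$ forces $k \in B$; irreducibility of $G$ then forces $B = \emptyset$ or $B = \{1, \dots, M\}$, and the latter would produce a finite fixed point by monotone convergence and continuity, contradicting $r(G) \geq 1$. Hence $B = \emptyset$, so $F^n \mathbf{0} \to \infty$ in every coordinate, and by $F^n x_0 \geq F^n \mathbf{0}$ the same holds for $\{F^n x_0\}$. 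The hardest step will be the sufficiency direction: because $\phi$ is concave for $\gamma \geq 1$ and convex for $\gamma \leq 1$, no single convexity- or Lipschitz-based argument covers both regimes, and the asymptotic linearization $\phi(t) \leq \phi(T_\epsilon) + (1+\epsilon) t$ with slope arbitrarily close to $1$ is the key device that converts $r(G) < 1$ into uniform boundedness of the iterates.
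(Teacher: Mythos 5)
Your proposal is correct, and every step checks out: the strict inequality $x^*_j=\phi\bigl((Gx^*)_j\bigr)>(Gx^*)_j$ combined with $x^*\geq\mathbf 1$ and the Collatz--Wielandt bound gives necessity; the asymptotic bound $\phi(t)\leq\phi(T_\epsilon)+(1+\epsilon)t$ with $(1+\epsilon)r(G)<1$ gives the Neumann-series bound on $F^n\mathbf 0$ and hence existence; the identity $F(\lambda y^*)_j^{1/\gamma}-(\lambda y^*)_j^{1/\gamma}=1-\lambda^{1/\gamma}$ is exactly the subhomogeneity needed for uniqueness and for the squeeze in part~(1); and the ``bounded coordinates form a $G$-closed set'' argument plus irreducibility correctly handles part~(2).

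The comparison with the paper is somewhat degenerate: the paper does not prove this lemma at all, but simply cites Lemmas~15--16 of \cite{ma2021theory}, so you have supplied a self-contained argument where the authors defer to an external reference. Your route---monotone iteration from $\mathbf 0$ and from $cx^*$, strict subhomogeneity of the fixed-point equation, and the Collatz--Wielandt characterization of $r(G)$---is the standard machinery for concave/monotone operators of this type and is in the same spirit as the cited result, but it has the advantage of making the lemma verifiable without consulting the other paper. Two small points worth making explicit if you write this up: (i) in part~(1) the choice of $c>1$ with $x_0\leq cx^*$ uses $x^*\geq\mathbf 1>0$, which you have from $\phi\geq 1$; and (ii) in part~(2) the closure property of the bounded set $B$ uses monotonicity of $\{F^n\mathbf 0\}$ so that ``bounded in $n$'' and ``has a finite supremum'' coincide coordinatewise---both are immediate, but they are the hinges of the argument.
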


\begin{proof}
	This follows from Lemmas~15--16 of \cite{ma2021theory}.
\end{proof}

\begin{proposition}
	\label{pr:F_fpt}
	For given $i\in \{1,\dots, N\}$, if $\sum_{\ell=1}^{i-1}\bar p_{i\ell} =0$ and $\bar p_{ii}>0$, then $F_i$ defined in \eqref{eq:F_oper} has a fixed point $x_i^* \in \RR_+^{M}$ if and only if $r(G_{i}) < 1$, and the fixed point is unique. Take any $x_{0} \in \RR_+^{M}$, and define the sequence $\{x_{n}\}_{n=1}^{\infty} \subset \RR_{+}^{M}$ by $x_{n}=F_i x_{n-1}$ for all $n \in \mathbb{N}$. Then we have the following:
	\begin{enumerate}
		\item If $r(G_{i}) < 1$, then  $\{x_{n}\}_{n=1}^{\infty}$ converges to $x_i^{*}$.
		\item If $r(G_{i}) \geq 1$ and $G_{i}$ is irreducible, then $\{x_{n}\}_{n=1}^{\infty}$ converges to $x_i^{*}=\infty$.
	\end{enumerate}
\end{proposition}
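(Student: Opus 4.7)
The plan is to recognize that Proposition~\ref{pr:F_fpt} is essentially an immediate corollary of Lemma~\ref{lm:F_conv}. Under the hypotheses $\sum_{j=1}^{i-1}\bar p_{ij}=0$ and $\bar p_{ii}>0$, the second branch of the piecewise definition in \eqref{eq:F_oper} is active, so
\[
(F_i x)(\tilde z_j) = \bigl(1 + (G_i x)(\tilde z_j)^{1/\gamma_i}\bigr)^{\gamma_i}, \qquad j = 1,\dots,M.
\]
This is precisely the operator $F = \phi \circ G$ analyzed in Lemma~\ref{lm:F_conv}, with scalar exponent $\gamma = \gamma_i > 0$ and matrix $G = G_i$. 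The proposition will therefore follow directly from that lemma, provided $G_i$ satisfies its only hypothesis, namely that $G_i$ be a nonnegative $M \times M$ matrix.

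First I would verify this nonnegativity. By \eqref{eq:Gi}, $G_i = \bar p_{ii}(\tilde P \circ Q_i)$, where $\tilde P$ is stochastic and $Q_i(j,k) = \EE_{z_{ij},z_{ik}} \beta(z_{ij},z_{ik},\hat\epsilon) R(z_{ij},z_{ik},\hat\epsilon)^{1-\gamma_i}$. Since $\beta$ and $R$ are nonnegative measurable functions by assumption and $\tilde p_{jk}, \bar p_{ii} \geq 0$, each entry of $G_i$ is nonnegative. Thus $G_i$ meets the requirements of Lemma~\ref{lm:F_conv}, and the hypothesis $\bar p_{ii}>0$ merely guarantees $G_i$ is not identically zero (though the lemma itself imposes no such nontriviality).

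Applying Lemma~\ref{lm:F_conv} then yields all four conclusions at once: a fixed point $x_i^* \in \RR_+^M$ of $F_i$ exists if and only if $r(G_i) < 1$, and when it exists it is unique; for any $x_0 \in \RR_+^M$ the iterates $x_n = F_i x_{n-1}$ converge to $x_i^*$; and when $r(G_i) \geq 1$ together with irreducibility of $G_i$, the iterates diverge to $\infty$. The main (and essentially only) obstacle is the bookkeeping identification of $F_i$ with the generic operator of Lemma~\ref{lm:F_conv}; once that is in hand, no further argument is needed, and the proof can simply conclude by invoking Lemma~\ref{lm:F_conv}.
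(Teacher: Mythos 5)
Your proposal is correct and takes essentially the same route as the paper: the paper's proof is a one-line invocation of Lemma~\ref{lm:F_conv} with $F=F_i$ and $G=G_i$, and your additional verification that $G_i$ is a nonnegative matrix and that the hypotheses select the non-degenerate branch of \eqref{eq:F_oper} is exactly the bookkeeping the paper leaves implicit.
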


\begin{proof}
	This follows immediately from Lemma~\ref{lm:F_conv} by letting $F = F_i$ and $G = G_i$. 
\end{proof}

\begin{proof}[Proof of Theorem \ref{t:0ampc_thm}]
	
	Define the sequence $\{c_{n}\} \subset \cC$ by $c_{0}(w,z)=w$ and $c_{n}(w,z)=Tc_{n-1}(w,z)$ for all $n\geq 1$. Since $Tc(w,z) \leq w$ for any $c \in \cC$, we have $c_{1}(w,z)=Tc_{0}(w,z)\leq w=c_{0}(w,z)$. Since $T:\cC \to \cC$ is monotone, by induction $0 \leq c_{n}(w,z) \leq c_{n-1}(w,z) \leq w$ for all $n\geq 1$ and $(w,z)\in \SS$, and $c^{*}(w,z)=\lim_{n \to \infty} c_{n}(w,z)$ exists. This $c^{*}$ is the unique fixed point of $T$ and also the unique optimal policy in $\cC$.
	
	Fix $i \in\{1, \dots, N\}$. Define the sequence $\{x_{n}\} \subset \RR_{+}^{M}$ by 
	\begin{equation*}
		x_{0}(\tilde z_j)=1
		\quad \text{and} \quad 
		x_{n}(\tilde z_{j})=(F_i x_{n-1})(\tilde z_{j}),
		\qquad j = 1, \dots, M.
	\end{equation*}
	By definition, $c_{0}(w,z_{ij})/w=1=x_{0}(\tilde z_{j})^{-1/\gamma_{i}}$. Hence, for all $j \in \{1, \dots,M\}$,
	\begin{equation*}
		\limsup_{w \to \infty}\frac{c_{0}(w,z_{ij})}{w} 
		\leq x_{0}(\tilde z_{j})^{-1/\gamma_{i}}.
	\end{equation*}
	Since $c_{n}(w,z_{ij}) \downarrow c^{*}(w,z_{ij})$ pointwise, a repeated application of Lemma \ref{lm:limsup} yields
	\begin{equation}\label{eq:ampc_bds}
		0 \leq \liminf_{w \to \infty}\frac{c^{*}(w,z_{ij})}{w} 
		\leq \limsup_{w \to \infty}\frac{c^{*}(w,z_{ij})}{w} 
		\leq \limsup_{w \to \infty}\frac{c_{n}(w,z_{ij})}{w} 
		\leq x_{n}(\tilde z_{j})^{-1/\gamma_{i}}.
	\end{equation}
	\textit{Case 1.} If $\sum_{j=1}^{i-1}\bar{p}_{ij}>0$, then Lemma~\ref{lm:limsup} implies that $x_{n}(\tilde z_{j})=\infty$ for all $n\geq 1$ and $j\in \{1,\dots,M\}$. Then \eqref{eq:0ampc_thm} follows from \eqref{eq:ampc_bds}.
	
	\textit{Case 2.} If $\bar{p}_{ii}>0$, $G_{i}$ is irreducible, and $r(G_i) \geq 1$, then by Proposition~\ref{pr:F_fpt}, we have $x_{n}(\tilde z_{j}) \to \infty$ as $n \to \infty$ for all $j\in \{1,\dots,M\}$. Again, \eqref{eq:0ampc_thm} follows from \eqref{eq:ampc_bds}. 
\end{proof}

\begin{proof}[Proof of Theorem~\ref{t:0ampc_thm_pos}]
	If each entry of $\bar P$ is positive, then condition~(1) of 
	Theorem~\ref{t:0ampc_thm} holds. If in addition $G_1$ is irreducible and 
	$r(G_1) \geq 1$, then condition~(2) of Theorem~\ref{t:0ampc_thm} holds.
	In both cases, so the conclusion follows trivially. 
\end{proof}

\section{Proofs of Section~\ref{ss:ampc>0} Results}

We now prove the results of Section~\ref{ss:ampc>0}, where downward transitions in risk aversion occur with zero probability (i.e., $\sum_{j=1}^{i-1}\bar{p}_{ij}=0$). For expositional clarity, we begin with the polar case $\bar p_{ii} = 1$ (Theorem~\ref{t:non0_ampc1}), where risk aversion remains fixed along relevant histories, before addressing the more intricate cases $\bar p_{ii} = 0$ (Theorem~\ref{t:non0_ampc2} and Proposition~\ref{pr:non-concave}) and $\bar p_{ii} \in (0,1)$ (Theorem~\ref{t:non0_ampc3}). These latter cases generate fundamentally new asymptotic consumption patterns, and require distinct proof strategies with significant technical challenges. The proofs are presented in this order to reflect methodological dependencies, rather than the relative significance of the results.

To prove Theorem \ref{t:non0_ampc1}, we need the following lemma.

\begin{lemma}
	\label{lm:liminfN}
	Fix $i\in \{1, \dots, N\}$. Consider $F_i$ defined in \eqref{eq:F_oper} with  $\bar{p}_{ii}=1$. Suppose $r(G_{i}) < 1$ and let $x_{i}^{*} \in \RR_{+}^{M}$ be the unique fixed point of $F_i$. If $c \in \cC$ and for all $w>0$ and $j \in \{1, \dots, M\}$, we have 
	\begin{equation*}
		\frac{c(w,z_{ij})}{w} \geq x_i^{*}(\tilde z_{j})^{-1/\gamma_i},
	\end{equation*}
	then for all $w>0$ and $j \in \{1, \dots, M\}$, we have 
	\begin{equation*}
		\frac{Tc(w,z_{ij})}{w} \geq x_i^{*}(\tilde z_{j})^{-1/\gamma_{i}}.
	\end{equation*}
\end{lemma}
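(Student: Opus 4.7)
\textbf{Proof proposal for Lemma \ref{lm:liminfN}.}
The plan is to use the Euler equation defining $\xi := Tc(w,z_{ij})$ to bound $\xi/w$ from below by exploiting (i) the hypothesis $c(w',z_{ik}) \geq x_i^*(\tilde z_k)^{-1/\gamma_i} w'$, (ii) the CRRA form of $u$, and (iii) the identity $\bar p_{ii}=1$, which forces next period's risk aversion to remain $\gamma_i$.

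First I would dispose of the boundary case $\xi = w$: since $x_i^*(\tilde z_j) \geq 1$ (because $x_i^* \in [1,\infty)^M$ as a fixed point of $F_i$), we have $\xi/w = 1 \geq x_i^*(\tilde z_j)^{-1/\gamma_i}$ trivially. So assume $\xi < w$, in which case the Euler equation \eqref{eq:tio} yields
\begin{equation*}
	\xi^{-\gamma_i}
	= \EE_{z_{ij}} \hat\beta \hat R \, c\bigl(\hat R(w-\xi)+\hat Y,\hat Z\bigr)^{-\gamma(\hat Z)}.
\end{equation*}
Since $\bar p_{ii}=1$, conditional on $Z_0=z_{ij}$ the next-period state is $z_{ik}$ with probability $\tilde p_{jk}$, so $\gamma(\hat Z)=\gamma_i$ almost surely. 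Applying the hypothesis $c(w',z_{ik})^{-\gamma_i} \leq x_i^*(\tilde z_k)\,(w')^{-\gamma_i}$ with $w' = \hat R(w-\xi)+\hat Y$, and dropping the nonnegative $\hat Y$ term (which only increases $(w')^{-\gamma_i}$), I obtain
\begin{equation*}
	\xi^{-\gamma_i}
	\leq (w-\xi)^{-\gamma_i} \sum_{k=1}^M \tilde p_{jk}\,
	\EE_{z_{ij},z_{ik}} \hat\beta \hat R^{1-\gamma_i}\, x_i^*(\tilde z_k)
	= (w-\xi)^{-\gamma_i} (G_i x_i^*)(\tilde z_j),
\end{equation*}
where the last equality uses the definitions of $Q_i$ and $G_i = \bar p_{ii}(\tilde P\circ Q_i) = \tilde P\circ Q_i$.

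Next I would take $-1/\gamma_i$ powers (which reverses the inequality) to get $\xi \geq (w-\xi)\,(G_i x_i^*)(\tilde z_j)^{-1/\gamma_i}$, rearrange to
\begin{equation*}
	\frac{\xi}{w} \geq \frac{1}{1+(G_i x_i^*)(\tilde z_j)^{1/\gamma_i}},
\end{equation*}
and invoke the fixed point equation $x_i^* = F_i x_i^*$, which in the present case $\sum_{\ell=1}^{i-1}\bar p_{i\ell}=0$ reads $x_i^*(\tilde z_j)^{1/\gamma_i} = 1+(G_i x_i^*)(\tilde z_j)^{1/\gamma_i}$, to conclude $\xi/w \geq x_i^*(\tilde z_j)^{-1/\gamma_i}$. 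The main obstacle is mainly notational: making sure the CRRA simplification $\gamma(\hat Z)=\gamma_i$ is valid (which relies crucially on $\bar p_{ii}=1$), and recognizing that after dropping $\hat Y$, the resulting conditional expectation is exactly $(G_i x_i^*)(\tilde z_j)$ via the definition of $G_i$ in \eqref{eq:Gi}. No contraction or convergence argument is needed here; the bound comes out directly from the Euler equation and the fixed-point identity for $F_i$.
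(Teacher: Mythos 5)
Your proposal is correct and follows essentially the same route as the paper: the paper argues by contradiction (assuming $Tc(w,z_{ij}) < x_i^*(\tilde z_j)^{-1/\gamma_i}w$, which in particular forces $\xi<w$, and deriving $x_i^*(\tilde z_j) < (F_i x_i^*)(\tilde z_j)$), whereas you give the direct contrapositive, but the inequality chain — Euler equation, $\bar p_{ii}=1$ forcing $\gamma(\hat Z)=\gamma_i$, the bound $c(w',z_{ik})^{-\gamma_i}\le x_i^*(\tilde z_k)(w')^{-\gamma_i}$, dropping $\hat Y$, and the fixed-point identity for $F_i$ — is identical. Your explicit treatment of the boundary case $\xi=w$ via $x_i^*\ge 1$ is a slightly cleaner packaging of the same step the paper handles through $x_i^*(\tilde z_j)^{-1/\gamma_i}\in(0,1]$.
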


\begin{proof}
	Suppose $Tc(w,z_{ij}) / w < x_i^{*}(\tilde z_{j})^{-1/\gamma_{i}}$ for some 
	$w$ and $z_{ij}$. Let $\xi := Tc(w,z_{ij})$. Then 
	$x_i^{*}(\tilde z_{j})^{-1/\gamma_{i}} w >\xi$. Since in addition 
	$x_i^{*}(\tilde z_{j})^{-1/\gamma_{i}} \in (0,1]$ by 
	Proposition~\ref{pr:F_fpt} (claim~(1)), by the monotonicity of $u'$, we have
	\begin{align*}
		w^{-\gamma_{i}}
		&= u'(w,z_{ij})
		\leq u'(x_i^{*}(\tilde z_{j})^{-1/\gamma_{i}} w,z_{ij}) 
		< u'(\xi,z_{ij}) \\
		&= \max \left\{
		\EE_{z_{ij}} \hat \beta \hat R  
		u' (c(\hat R (w-\xi) + \hat Y, \hat Z), \hat Z),
		\; u'(w,z_{ij})
		\right\} \\
		&=\max \left\{
		\EE_{z_{ij}} \hat \beta \hat R  
		c (\hat R (w-\xi) + \hat Y, \hat Z)^{-\gamma_{i}},
		\; w^{-\gamma_{i}}\right\}.
	\end{align*}
	Recall that $\hat Z = (\hat{\bar Z}, \hat{\tilde Z})$ by definition.
	Therefore, the third equality above follows from $\bar p_{ii}=1$, which 
	implies $\hat{\bar Z} = \bar z_i$. Consequently, 
	$\hat \gamma = \gamma(\bar z_i) = \gamma_i$ with probability one 
	conditional on $Z = z_{ij}$. We then have
	\begin{equation*}
		w^{-\gamma_{i}} < \EE_{z_{ij}} \hat \beta \hat R  
		c (\hat R (w-\xi) + \hat Y, \hat Z)^{-\gamma_{i}}.
	\end{equation*}
	Under the contradiction hypothesis, we have
	\begin{align*}
		x_i^{*}(\tilde z_{j})w^{-\gamma_{i}}
		&< \xi^{-\gamma_i} 
		=\EE_{z_{ij}} \hat \beta \hat R  
		c (\hat R (w-\xi) + \hat Y, \hat Z)^{-\gamma_{i}} \\
		&\leq \EE_{z_{ij}} \hat \beta \hat R  
		x_i^{*}(\hat{\tilde{Z}}) (\hat R (w-\xi) + \hat Y)^{-\gamma_{i}} \\
		&< \EE_{z_{ij}} \hat \beta \hat R^{1-\gamma_{i}}  
		x_i^{*}(\hat{\tilde Z}) \left(
			1-x_i^{*}(\tilde z_{j})^{-1/\gamma_{i}} 
		\right)^{-\gamma_{i}} w^{-\gamma_{i}}.
	\end{align*}
	By simple algebra and using the fact that $\bar p_{ii} = 1$, we obtain 
	\begin{align*}
		x_i^{*}(\tilde z_{j}) &< \left(
			1 + \left[\EE_{z_{ij}} \hat \beta \hat R^{1-\gamma_i}
			x_i^*(\hat{\tilde Z}) \right]^{1/\gamma_i}
		\right)^{\gamma_i}  \\
		&= \left(
			1 + \left[\bar p_{ii} \sum_{k=1}^M \tilde p_{jk} 
			\EE_{z_{ij},z_{ik}} \hat\beta \hat R^{1-\gamma_i}
			x_i^*(\tilde z_k) \right]^{1/\gamma_i}
		\right)^{\gamma_i}
		= (F_ix_i^*)(\tilde z_j).
	\end{align*}
	This contradicts the definition of $F_i$ and $x_i^{*}(\tilde z_j)$.
\end{proof}

\begin{proof}[Proof of Theorem~\ref{t:non0_ampc1}]
	Same to the proof of Theorem~\ref{t:0ampc_thm}, we define the sequence $\{c_{n}\} \subset \cC$ by $c_{0}(w,z)=w$ and $c_{n}=Tc_{n-1}$ for all $n\geq 1$, which gives a decreasing sequence of functions $0 \leq c_{n} \leq c_{n-1} \leq w$ for all $n$, and the optimal policy $c^{*}(w,z)=\lim_{n \to \infty} c_{n}(w,z)$ is well defined.
	Moreover, define the sequence $\{x_{n}\} \subset \RR_{+}^{M}$ by $x_{0}(\tilde z_j)=1$ and $x_{n}(\tilde z_j)=(F_i x_{n-1})(\tilde z_j)$ for all $j \in\{1,\dots,M\}$. A repeated application of Lemma \ref{lm:limsup} gives
	\begin{equation}\label{eq1:prop-limN}
		\limsup_{w \to \infty}\frac{c^{*}(w,z_{ij})}{w} 
		\leq x_i^{*}(\tilde z_{j})^{-1/\gamma_{i}}.
	\end{equation}
	Since $c_0(w,z_{ij}) / w = 1 \geq x_i^*(\tilde z_j)^{-1/\gamma_i}$ for all $w>0$ and $z_{ij}$, a repeated application of Lemma \ref{lm:liminfN} implies that $c_{n}(w,z_{ij}) /w \geq x_i^{*}(\tilde z_{j})^{-1/\gamma_{i}}$ for all $w>0$ and $z_{ij}$. Since $c_{n}\to c^{*}$ pointwise, letting $n\to \infty$, dividing both sides by $w>0$, and letting $w \to \infty$, we obtain
	\begin{equation}\label{eq2:prop-limN}
		\liminf_{w \to \infty}\frac{c^{*}(w,z_{ij})}{w} 
		\geq x^{*}(\tilde z_j)^{-1/\gamma_{i}}
	\end{equation}
	for all $j \in \{1, \dots, M\}$. Then the stated claim follows from
	\eqref{eq1:prop-limN} and \eqref{eq2:prop-limN}.
\end{proof}

To prove Theorem \ref{t:non0_ampc2} and Theorem~\ref{t:non0_ampc3}, we need the following lemma.

\begin{lemma}\label{lm:lim}
	Fix $i \in \{1,\dots, N\}$. Suppose the following conditions hold:
	\begin{enumerate}
		\item $\sum_{j=1}^{i-1}\bar{p}_{ij}=0$.
%		\item For all $k>i$, either $\bar{c}_k>0$ or $\bar{p}_{ik}=0$.
%		\label{eq:cp} 
		\item There exists $m>0$ such that $R \geq m$ with probability one.
	\end{enumerate}
	If $c \in \cC$ and for all $z_{ij} \in \ZZ$, we have
	\begin{equation*}
		\lim_{w \to \infty}\frac{c(w,z_{ij})}{w} = x(\tilde{z}_{j})^{-1/\gamma_{i}},
	\end{equation*}
	then for all $z_{ij} \in \ZZ$, we have
	\begin{equation}
		\label{eq:lm-lim}
		\lim_{w \to \infty}\frac{Tc(w,z_{ij})}{w} = (F_{i}x)(\tilde{z}_{j})^{-1/\gamma_{i}}.
	\end{equation}
\end{lemma}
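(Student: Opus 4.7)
The plan is to show $\limsup_{w\to\infty} Tc(w,z_{ij})/w \leq (F_i x)(\tilde z_j)^{-1/\gamma_i}$ via a direct appeal to Lemma~\ref{lm:limsup}, and then to match this bound with an equal $\liminf$. Since the hypothesis $c(w,z_{ij})/w \to x(\tilde z_j)^{-1/\gamma_i}$ gives in particular $\limsup_{w\to\infty} c(w,z_{ij})/w \leq x(\tilde z_j)^{-1/\gamma_i}$, Lemma~\ref{lm:limsup} immediately delivers the $\limsup$ half. The remaining task is to prove $\liminf_{w \to \infty} Tc(w,z_{ij})/w \geq (F_i x)(\tilde z_j)^{-1/\gamma_i}$.

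To obtain the $\liminf$ bound, I would select a sequence $w_n \uparrow \infty$ along which $\alpha_n := Tc(w_n, z_{ij})/w_n \to \underline{\alpha} := \liminf_w Tc(w,z_{ij})/w$, and pass to the limit in the Euler equation $\alpha_n^{-\gamma_i} = \max\{\EE_{z_{ij}} \hat{\beta}\hat{R}\,\lambda_n^{-\hat{\gamma}} w_n^{\gamma_i-\hat{\gamma}},\,1\}$, where $\lambda_n := c(\hat{R}(1-\alpha_n)w_n + \hat{Y}, \hat{Z})/w_n$ and $\hat{\gamma}:=\gamma(\hat{\bar Z})$. The assumption $\sum_{\ell<i}\bar{p}_{i\ell}=0$ ensures $\hat{\gamma} \geq \gamma_i$ almost surely conditional on $Z = z_{ij}$, so $w_n^{\gamma_i-\hat\gamma} \leq 1$; on $\{\hat{\bar Z}=\bar z_k\}$ with $k>i$ this factor vanishes as $n\to\infty$, while on $\{\hat{\bar Z}=\bar z_i\}$ it equals $1$.

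The technical heart of the argument is applying dominated convergence to exchange limit and expectation. The hypothesis $R\geq m>0$ plays two roles: first, it forces $\hat{R}(1-\alpha_n)w_n + \hat{Y}\to\infty$ as $n\to\infty$ (provided $\underline\alpha<1$), so the hypothesis on $c$ at state $\hat Z$ yields the pointwise limit $\lambda_n \to x(\hat{\tilde Z})^{-1/\hat{\gamma}}\hat{R}(1-\underline\alpha)$; second, it delivers $\hat R^{-\hat\gamma}\leq m^{-\hat\gamma}$, which combined with a linear-from-below bound on $c$ past a threshold (available because $c(w',\hat Z)/w' \to x(\tilde z_{\cdot})^{-1/\hat\gamma}>0$) produces an integrable majorant of $\hat\beta\hat R\,\lambda_n^{-\hat\gamma}w_n^{\gamma_i-\hat\gamma}$ under Assumption~\ref{a:spec_and_finite_exp}. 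The edge case $\underline\alpha=1$ arises only when $(F_i x)(\tilde z_j)^{-1/\gamma_i}=1$ (handled below), since the $\limsup$ bound already forces $\alpha_n \leq (F_i x)(\tilde z_j)^{-1/\gamma_i} + o(1)$, so whenever $(F_i x)(\tilde z_j)^{-1/\gamma_i}<1$ the factor $1-\alpha_n$ is eventually bounded away from zero and the majorant is well defined.

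Passing to the limit, contributions from $\hat{\bar Z}=\bar z_k$ with $k>i$ vanish, leaving only the block $\hat{\bar Z}=\bar z_i$, which carries mass $\bar p_{ii}$. When $\bar p_{ii}>0$, this yields $\underline\alpha^{-\gamma_i} = \max\{(G_i x)(\tilde z_j)(1-\underline\alpha)^{-\gamma_i},\,1\}$, whose relevant solution is $\underline\alpha = (1+(G_i x)(\tilde z_j)^{1/\gamma_i})^{-1} = (F_i x)(\tilde z_j)^{-1/\gamma_i}$; when $\bar p_{ii}=0$ the expectation vanishes entirely in the limit, giving $\underline\alpha = 1 = (F_i x)(\tilde z_j)^{-1/\gamma_i}$ in agreement with $F_i x \equiv 1$ in that subcase. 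Together with the $\limsup$ half this proves \eqref{eq:lm-lim}. The main obstacle will be the dominated convergence step: constructing a uniform integrable majorant without letting $1-\alpha_n$ collapse to zero is where both the Lemma~\ref{lm:limsup} upper bound and the positive lower bound on $R$ do the essential work.
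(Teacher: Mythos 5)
Your proposal is correct and follows essentially the same route as the paper's proof: the $\limsup$ bound is taken from Lemma~\ref{lm:limsup}, and the matching lower bound is obtained by passing to the limit in the Euler equation along a subsequence, using $R \geq m$ both to force the argument of $c$ to infinity and to build the integrable majorant for dominated convergence, with the cross-terms involving higher risk aversion killed by $w_n^{\gamma_i - \gamma_h} \to 0$ and the case split $\bar p_{ii}=0$ versus $\bar p_{ii}>0$ handled exactly as in the paper (the paper additionally peels off the degenerate cases $x(\tilde z_j)=\infty$ and $\PP_{z_{ij}}(\hat\beta\hat R>0)=0$ up front, but your argument covers these implicitly). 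No substantive gap.
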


\begin{proof}
	Fix $i \in \{1, \dots, N\}$. If $\PP_{z_{ij}} (\hat \beta \hat R > 0) = 0$ 
	for some $j$, then by the definition of $T$ and $G_i$, we have 
	$Tc(w,z_{ij}) \equiv w$ and $(G_i x)(\tilde z_j) \equiv 0$. In this case, 
	\eqref{eq:lm-lim} holds trivially since $\lim_{w\to\infty} Tc(w,z_{ij})/w = 
	1 = (F_ix)(\tilde z_j)^{-1/\gamma_i}$. In what follows, we consider 
	$\PP_{z_{ij}} (\hat \beta \hat R > 0) > 0$ for all $j$. 
	
	In this case, if $x(\tilde z_j) = \infty$ for some $j$, then 
	\eqref{eq:lm-lim} holds trivially by Lemma~\ref{lm:limsup}. In what 
	follows, we consider $x(\tilde z_j) < \infty$ for all $j$. 
	
	Let $\alpha(z_{ij})$ be an accumulation point of $Tc(w,z_{ij})/w$ as $w \to \infty$. By Lemma~\ref{lm:limsup}, we have $\alpha(z_{ij}) \leq 1$. By definition, we can take an increasing sequence $\{w_{n}\}$ such that $\alpha(z_{ij})=	\lim_{n \to \infty} Tc(w_{n},z_{ij})/w_{n}$. Define $\alpha_{n}(z_{ij})=Tc(w_{n},z_{ij})/w_{n} \in (0,1]$ and $\lambda_{n}(z_{ij})$ as in Lemma \ref{lm:limsup}. By the proof of Lemma \ref{lm:limsup}, we have
	\begin{equation}
		\label{eq1:lm-lim}
		\lim_{n \to \infty}  \lambda_{n}(z_{ij}) = x(\hat Z)^{-1/\gamma(\hat Z)}\hat R (1-\alpha(z_{ij})).
	\end{equation}
	Since $Tc(w_{n},z_{ij})=\alpha_{n}w_{n}$ solves the Euler equation, we have
	\begin{align}
		\nonumber
		\alpha_{n}(z_{ij})^{-\gamma_{i}} 
		&= \max \left\{
		\EE_{z_{ij}} \hat \beta \hat R  
		\frac{c (\hat R (w_{n}-\alpha_{n}(z_{ij})w_{n}) + \hat Y, \hat Z)^{-\gamma(\hat Z)}}
		{w^{-\gamma_{i}}},
		\; 1 
		\right\} \\ \label{eq:alpha_n_contrd}
		&= \max \left\{
		\EE_{z_{ij}} \hat \beta \hat R 
		\bigg[\frac{c(\hat R (w_{n}-\alpha_{n}(z_{ij})w_{n}) + \hat Y, \hat Z)}
		{w_{n}}\bigg]^{-\gamma(\hat Z)}w_{n}^{\gamma_{i}-\gamma(\hat Z)},
		\; 1 
		\right\}.
	\end{align}
	Since $\hat R \geq m >0$ with probability one by assumption,
	\eqref{eq1:lm-lim} implies that
	\begin{equation}
		\label{lim-equation1}
		\liminf_{n \to \infty}  \lambda_{n}(z_{ij}) \geq  x(\hat Z)^{-1/\gamma(\hat Z)}m (1-\alpha(z_{ij})).
	\end{equation}
	\textit{Case 1.} $\bar p_{ii} = 0$.
	If $\bar p_{ii} = 0$, then $(F_i x)(\tilde z_j) \equiv 1$. It suffices to show that $\alpha (z_{ij}) = 1$ for all $j$. Because in this case, we have $\alpha(z_{ij}) = (F_i x)(\tilde z_j)$, then the stated claim holds. 
	
	Suppose $\alpha(z_{ij}) < 1$ for some $j$, then $\lim_{n\to\infty} \alpha_n(z_{ij}) < 1$, implying that $\alpha_n(z_{ij}) < 1$ for all $n\geq M$ for some integer $M$. By \eqref{eq:alpha_n_contrd}, we have
	\begin{equation}\label{eq:alpha_n}
		\alpha_{n}(z_{ij})^{-\gamma_{i}} 
		= \EE_{z_{ij}} \hat \beta \hat R 
		\bigg[\frac{c(\hat R (w_{n}-\alpha_{n}(z_{ij})w_{n}) + \hat Y, \hat Z)}
		{w_{n}}\bigg]^{-\gamma(\hat Z)}w_{n}^{\gamma_{i}-\gamma(\hat Z)}.
	\end{equation}
	Since we now work with $x(\tilde z_j) < \infty$ for all $j$, as pointed out above. By the definition of $F_i$, we have $x(\tilde z_{j}) \in [1,\infty)$ for all $j$. Hence, for any
	\begin{equation*}
		\underline{\lambda} \in \left(0, \min_{j}x(\tilde z_{j})^{-1/\gamma_{i}}m (1-\alpha(z_{ij})) \right)
		\quad \text{and} \quad 
		\bar{\alpha} \in \left(\max_j\alpha(z_{ij}),1 \right),
	\end{equation*}
	there exists $N \in \mathbb{N}$ such that $\lambda_{n} \geq \underline{\lambda}$, $\alpha_{n} < \bar{\alpha}$, and $w_{n} \geq 1$ for all $n \geq N$ and $z_{ij}$. The integrand in the expectation in (\ref{eq:alpha_n_contrd}) is bounded above by $\max_{z, \hat z\in \ZZ} \hat \beta \hat R \underline{\lambda}^{-\gamma(\hat z)} 
	< \infty$,
%	%
%	\begin{equation*}
%		\max_{z, \hat z\in \ZZ} \hat \beta \hat R \underline{\lambda}^{-\gamma(\hat z)} 
%		< \infty,
%	\end{equation*}
%	%
	which is integrable. Letting $n \to \infty$, applying the dominated convergence theorem, and using $\alpha(z_{ij}) <1$, we obtain
	\begin{equation*}
		\label{eq3:lm-lim}
		\alpha(z_{ij})^{-\gamma_{i}}
		= \lim_{n \to \infty} \EE_{z_{ij}} \hat \beta \hat R \lambda_n(z_{ij})^{-\gamma(\hat Z)}
		w_{n}^{\gamma_{i}-\gamma(\hat Z)} 
		= \EE_{z_{ij}} \hat \beta \hat R 
		\left[\lim_{n \to \infty} \lambda_n(z_{ij})\right]^{-\gamma(\hat Z)}
		\lim_{n \to \infty} w_{n}^{\gamma_{i}-\gamma(\hat Z)}. \nonumber
	\end{equation*}
	Since $\sum_{\ell=1}^i \bar p_{i\ell} = 0$ implies $\sum_{h=i+1}^N \bar p_{ih} = 1$, and because $\PP_{z_{ij}}(\hat \beta \hat R > 0) > 0$ in this case, and because
	\begin{equation*}
		\left[\lim_{n \to \infty} \lambda_n(z_{ij})
		\right]^{-\gamma(\hat Z)}
		\leq \max_{\hat Z \in \ZZ} 
		\underline{\lambda}^{-\gamma(\hat Z)} < \infty,
	\end{equation*}
	together with the fact that 
	$\lim_{n\to\infty} w_n^{\gamma_i - \gamma_h} = 0$ for all $h > i$,	we 
	obtain the conclusion that $\alpha(z_{ij})^{-\gamma_i} = 0$. This gives 
	$\alpha(z_{ij}) = \infty$, which is contradicted with the assumption that 
	$\alpha(z_{ij}) < 1$. Hence, we must have $\alpha(z_{ij}) = 1$ for all $j$.
	  
    \textit{Case 2.} $\bar p_{ii} > 0$. If $\bar p_{ii} > 0$, then by 
    Lemma~\ref{lm:limsup} we obtain $\alpha(z_{ij}) < 1$ for all $j$.
    Indeed, in this case $\PP_{z_{ij}}(\hat \beta \hat R > 0) > 0$ for all $j$, 
    which guarantees that each row of $G_i$ has at least one positive entry.
    Hence $(G_i x)(\tilde z_j) > 0$ for all $j$, and consequently 
    $(F_i x)(\tilde z_j) > 1$ for all $j$. Hence \eqref{eq:alpha_n} holds. 
	This implies that
	\begin{align*}
		\alpha(z_{ij})^{-\gamma_{i}}
		&=  \bar{p}_{ii} \sum_{k=1}^{M} \tilde{p}_{jk} 
		\EE_{z_{ij}, z_{ik}} 
		    \hat \beta \hat R \left[ \lim_{n \to \infty} \lambda_{n}(z_{ij})\right]^{-\gamma_{i}} \\
		& + \sum_{h=i+1}^{N} \bar{p}_{ih} \sum_{k=1}^{M} \tilde{p}_{jk} 
		\EE_{z_{ij}, z_{hk}} \left( 
		    \hat \beta \hat R \left[\lim_{n \to \infty}\lambda_{n}(z_{ij})\right]^{-\gamma_{h}} \lim_{n \to \infty} w_{n}^{\gamma_{i}-\gamma_{h}} 
		\right)\\
		&= \bar{p}_{ii} \sum_{k=1}^{M} \tilde{p}_{jk} 
		\EE_{z_{ij}, z_{ik}}
  		    \hat \beta \hat R \left(x(\tilde{z}_{k})^{-1/\gamma_{i}} \hat R (1-\alpha(z_{ij})) \right)^{-\gamma_{i}},
	\end{align*}
	where the second equality follows by the same reasoning as in case 1.
	Solving the above equality, we obtain
	\begin{align*}
		\lim_{w \to \infty} \frac{Tc(w,z_{ij})}{w}
		&=\alpha(z_{ij})
		= \frac{1}{1+\left( 
			\bar{p}_{ii} \sum_{k=1}^{M} \tilde{p}_{jk} 
			\EE_{z_{ij}, z_{ik}} 
			\hat \beta \hat R^{1-\gamma_{i}}x(\tilde z_{k})
		\right)^{1/\gamma_{i}}}\\
		&= \frac{1}{1+(G_ix)(\tilde z_j)^{1/\gamma_{i}}}
		= (F_{i}x)(\tilde{z}_{j})^{-1/\gamma_{i}}.
	\end{align*}
	The proof is now complete.
\end{proof}

\begin{proof}[Proof of Theorem~\ref{t:non0_ampc2}]
	Same as before, let $c_0(w,z) \equiv w$ and $c_n = T^nc_0$. A repeated application of Lemma~\ref{lm:limsup} implies that, for all $z_{ij}\in\ZZ$, we have
	\begin{equation*}
	    \limsup_{w\to \infty} \frac{c^*(w,z_{ij})}{w} \leq (F_i^n x)(\tilde z_j).	
	\end{equation*}	
	A repeated application of Lemma~\ref{lm:lim} implies that
	\begin{equation*}
		\liminf_{w\to \infty} \frac{c^*(w,z_{ij})}{w} = (F_i^n x)(\tilde z_j).
	\end{equation*}
	Since in addition $\bar p_{ii} = 0$, we have $F_i^n(\tilde z_j) \equiv 1$ for all $n\geq 1$. The stated claim follows immediately from the above inequalities.	
\end{proof}

\begin{proof}[Proof of Proposition~\ref{pr:non-concave}]
	Fix $z_{ij} \in \ZZ$ and denote $z = z_{ij}$.
	By Proposition~\ref{pr:binding}, we know that $c^*(w,z) =w$ if and only if 
	$w \leq \bar w(z)$. In particular, $c^*(w,z) < w$ for all $w > \bar w(z)$.
	Moreover, since $c^* \in \cC$, there exists some $M<\infty$ such that 
	\begin{equation*}
		\bar w(z) \leq u'(\cdot, z)^{-1} \left[ 
		\EE_z \hat \beta\hat R 
		\left(u'(\hat Y,\hat Z) + M \right) 
		\right] < \infty,
	\end{equation*}
	where the first inequality is due to the definition of $\bar w(z)$ and the 
	monotonicity of $u'(\cdot, z)$, and the second inequality follows from 
	Assumption~\ref{a:spec_and_finite_exp}. Hence, there exist some 
	$w \in (\bar w(z), \infty)$ and $\epsilon > 0$ such that
	\begin{equation*}%\label{eq:non-conc}
		c^*(w,z) < w (1-\epsilon)
		\quad \Longrightarrow \quad 
		c^*(w,z)/w < 1- \epsilon.
	\end{equation*} 
	However, by Theorem~\ref{t:non0_ampc2}, there exists some sufficiently 
	large $w'>w$ such that
	\begin{equation*}
		c^*(w',z)/w' \geq 1- \epsilon.
	\end{equation*}
	As a result, we have
	\begin{equation*}
		c^*(w,z)/w  < c^*(w',z)/w'
	\end{equation*}
	for some $w'>w$. Hence $c^*(w,z)$ is not concave in $w$. The proof is 
	complete.
\end{proof}

To prove Theorem \ref{t:non0_ampc3}, we extend the utility function to 
\begin{equation*}
	u(c,z_{ij}) = \begin{cases}
		\psi(\bar z_i) \frac{c^{1-\gamma(\bar z_i)}}{1-\gamma(\bar z_i)}, & 
		\text{if } \gamma(\bar z_i)>0 \text{ and } \gamma(\bar z_i)\neq 1,  \\
		\psi(\bar z_i) \log c, & \text{if } \gamma(\bar z_i) = 1,
	\end{cases}
%	= \begin{cases}
%		\psi_i \frac{c^{1-\gamma_i}}{1-\gamma_i} & 
%		\text{if } \gamma_i>0 \text{ and } \gamma_\neq 1  \\
%		\psi_i \log c & \text{if } \gamma_i = 1
%	\end{cases}
\end{equation*}
where $\psi_i := \psi(\bar z_i) > 0$ denotes the weight on $\bar z_i$. We denote 
\begin{equation*}
	\psi := (\psi_1, \dots, \psi_N) = (\psi(\bar z_i), \dots, \psi (\bar z_N)).
\end{equation*}
Furthermore, we introduce a new operator. Fix $i \in \{i, \dots, N\}$, we define the $M\times M$ matrix $D_i$ with entries
\begin{equation*}
	d_{jk} = \tilde p_{jk} \sum_{h=1}^N \bar p_{ih} \psi_h 
	\EE_{z_{ij},z_{hk}} \beta(z_{ij}, z_{hk}, \hat \epsilon) 
	R(z_{ij}, z_{hk}, \hat \epsilon)^{1-\gamma_i},
\end{equation*}
where, as before, the expectation is taken with respect to $\hat \epsilon$. We 
then set
\begin{equation*}
	U_i = \tilde P \circ D_i
\end{equation*}
where $\tilde P \circ D_i$ as the Hadamard product of $\tilde P$ and $D_i$.
For $y \in \RR_+^M$, we define 
\begin{equation}\label{eq:J_ope}
	(J_iy)(\tilde z_j) := \left(
	1 + (U_i y)(\tilde z_j)^{1/\gamma_i}
	\right)^{\gamma_i}.
\end{equation}
%
%{\tiny
%\begin{equation*}
%	U_i = \begin{bmatrix}
%		\tilde{p}_{11}\sum_{h=i}^{N} \bar{p}_{ih} \EE_{z_{i1},z_{h1}} \beta R^{1-\gamma_{i}} & \tilde{p}_{12}\sum_{h=i}^{N} \bar{p}_{ih} \EE_{z_{i1},z_{h2}} \beta R^{1-\gamma_{i}} &\ldots& \tilde{p}_{1M}\sum_{h=i}^{N} \bar{p}_{ih} \EE_{z_{i1},z_{hM}} \beta R^{1-\gamma_{i}}\\
%		\tilde{p}_{21} \sum_{h=i}^{N} \bar{p}_{ih} \EE_{z_{i2},z_{h1}} \beta  R^{1-\gamma_{i}}& \tilde{p}_{22} \sum_{h=i}^{N} \bar{p}_{ih} \EE_{z_{i2},z_{h2}} \beta R^{1-\gamma_{i}}& \dots& \tilde{p}_{2M}\sum_{h=i}^{N} \bar{p}_{ih} \EE_{z_{i2},z_{hM}} \beta R^{1-\gamma_{i}} \\
%		\vdots& \vdots &\ddots& \vdots \\
%		\tilde{p}_{M1} \sum_{h=i}^{N} \bar{p}_{ih} \EE_{z_{iM},z_{h1}} \beta R^{1-\gamma_{i}}& \tilde{p}_{M2} \sum_{h=i}^{N} \bar{p}_{ih} \EE_{z_{iM},z_{h2}} \beta R^{1-\gamma_{i}}& \ldots& \tilde{p}_{MM}\sum_{h=i}^{N} \bar{p}_{ih} \EE_{z_{iM},z_{hM}} \beta  R^{1-\gamma_{i}}
%    \end{bmatrix}
%\end{equation*}
%}
Denote by $c^*(w,z; \psi, \gamma)$ the optimal consumption at $(w,z)$ when the weight vector is $\psi$ and $\gamma_i \equiv \gamma$ for all $i$. As usual, denote by $c^*(w,z)$ the optimal consumption at $(w,z)$ when $\psi_i \equiv 1$ for all $i$.  

\begin{lemma}
	\label{lm:Clim}
	Fix $i\in \{1, \dots, N\}$. Suppose $\sum_{\ell=1}^{i-1} \bar{p}_{i\ell}=0$ and $\gamma \equiv \gamma_i$. If $r(U_i) < 1$, then $J_i: \RR_+^M \to \RR_+^M$ and $J_i$ has a unique fixed point $y_i^* \in \RR_+^M$. Moreover, for all $j \in \{1, \dots, M\}$, we have 
	\begin{equation}
		\label{eq:lm-Clim}
		\lim_{w \to \infty} \frac{c^*(w,z_{ij}; \psi, \gamma_i)}{w}
		=y_i^*(\tilde z_{j})^{-1/\gamma_i}.
	\end{equation}
\end{lemma}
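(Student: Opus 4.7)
The plan is to split the claim into two parts: first, the existence and uniqueness of the fixed point $y_i^*$; second, the two-sided squeeze yielding the limit \eqref{eq:lm-Clim}. The first part is a direct application of Lemma~\ref{lm:F_conv} with $G = U_i$ and $\gamma = \gamma_i$: since $r(U_i) < 1$, the map $J_i$ is a well-defined self-map on $\RR_+^M$ with a unique fixed point $y_i^* \in \RR_+^M$, and the iterates $J_i^n y_0$ converge to $y_i^*$ from any $y_0 \in \RR_+^M$.

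For the limit, I would mimic the structure of the proof of Theorem~\ref{t:non0_ampc1}, but with two modifications to accommodate (i) the weights $\psi_h$ and (ii) the strict upward movements allowed by $\sum_{\ell=1}^{i-1} \bar p_{i\ell} = 0$ together with $\bar p_{ii} \in (0,1)$. Because $\gamma(\bar z_h) \equiv \gamma_i$ across all states in this lemma, the factors $w_n^{\gamma_i - \gamma(\hat Z)}$ that dominated the analysis of Lemma~\ref{lm:lim} collapse to one, so contributions from transitions $\bar z_i \to \bar z_h$ with $h \geq i$ all enter the asymptotic Euler equation symmetrically, weighted by $\bar p_{ih}\psi_h$. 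This is exactly what the entries of $D_i$ encode: the Euler equation under the weighted utility $u(c, z_{ij}) = \psi_i c^{1-\gamma_i}/(1-\gamma_i)$ reads
\begin{equation*}
\psi_i\, c^*(w,z_{ij})^{-\gamma_i} = \EE_{z_{ij}} \hat\beta \hat R\, \psi(\hat{\bar Z})\, c^*(\hat w, \hat Z)^{-\gamma_i}
\end{equation*}
on the interior region, and dividing by $w^{-\gamma_i}$ and passing to the limit along a subsequence $w_n \to \infty$ produces the fixed point equation of $J_i$.

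Concretely, define the iterates $c_n = T^n c_0$ with $c_0(w,z) \equiv w$, and $y_n = J_i^n \mathbf{1}$. I would first establish an analog of Lemma~\ref{lm:limsup}: if $\limsup_{w\to\infty} c(w,z_{ij})/w \leq y(\tilde z_j)^{-1/\gamma_i}$ for some $y \in [1,\infty)^M$, then the same holds for $Tc$ with $y$ replaced by $J_i y$. The induction step uses Fatou's lemma applied to the weighted Euler equation; because $\gamma$ is constant, no $w_n^{\gamma_i - \gamma_h}$ terms appear and the full sum over $h \geq i$ (which, by $\sum_{\ell<i}\bar p_{i\ell}=0$, is the support of $\hat{\bar Z}$) aggregates into $(D_i y)(\tilde z_j)$ via the weights $\psi_h$. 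Iterating, sending $n \to \infty$, and invoking Lemma~\ref{lm:F_conv} to get $y_n \to y_i^*$ yields the upper bound $\limsup_{w\to\infty} c^*(w,z_{ij})/w \leq y_i^*(\tilde z_j)^{-1/\gamma_i}$. For the lower bound, I would prove a pointwise analog of Lemma~\ref{lm:liminfN}: if $c(w,z_{ij})/w \geq y_i^*(\tilde z_j)^{-1/\gamma_i}$ pointwise, then so does $Tc$, using the fact that $y_i^* = J_i y_i^*$. Starting from $c_0(w,z) \equiv w$, which satisfies this since $y_i^*(\tilde z_j) \geq 1$, the induction combined with $c_n \downarrow c^*$ gives the matching $\liminf$ bound.

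The main obstacle will be justifying the passage to the limit in the upper-bound step when $\bar p_{ii} \in (0,1)$ and $h > i$ both contribute. In Lemma~\ref{lm:lim} the troublesome upward-transition terms were killed automatically by $w_n^{\gamma_i - \gamma_h} \to 0$, but here they survive and must be controlled via dominated convergence on the integrands $\hat\beta \hat R\, \psi(\hat{\bar Z})\, \lambda_n^{-\gamma_i}$. This requires a uniform positive lower bound on $\lambda_n$, which is where the hypotheses of Theorem~\ref{t:non0_ampc3} (namely $R \geq m > 0$ and $Y \geq b > 0$) enter: they ensure that $\hat R(1-\alpha_n)w_n + \hat Y \to \infty$ at a controlled rate and, together with the uniform upper bound $c^*(w,z)/w \leq 1$, yield $\liminf_n \lambda_n > 0$ so that $\lambda_n^{-\gamma_i}$ stays bounded uniformly in $n$ on the relevant event. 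Once this domination is in place, the limsup becomes an equality of the fixed-point type and the squeeze closes.
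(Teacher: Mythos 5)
Your proposal matches the paper's (omitted) proof: the paper likewise derives existence and uniqueness of $y_i^*$ directly from Lemma~\ref{lm:F_conv} applied with $G=U_i$, and establishes \eqref{eq:lm-Clim} by repeating the argument of Theorem~\ref{t:non0_ampc1}, i.e.\ a weighted analog of Lemma~\ref{lm:limsup} for the $\limsup$ bound and a weighted analog of Lemma~\ref{lm:liminfN} for the $\liminf$ bound, with the constant $\gamma\equiv\gamma_i$ collapsing the $w_n^{\gamma_i-\gamma(\hat Z)}$ factors so that all transitions $h\geq i$ aggregate into $U_i$. One remark: your final paragraph's worry about dominated convergence is unnecessary here --- the upper bound needs only Fatou's lemma in the direction $\liminf_n\EE[\cdot]\geq\EE[\liminf_n\cdot]$ (nonnegative integrands suffice), and the lower bound is a pointwise algebraic comparison against the fixed point $y_i^*=J_iy_i^*$, so the hypotheses $R\geq m$ and $Y\geq b$ from Theorem~\ref{t:non0_ampc3} are not needed for this lemma as stated.
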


\begin{proof}
	The first claim follows immediately from Lemma~\ref{lm:F_conv}, which implies that $J_i$ has a unique fixed point in $y_i^* \in \RR_+^M$. The proof of the second claim is similar to the proof of Theorem~\ref{t:non0_ampc1} and thus omitted. 
\end{proof}

%\begin{lemma}
%	\label{lm:consumption1}
%	Consider the initial candidate $c_0(w,z) \equiv w$ and let $c_n :=T^nc_0$ for $n\geq 0$. Then there exists $m > 0$ such that $c_{n}(w,z) \geq m$ for all $w \geq m$, $z\in \ZZ$, and $n\geq 0$.
%\end{lemma}
%
%\begin{proof}
%	As discussed previously, we have $c_{n} \leq c_{n-1}$ for all $n \geq 1$. Recall $\bar w_c$ and $\bar w$ defined in \eqref{eq:a_bar}. To simplify notation, we denote $\bar w_n := \bar w_{c_n}$. Then $\bar w_n \geq \bar w$ for all $n \geq 0$. Moreover, by definition, $\bar w(z) > 0$ for all $z\in \ZZ$. Choose $m \in (0, \min_{z\in \ZZ}\bar{w}(z))$. Then we have $c_n (m, z) = m$ by Proposition~\ref{pr:binding}. Then, by the monotonicity of $c_n$, we have $c_n(w,z) \geq c_n(m,z) = m > 0$ for all $w\geq m,z\in \ZZ$, and $n \geq 0$. 
%	%
%\end{proof}

Let $T_{0}$ be the time iteration operator for the special case $\gamma_i\equiv \gamma$.

\begin{proposition}[Parametric monotonicity with respect to risk attitudes]
	\label{pr:pm}
	Fix $i \in \{1, \dots, N\}$. Suppose $\sum_{\ell=1}^{i-1} \bar{p}_{i\ell}=0$ and there exists $b>0$ such that $Y(z, \hat z, \hat \epsilon)\geq b$ almost surely conditional on $(z,\hat z)$. Choose $\psi$ such that $b^{-\gamma_{h}} / \psi_h \leq b^{-\gamma_{i}}/\psi_{i}$
%	%
%	\begin{equation*}
%		b^{-\gamma_{h}} / \psi_h \leq b^{-\gamma_{i}}/\psi_{i}
%	\end{equation*}
%	%
	for all $h \geq i$. Then $c^*(w, z_{ij}; \psi, \gamma_i) \leq c^*(w,z_{ij})$ for all $w \geq b$ and $z_{ij} \in \ZZ$.
\end{proposition}

\begin{proof}
	Consider the initial candidate $c(w,z) \equiv w$. Since
	\begin{equation*}
		(T_0^{n}c)(w,z_{ij}) \to c^*(w,z_{ij}; \psi, \gamma_{i})
		\quad \text{and} \quad 
		(T^{n}c)(w,z_{ij}) \to c^{*}(w,z_{ij}),
	\end{equation*}
	for all $(w,z_{ij})$, to prove the stated claim, it suffices to show that
	\begin{equation*}
		(T_0^{n}c)(w,z_{ij}) \leq (T^{n}c)(w,z_{ij})
		\quad \text{for all $n \in \NN$, $w\geq b$, and $z_{ij}\in \ZZ$}.
	\end{equation*}
	Denote $c_n^0 := T_0^n c$ and $c_n := T^n c$. The claim obviously holds for $n=0$. Suppose it holds for arbitrary $n$. Then $c_n^0(w,z_{ij}) \leq c_n(w,z_{ij})$. It remains to verify that $c_{n+1}^0(w,z_{ij}) \leq c_{n+1}(w,z_{ij})$ for all $w\geq b$ and $z_{ij}\in \ZZ$. Suppose this is not true. Then there exists $w\geq b$ and $z_{ij} \in \ZZ$ such that $c_{n+1}^0(w,z_{ij}) > c_{n+1}(w,z_{ij})$. Then we have $1 \geq c_{n+1}^0(w,z_{ij})/w > c_{n+1}(w,z_{ij})/w.$
%	%
%	\begin{equation*}
%		1 \geq c_{n+1}^0(w,z_{ij})/w > c_{n+1}(w,z_{ij})/w.
%	\end{equation*}
%	%
	By the definition of $T$ and the induction argument, we have
	\begin{align}
		c_{n+1}(w,z_{ij})^{-\gamma_{i}}
		&= \EE_{z_{ij}} \hat \beta \hat R  c_n (\hat R (w-c_{n+1}(w,z_{ij})) + \hat Y, \hat Z)^{-\hat \gamma} \nonumber \\
		&\leq \EE_{z_{ij}} \hat \beta \hat R  c_n (\hat R (w-c_{n+1}^0(w,z_{ij})) + \hat Y, \hat Z)^{-\hat \gamma} \nonumber \\
		&\leq \EE_{z_{ij}} \hat \beta \hat R  c_n^0 (\hat R (w-c_{n+1}^0(w,z_{ij})) + \hat Y, \hat Z)^{-\hat \gamma}. \label{eq:cnc0}
	\end{align}
	To proceed, we show that as long as $m$ is chosen sufficiently small, we have 
	\begin{equation}\label{eq:cn_lb}
		c_n^0(w,z) \geq b 
		\quad \text{for all } n \geq 0, w\geq b, \text{ and } z\in\ZZ. 
	\end{equation}
	To see this, recall that based on the monotonicity of $T$, we have 
	$c_{n}^0 \leq c_{n-1}^0$ for all $n \geq 1$. Recall $\bar w_c$ and $\bar w$ defined in \eqref{eq:a_bar}. To simplify notation, we denote $\bar w_n := \bar w_{c_n^0}$. Then based on the monotonicity of $u$, we have $\bar w_n \geq \bar w$ for all $n \geq 0$. Moreover, by definition, $\bar w(z) > 0$ for all $z\in \ZZ$. Hence, we can choose $b$ such that $b \in \left(0, \min_{z\in \ZZ}\bar{w}(z)\right)$.
%	%
%	\begin{equation*}
%		b \in \left(0, \min_{z\in \ZZ}\bar{w}(z)\right).
%	\end{equation*}
%	%
	Then we have $c_n^0 (b, z) = b$ by Proposition~\ref{pr:binding}. Then, by the monotonicity of $c_n^0$, we have $c_n^0(w,z) \geq c_n^0(b,z) = b > 0$ for all $w\geq b, z\in \ZZ$, and $n \geq 0$. This combined with \eqref{eq:cnc0} implies that
	\begin{align*}
		c_{n+1}(w,z_{ij})^{-\gamma_i}
		& \leq \EE_{z_{ij}} \hat \beta \hat R b^{-\hat \gamma}
		\left[
		    \frac{c_n^0 (\hat R (w-c_{n+1}^0(w,z_{ij})) + \hat Y, \hat Z)}{b}
		\right]^{-\hat \gamma}  \\
		&= \bar{p}_{ii} \sum_{k=1}^{M} \tilde{p}_{jk} 
		\EE_{z_{ij}, z_{ik}}
		\hat \beta \hat R b^{-\gamma_{i}} \left[\frac{c_n^{0} (\hat R (w-c_{n+1}^{0}(w,z_{ij})) + \hat Y, z_{ik})}{b}\right]^{-\gamma_{i}} \\
		&+ \sum_{h=i+1}^{N} \bar{p}_{ih} \sum_{k=1}^{M} \tilde{p}_{jk} 
		\EE_{z_{ij}, z_{hk}}
		\hat \beta \hat R b^{-\gamma_{h}} \left[\frac{c_n^{0} (\hat R (w-c_{n+1}^{0}(w,z_{ij})) + \hat Y, z_{hk})}{b}\right]^{-\gamma_{h}} \\
		&\leq \bar{p}_{ii} \sum_{k=1}^{M} \tilde{p}_{jk} 
		\EE_{z_{ij}, z_{ik}}
		\hat \beta \hat R b^{-\gamma_{i}} \left[\frac{c_n^{0} (\hat R (w-c_{n+1}^{0}(w,z_{ij})) + \hat Y, z_{ik})}{b}\right]^{-\gamma_{i}} \\
		&+ \sum_{h=i+1}^{N} \bar{p}_{ih} \sum_{k=1}^{M} \tilde{p}_{jk} 
		\EE_{z_{ij}, z_{hk}}
		\hat \beta \hat R \frac{{\psi}_{h} b^{-\gamma_{i}}}{{\psi}_{i}} \left[\frac{c_n^{0} (\hat R (w-c_{n+1}^{0}(w,z_{ij})) + \hat Y, z_{hk})}{b}\right]^{-\gamma_{i}} \\
		&\leq c_{n+1}^0(w,z_{ij})^{-\gamma_{i}},
	\end{align*}
	where first inequality follows from \eqref{eq:cnc0}, and the second inequality follows from \eqref{eq:cn_lb} and the fact that $b^{-\gamma_h}/\psi_h \leq b^{-\gamma_i} /\psi_i$. Hence $c_{n+1}(w,z_{ij}) \geq c_{n+1}^0(w,z_{ij})$. This is a contradiction. The stated claim then follows by induction.
\end{proof}

\begin{proof}[Proof of Theorem \ref{t:non0_ampc3}]
	By Proposition~\ref{pr:F_fpt}, $F_i$ has a unique fixed point $x_i^* \in \RR_+^M$. By Lemma \ref{lm:limsup}, we have
	\begin{equation}
		\label{eq:non0_ub}
		\limsup_{w \to \infty} \frac{c^{*}(w,z_{ij})}{w} \leq x_{i}^{*}(\tilde z_{j})^{-1/\gamma_{i}}.
	\end{equation}
	Recall that, by assumption, $Y \geq b$ for some $b>0$ with probability one. We define
	\begin{equation}\label{eq:psih}
		\psi_h = \begin{cases}
			1, & \text{if } h < i, \\
			%\alpha, & \text{if } h = i, \\
			\alpha b^{\gamma_i - \gamma_h},
			& \text{if } h \geq i,
		\end{cases}
	\end{equation}
	where $\alpha$ can be arbitrarily small.
	By doing this, we can always choose $\psi$ such that the condition $r(U_i) < 1$ in Lemma~\ref{lm:Clim} holds. By Lemma~\ref{lm:Clim}, $J_i$ has a unique fixed point $y_{i}^{*}\in \RR_+^M$. 
	
	Moreover, note that by the definition of $\psi_h$ in \eqref{eq:psih}, we have $\psi_i = \alpha$ and $\psi_h = \psi_i b^{\gamma_i-\gamma_h}$ for $h > i$. Hence, $b^{-\gamma_h} / \psi_h = b^{-\gamma_i} / \psi_i$. By the parametric monotoniciy result from Proposition~\ref{pr:pm}, we have
	\begin{equation}
		\label{eq:c_para_mono}
		c^{*}(w,z_{ij}) \geq c^{*}(w,z_{ij};\psi, \gamma_{i}).
	\end{equation}
	Define the sequence $\{x_{n}\} \subset \RR_{+}^{M}$ by $x_{0}=y_{i}^{*}$ and 
	$x_{n}=F_{i}x_{n-1}$. Then, by Lemma~\ref{lm:Clim}, 
	\begin{equation*}
		\lim_{w \to \infty} \frac{c^{*}(w,z_{ij};\psi,\gamma_i)}{w} 
		= x_0(\tilde z_j)^{-1/\gamma_i}.
	\end{equation*}
	Since by assumption $R \geq m$ for some $m>0$, Lemma~\ref{lm:lim} applies. 
	Applying $T^{n}$ to both sides of (\ref{eq:c_para_mono}) and using Lemma~\ref{lm:lim} repeatedly, we obtain
	\begin{equation*}
		\liminf_{w \to \infty} \frac{T^n c^{*}(w,z_{ij})}{w} 
		\geq \lim_{w \to \infty} \frac{T^n c^{*}(w,z_{ij};\psi,\gamma_i)}{w}
		= x_{n}(\tilde z_{j})^{-1/\gamma_{i}}.
	\end{equation*}
	Letting $n\to \infty$, since $x_{n}(\tilde z_{j}) \to x_{i}^{*}(\tilde z_{j})$ and $c^* = T^n c^*$, we obtain
	\begin{equation}
		\label{eq:non0_lb}
		\liminf_{w \to \infty} \frac{c^{*}(w,z_{ij})}{w} \geq x_{i}^{*}(\tilde z_{j})^{-1/\gamma_{i}}.
	\end{equation}
	Then the stated claim follows from \eqref{eq:non0_ub} and \eqref{eq:non0_lb}.
\end{proof}

\section*{Acknowledgements}

Qingyin Ma gratefully acknowledges the financial support from NSFC (No.72573111).

\bibliographystyle{ecta}
\bibliography{os}

\begin{center}
	\LARGE 
	Online Appendix to ``A Theory of Saving under Risk Preference Dynamics''
\end{center}

In this online appendix, we present the quantitative algorithms used in Section~\ref{s:quant}. We first describe the endogenous grid method for solving the optimal consumption policy, followed by the algorithms for computing state-dependent impulse responses.

\section{The Endogenous Grid Algorithm}\label{s:EGA}

The model is solved using a generalized endogenous grid method \citep{carroll2006method}, which avoids root-finding during operator iterations by fixing a grid for saving rather than wealth. Specifically, we construct a finite saving grid $\mathcal{S}_{G}:=\{s_{g}\}_{g=1}^{G}$, where $0=s_{1}<\cdots<s_{G}$, update a candidate consumption function $c$ on $\mathcal{S}_{G} \times \ZZ$, and then determine the corresponding wealth grid endogenously based on optimal consumption and saving.

For each $t \in \NN$, let $\mathcal{W}^{t}_{G}:=\left\{w^{t}_{g}(z)\right\}_{(g,z)=(1,1)}^{(G,Z)}$ denote the endogenous wealth grid at iteration $t$, where $w^{t}_{g}(z)$ is the wealth corresponding to saving $s_{g}$ and exogenous state $z$. Set $w^{0}_{g}(z)=s_{g}$ for all $g$ and $z$. Let $\cC(\mathcal{W}^{t}_{G})$ denote the set of continuous piecewise linear functions $c:(0,\infty)\times \ZZ \to \RR$ such that for each $z \in \ZZ$: (1) $0<c(w,z)\leq w$ for all $w>0$; (2) $c(w,z)=w$ for $0< w \leq w^{t}_{1}(z)$; (3) $c(w,z)$ is linear in $w$ on each subinterval $[w^{t}_{g}(z),w^{t}_{g+1}(z)]$ for $g=1,...,G-1$; and (4) $c(w,z)$ is linearly extrapolated for $w>w^{t}_{G}(z)$. 

\begin{algorithm}%\label{alg:egm}
	\caption{\, The endogenous grid algorithm}\label{alg:egm}
	\small 
	\begin{enumerate}[label=\textbf{Step \arabic*.}, leftmargin=*, ref=\arabic*, itemsep=1mm]
		\item(Initialization). Choose a convergence criterion $\varpi> 0$, a saving grid $\left\{s_{g}\right\}$, and an initial policy  $c_{0}(w,z)$.
		\item\label{item:ega_update} (Updating). For each $t \in \NN$, given $c_{t-1} \in \cC(\mathcal{W}^{t-1}_{G})$, update it as follows:
		\begin{enumerate}
			\item For each $(s_g,z) \in \mathcal S_G \times \ZZ$, compute $\hat w:=\hat R s_{g}+\hat Y$ and define
			\begin{equation*}
				\tilde c_{t}(s_g,z):=u'(\cdot,z)^{-1}\left[ 
				\EE_z \hat \beta \hat R u'\left( 
				c_{t-1}\big(\hat w, \hat Z\big), \hat Z
				\right)
				\right].
			\end{equation*}
			%
			%where $c_{t-1}\big(\hat w, \hat Z\big)$ is computed by linear interpolation and extrapolation of $c_{t-1} \in \cC(\mathcal{W}^{t-1}_{G})$.
			
			\item Define the upated wealth grid $\mathcal{W}^{t}_{G}:=\left\{w^{t}_{g}(z)\right\}_{(g,z)=(1,1)}^{(G,Z)}$ and the optimal consumption on it by
			\begin{equation*}
				w^{t}_{g}(z):=s_{g}+\tilde c_{t}(s_{g},z) \quad \text{and} \quad c_{t}(w^{t}_{g}(z),z):=\tilde c_{t}(s_{g},z).
			\end{equation*}
			
			\item Extend $c_t$ to $c_t \in \cC(\mathcal{W}^{t}_{G})$ by linear interpolation and extrapolation and set $c_{t}(w,z)=w$ for $w\leq w^{t}_{1}(z)$.
		\end{enumerate}
		
		\item (Convergence). Repeat Step~\ref{item:ega_update} until  $\left\{c_{t}(w^{t}_{g}(z),z)\right\}_{(g,z)=(1,1)}^{(G,Z)}$ converge.
	\end{enumerate}
\end{algorithm}

The endogenous grid algorithm for computing the consumption function is summarized by Algorithm~\ref{alg:egm}. In our quantitative experiments, we use an exponential savings grid with minimum, median, and maximum values of $0$, $50$, and $10^5$, respectively, and 1000 grid points. The method for constructing the exponential grid follows Appendix~B of \cite{ma2022asymptotic}. Iterations are terminated at precision $\varpi=10^{-4}$.

\section{The Generalized Impulse Response Functions}\label{s:gen_IRFs}

Suppose a shock to the exogenous state process $\{Z_t\}$ occurs at the beginning of period $t$. As noted earlier, we follow \cite{koop1996impulse} and compute impulse response functions (IRFs) as state-dependent random variables, conditional on different initial states $(w_{t-1},Z_{t-1})$. To proceed, we define
\begin{equation*}
	F(w,z,\hat Z, \hat \epsilon) := R(z,\hat Z, \hat \epsilon)[w - c^*(w,z)] + Y(z,\hat Z, \hat \epsilon).
\end{equation*}
Algorithm~\ref{alg:girf} summarizes the computation of generalized IRFs, where we use the absolute consumption response for demonstration. The method can be easily adapted to compute percentage changes, as well as impulse responses for saving rates and consumption volatility.

\begin{algorithm}%[H]
	\caption{\,The generalized impulse response function}\label{alg:girf}
	\small 
	\begin{enumerate}[label=\textbf{Step \arabic*.},  leftmargin=*, ref=\arabic*, itemsep=1mm]
		\item (Initialization). Choose initial values $(w_{t-1},Z_{t-1})$, the horizon $H$, and the number of Monte Carlo samples $K$. Set
		\begin{equation*}
			(\check w^k_{t-1}, \check Z^k_{t-1})
			= (w_{t-1}^k, Z^k_{t-1})
			\equiv (w_{t-1}, Z_{t-1})
			\quad \text{for $k=1,\dots, K$.}
		\end{equation*}
		\item Randomly sample $(H+1) \times K$ values of income shocks $\{\epsilon^k_{t+h}\}_{(h,k)=(0,1)}^{(H,K)}$.
		\item (Baseline Economy). Sample $(H+1)\times K$ values of the exogenous state 
		\begin{equation*}
			\left\{Z^k_{t+h}\right\}_{(h,k)=(0,1)}^{(H,K)}
			\quad \text{where } \; Z^k_{t+h} \sim P(Z^k_{t+h-1},\cdot\,).
		\end{equation*}
		
		\item\label{item:girf_next_state} (Impulse Shock Economy). Compute the period-$t$ exogenous state $\{\check Z^k_t\}_{k=1}^{K}$ after the shock. Sample $H \times K$ values of risk aversion
		\begin{equation*}
			\left\{\check Z^k_{t+h}\right\}_{(h,k)=(1,1)}^{(H,K)}
			\quad \text{where } \; \check Z^{k}_{t+h} \sim P(\check Z^k_{t+h-1},\cdot\,).
		\end{equation*}
		\item For $h=0,...,H$ and $k=1,...,K$, compute the sequence of wealth
		\begin{align*}
			&w_{t+h}^k = F(w_{t+h-1}^k, Z_{t+h-1}^k, Z_{t+h}^k, \epsilon_{t+h}^k) \\
			&\check w_{t+h}^k = F(\check w_{t+h-1}^k, \check Z_{t+h-1}^k, \check Z_{t+h}^k, \epsilon_{t+h}^k).
		\end{align*}
		\item For $h=0,...,H$, compute the period-$(t+h)$ impulse response
		\begin{equation*}
			\text{IRF}(t+h)
			= \frac{1}{K}\sum_{k=1}^{K}
			c^*(\check w_t^k, \check Z_t^k)
			- \frac{1}{K}\sum_{k=1}^{K}
			c^*(w_t^k, Z_t^k).
		\end{equation*}
	\end{enumerate}
\end{algorithm}

Recall the transition matrices $\bar P$ for $\{Z_t\}$ and $\tilde P$ for $\{\tilde Z_t\}$. In the quantitative experiments of Section~\ref{ss:quant_irf}, we have
\begin{equation*}
	Z_t = (\bar Z_t, \tilde Z_t) = (\gamma_t, \eta_t), \quad 
	\epsilon_t = (\nu_t, \epsilon_t^\eta)
	\quad \text{and} \quad  
	P(Z_{t-1}, Z_t) = \bar P(\gamma_{t-1}, \gamma_t) \tilde P (\eta_{t-1}, \eta_t).
\end{equation*}
A one-unit shock is applied to $\gamma_t$ at the beginning of period $t$. We discretize $\{\gamma_t\}$ into a 121-state Markov chain and $\{\eta_t\}$ into a 5-state Markov chain using the method of \cite{tauchen1986finite}. To compute the period-$t$ exogenous state in the impulse shock economy (Step~\ref{item:girf_next_state} of Algorithm~\ref{alg:girf}), we search the $\gamma_t$ states and identify the smallest $\gamma_t$ that is greater than or equal to $\mu_\gamma(1-\rho_\gamma) + \rho_\gamma \gamma_{t-1} + 1$. Given the fine discretization of $\{\gamma_t\}$, the resulting error is negligible.

\end{document}